\title{Spectral Certificates and Sum-of-Squares Lower Bounds for Semirandom Hamiltonians}
\author{
\begin{tabular}[h!]{cc}
   \FormatAuthor{Nicholas Kocurek}{nichok6@cs.washington.edu}{University of Washington}
\end{tabular}
} %
\date{\small\today}
\begin{document}
\maketitle
\allowdisplaybreaks
\begin{abstract}
    The $k$-$\mathsf{XOR}$ problem is one of the most well-studied problems in classical complexity. We study a natural quantum analogue of $k$-$\mathsf{XOR}$, the problem of computing the ground energy of a certain subclass of structured local Hamiltonians, signed sums of $k$-local Pauli operators, which we refer to as $k$-$\mathsf{XOR}$ Hamiltonians. As an exhibition of the connection between this model and classical $k$-$\mathsf{XOR}$, we extend results on refuting $k$-$\mathsf{XOR}$ instances to the Hamiltonian setting by crafting a quantum variant of the Kikuchi matrix for CSP refutation, instead capturing ground energy optimization. As our main result, we show an $n^{O(\ell)}$-time \textit{classical} spectral algorithm certifying ground energy at most $\frac{1}{2} + \varepsilon$ in (1) semirandom Hamiltonian $k$-$\mathsf{XOR}$ instances or (2) sums of Gaussian-signed $k$-local Paulis both with $O(n) \cdot \left(\frac{n}{\ell}\right)^{k/2-1} \log n /\varepsilon^4$ local terms, a tradeoff known as the refutation threshold. Additionally, we give evidence this tradeoff is tight in the semirandom regime via non-commutative Sum-of-Squares lower bounds embedding classical $k$-$\mathsf{XOR}$ instances as entirely classical Hamiltonians.
\end{abstract}
\thispagestyle{empty}

\newpage

{
\hypersetup{hidelinks}
\tableofcontents
}
\thispagestyle{empty}
\newpage

\setcounter{page}{1}

\section{Introduction}
\label{sec:introduction}

The $k$-$\XOR$ problem, calculating the satisfiability of a set of $k$-sparse equations over $\F_2$, is integral to the study of Boolean constraint satisfaction problems, largely due to its relation to the parity functions, the basis for the canonical Fourier decomposition \cite{ODonnell14}. Perhaps unfortunately, worst-case complexity results tell a pessimistic story for the solvability of $k$-$\XOR$. Håstad's celebrated result \cite{Hastad01} shows the problem of distinguishing a $3$-$\XOR$ instance with value $1-\varepsilon$ from one with value $\frac{1}{2} + \varepsilon$ is $\NP$-hard when the instance has $O(n)$ equations. Worse yet, assuming the Exponential Time Hypothesis \cite{ImpagliazzoP01} stronger PCP constructions \cite{MoshkovitzR08, BafnaM0Y25} rule out even subexponential-time approximation algorithms for sparse instances. Conversely, if we allow instances to be dense, i.e. $O(n^k)$ equations, there is a PTAS for this problem \cite{AroraKK95}. This result can be further extended to get a smooth time vs. number of equations tradeoff that yields subexponential approximation schemes for instances with $\omega(n^{k-1})$ constraints \cite{FotakisLP16}.

In light of the grim outlook of worst-case complexity, the study of random CSPs was initiated to potentially subvert these hardness results. As random $k$-$\XOR$ instances with size $\omega(n)$ have value $\frac{1}{2} + \varepsilon$ with high probability, the problem of approximation turns to refutation: providing an algorithmic certificate tightly upper bounding the value of the instance. Analogous to \cite{AroraKK95}, \cite{AllenOW15} shows that \textit{random} $k$-$\XOR$ instances admit a PTAS when the number of equations exceeds $O(n^{k/2})$. This result was extended to a smooth tradeoff algorithm taking $n^{O(\ell)}$-time for an instance with $O(n) \cdot \left(\frac{n}{\ell}\right)^{k/2-1} \cdot \textsf{polylog}(n, 1/\varepsilon)$ equations \cite{RaghavendraRS17}, a tradeoff that is conjectured to be optimal via a line of work establishing near-matching Sum-of-Squares lower bounds \cite{Grigoriev01, Schoenebeck08, BenabbasGMT12, BarakCK15, MoriW16, KothariMOW17}. A newer line of work \cite{AbascalGK21, GuruswamiKM22, HsiehKM23} establishes algorithms exhibiting the same tradeoff in the stricter semirandom model of $k$-$\XOR$, where only the right-hand sides of each equation are random, that is, they show that the semirandom model is no harder than the random one. The state-of-the-art algorithms provide spectral refutations, certificates in the form of a spectral norm of a matrix, particularly through the novel Kikuchi matrix method which has seen a flurry of applications arise recently \cite{WeinAM19, GuruswamiKM22, HsiehKM23, AlrabiahGKM23, KothariM23}.

\parhead{Local Hamiltonians.} In the setting of quantum complexity theory, the $k$-local Hamiltonian problem \cite{KitaevSV02, KempeKR04} was established as one natural extension of $k$-CSPs (the other being non-commutative CSPs, we refer the reader to \cite{HastingsO22, CulfMS24, MousaviS25} for recent refutation/approximation results in this direction) and as a canonical $\QMA$-complete problem, establishing a ``quantum Cook-Levin'' theorem. Despite this, quantum complexity remains a ways behind worst-case classical complexity theory, perhaps most prominently in the uncertainty of the Quantum PCP Conjecture and quantum hardness-of-approximation \cite{AharonovAV13}. With the prevalence and versatility of $k$-$\XOR$ and Fourier analytic methods in classical complexity and the existence of a CSP generalization in Hamiltonians, it is natural to study the quantum analogue of $k$-$\XOR$, which, for the purposes of this work, we define to be the following $k$-local Hamiltonian.

\begin{definition}[Hamiltonian $k$-$\XOR$]
    \label{def:hamiltonianxor}
    An instance of Hamiltonian $k$-$\XOR$ $\calI = (\calH, \{(P_C, b_C)\}_{C \in \calH})$ where $\calH$ is a $k$-uniform hypergraph on $[n]$, each $P_C$ is a succinctly described $n$-qubit Pauli operator acting non-trivially only on $C$, and each $b_C$ is a $\pm1$-interaction. The instance $\calI$ defines a Hamiltonian on $n$ qubits given by
    \begin{equation*}
        \bH_\calI := \frac{1}{\abs{\calH}} \sum_{C \in \calH} \frac{\Id + b_CP_C}{2}\mper
    \end{equation*}
\end{definition}

Each ``constraint'' $C \in \calH$ in a Hamiltonian $k$-$\XOR$ instance $(\calH, \{(P_C, b_C)\}_{C \in \calH})$ is identified by a $b_C$-signed $k$-local Pauli operator $P_C$. $k$-local Paulis form a basis for $k$-local Hamiltonians akin to how degree-$k$ parity functions form a basis for Boolean $k$-CSPs, which is the motivating factor for identifying each constraint with a single signed operator $P_C$.

Our main focus will be on studying average-case instances of this Hamiltonian family. When an instance $\calI$ is drawn from a distribution where each $b_C$ is drawn independently and uniformly from $\{\pm 1\}$ we say $\calI$ is \textit{semirandom}. If additionally $\calH$ is a uniformly random hypergraph of fixed size, we say that $\calI$ is \textit{random}. We also consider the related model where each $b_C$ is drawn as an independent standard Gaussian.

\subsection{Our results}

In this paper, we investigate the \textit{random} and \textit{semirandom} complexity of Hamiltonian $k$-$\XOR$. It is folklore that random systems are not highly entangled as entanglement is a property requiring structure, an intuition that is captured in quantum error correction and the construction of NLTS Hamiltonians. Despite this, the ground spaces of these Hamiltonians should be robust to small additive noise, which can be captured succinctly by simple eigenvalue/eigenvector perturbation theory. Rather than external noise, we instead look at noise internal to the structure, namely, in the semirandom model through isolated randomness in the sign of each term.

Our main result is the following spectral algorithm for bounding the ground state energy of such a semirandom $k$-$\XOR$ Hamiltonian. Our algorithm is derived from a new quantum variant of the Kikuchi hierarchy, built as the signed adjacency matrix of a graph on (degree-$\ell$) Pauli operators.

\begin{theorem}[Spectral refutation of semirandom $k$-$\XOR$ Hamiltonians]
    \label{thm:refutation}
    Fix $n/2 \geq \ell \geq k/2$. There is a classical algorithm taking as input a Hamiltonian $k$-$\XOR$ instance $\calI = (\calH, \{(P_C, b_C)\}_{C \in \calH})$ describing an $n$-qubit Hamiltonian $\bH_{\calI}$ and outputs a number $\mathrm{algval}(\bH_{\calI}) \in [0,1]$ in time $n^{O(\ell)}$ with the following two guarantees:
    \begin{enumerate}
        \item \label{item:refutation1} $\mathrm{algval}(\bH_{\calI}) \geq \lambda_{\mathrm{max}}(\bH_{\calI})$ for every instance;
        \item \label{item:refutation2} If $\abs{\calH} \geq O(n) \cdot \left(\frac{n}{\ell}\right)^{k/2-1} \cdot \log(n) \cdot \varepsilon^{-4}$ and $\calI$ is drawn from the semirandom Rademacher and Gaussian distributions described in \Cref{def:quantumxor}, then with high probability over the draw of the semirandom instance, i.e., the randomness of the interactions $\{b_C\}_{C \in \calH}$, it holds that $\mathrm{algval}(\bH_{\calI}) \leq \frac{1}{2} + \varepsilon$.
    \end{enumerate}
\end{theorem}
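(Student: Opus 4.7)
The plan is to build a quantum analogue of the Kikuchi matrix, $\mathcal{K}$, whose spectral norm deterministically upper-bounds $\lambda_{\max}(\bH_\calI)$, and then bound $\|\mathcal{K}\|$ in the semirandom and Gaussian models via matrix concentration. Let $\mathcal{P}_\ell$ denote the set of non-identity Hermitian Paulis on $n$ qubits supported on exactly $\ell$ qubits. I index $\mathcal{K}$ by $\mathcal{P}_\ell \times \mathcal{P}_\ell$, and, for each ordered pair $(Q, Q')$ such that $P_Q P_{Q'} = \alpha \cdot P_C$ for some $C \in \calH$ and some phase $\alpha$, set $\mathcal{K}_{Q, Q'}$ to aggregate $\alpha^{-1} b_C$ over all such constraints. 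By the group-mod-phase structure of the Paulis, the pairs contributing to a fixed $C$ are exactly the degree-$\ell$ $(Q, Q')$ with $P_Q P_{Q'} \propto P_C$, which forces $|\mathrm{supp}(Q) \cap C| = k/2$ and yields a uniform multiplicity $N_\ell$ per constraint (after a bucketing step to equalize exactly, as in the classical Kikuchi pipeline).

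For Item \ref{item:refutation1}, the key lift is Gram-style. For any candidate state $|\psi\rangle$, assemble $U_\psi \in \mathbb{C}^{2^n \times |\mathcal{P}_\ell|}$ with columns $P_Q |\psi\rangle$, so $\mathrm{Tr}(U_\psi^\dagger U_\psi) = |\mathcal{P}_\ell|$. By design, the phases in $\mathcal{K}$ cancel the phases in the Gram entries $\langle\psi| P_Q P_{Q'} |\psi\rangle$ to give
\[
\mathrm{Tr}(\mathcal{K}\, U_\psi^\dagger U_\psi) \;=\; N_\ell \sum_{C \in \calH} b_C \langle\psi|P_C|\psi\rangle,
\]
and $|\mathrm{Tr}(AB)| \leq \|A\|\,\mathrm{Tr}(B)$ (applicable since $U_\psi^\dagger U_\psi \succeq 0$) then produces the certificate
\[
\lambda_{\max}(\bH_\calI) \;\leq\; \tfrac{1}{2} + \frac{|\mathcal{P}_\ell|}{2\,|\calH|\, N_\ell}\, \|\mathcal{K}\|.
\]
I define $\mathrm{algval}(\bH_\calI)$ as this right-hand side; it is computable in time $n^{O(\ell)}$ since $|\mathcal{P}_\ell| \leq \binom{n}{\ell}\, 3^\ell = n^{O(\ell)}$.

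For Item \ref{item:refutation2}, condition on $\calH$ and the Paulis $\{P_C\}_{C \in \calH}$ and view $\mathcal{K} = \sum_C b_C \mathcal{K}_C$ as an independent-sign (Rademacher or Gaussian) sum of fixed partial signed permutation matrices. Matrix Khintchine (and its Gaussian analogue) gives $\mathbb{E}\|\mathcal{K}\| \lesssim \sqrt{\log|\mathcal{P}_\ell|}\,\|\sum_C \mathcal{K}_C \mathcal{K}_C^\dagger\|^{1/2}$. Each $\mathcal{K}_C \mathcal{K}_C^\dagger$ is diagonal with exactly $N_\ell$ unit entries, so $\|\sum_C \mathcal{K}_C \mathcal{K}_C^\dagger\|$ equals, uniformly over $Q \in \mathcal{P}_\ell$ after regularization, the number of constraints $C$ with $P_Q P_C \in \mathcal{P}_\ell$ up to phase, namely $|\calH|\, N_\ell / |\mathcal{P}_\ell|$ by the intersection calculation. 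Substituting $|\mathcal{P}_\ell|/N_\ell \asymp (n/\ell)^{k/2}$ and $\log|\mathcal{P}_\ell| = O(\ell \log n)$, and promoting the expectation bound to high probability, the certificate drops below $\tfrac{1}{2} + \varepsilon$ exactly when $|\calH| = \Omega(n\, (n/\ell)^{k/2-1} \log n\, /\varepsilon^4)$, matching the target.

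The main obstacles are the noncommutative bookkeeping and the odd-$k$ case. Unlike the classical symmetric-difference Kikuchi, Paulis carry $\{X,Y,Z\}$ labels and anticommutation phases $\pm i$, so the orbit representatives for each constraint and the conventions for the phases $\alpha$ must be chosen so that (a) the phases cancel exactly in the Gram pairing and (b) the per-constraint multiplicity $N_{C,\ell} = N_\ell$ is uniform; doing this without inflating the Khintchine variance proxy by $2^{O(\ell)}$ factors is the delicate step. For odd $k$, where $|Q \cap C| = k/2$ is infeasible, one must split each constraint using a Cauchy--Schwarz / half-edge lift as in the classical Kikuchi literature. Finally, the semirandom (as opposed to fully random) setting demands a concentration bound stable under adversarial hypergraphs, which may require a row-pruning or regular-subinstance extraction in preprocessing, following the blueprint of \cite{GuruswamiKM22, HsiehKM23}.
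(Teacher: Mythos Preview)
Your proposal follows the same architecture as the paper: a Kikuchi matrix indexed by $\mathcal{P}_\ell(n)$, the Gram lift with columns $Q\ket{\psi}$ for Item~\ref{item:refutation1}, matrix concentration for Item~\ref{item:refutation2}, and a Cauchy--Schwarz reduction for odd $k$. Two of the obstacles you flag are in fact non-issues. The phase bookkeeping evaporates: the support constraint $|\supp(Q)\cap\supp(R)|=\ell-k/2$ forces $Q$ and $R$ to coincide outside $\supp(P_C)$, so they commute and $Q^\dagger R=P_C$ with no phase (the paper's \Cref{obs:symmetric}); the Kikuchi matrix is real symmetric. Likewise the per-constraint multiplicity $\Delta=\binom{k}{k/2}\binom{n-k}{\ell-k/2}3^{\ell-k/2}$ is already uniform in $C$, so no bucketing is needed.

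The one genuine divergence is your handling of the adversarial degree profile in the semirandom case. You suggest row-pruning or regular-subinstance extraction, but deleting rows of $\mathcal{K}$ destroys the Gram identity and hence Item~\ref{item:refutation1}. The paper instead reweights, passing to $\tilde{A}=\Gamma^{-1/2}A^*\Gamma^{-1/2}$ with $\Gamma=D+d\,\Id$; the Gram step survives because $\|(\Gamma^{1/2}\otimes\Id)\psi^{\odot\ell}\|^2=\Tr(\Gamma)=2|\calH|\Delta$ exactly, yielding the clean certificate $\lambda_{\max}(\bH_\calI)\le\tfrac12+\|\tilde{A}\|$. With this reweighting the Khintchine variance proxy is diagonal with entries at most $\deg(Q)/\bigl(d(\deg(Q)+d)\bigr)\le 1/d$, so your Khintchine route would in fact go through; the paper instead bounds $\|\tilde{A}\|$ by the trace moment method, counting trivially closed walks and using the $\frac{1}{\deg}$-versus-$\frac{1}{d}$ dichotomy at new-versus-old steps, which is equivalent in strength here. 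For odd $k$ the paper's pipeline (regularity decomposition into $\calU$-bipartite subinstances, Cauchy--Schwarz, an odd-arity Kikuchi on pairs in $\calP(n)^{\otimes 2}$, and an edge-deletion step to bound local degrees) is considerably more elaborate than your sketch indicates, though you correctly identify Cauchy--Schwarz as the crux.
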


First and foremost, we highlight the algorithm of \Cref{thm:refutation} is entirely classical, so the result can be interpreted as saying the ground energy of \textit{semirandom}, \textit{dense} $k$-$\XOR$ Hamiltonians is \textit{classically} easy to approximate. Intuitively, this provides a strong formalism of the idea that the entanglement structure of many-body systems is not robust to internal noise. 

More explicitly, the existential version of \Cref{thm:refutation} (following from the Matrix Chernoff bound) immediately implies that the ground energy of semirandom $k$-$\XOR$ Hamiltonians concentrates around $\frac{1}{2} + \varepsilon$ and therefore can be approximated by a product state, since a random product state achieves value $\frac{1}{2}$. As a result, a simple product state ansatz serves to show the low energy space of these Hamiltonians has an $\NP$ witness with high probability. Nonetheless, since maximizing over product states is not believed to be efficient \cite{KallaugherP0WY25}, this fails to give an algorithm. Our ansatz is instead the spectral norm of a Kikuchi matrix that counts a class of Pauli operator walks, computable classically efficiently for dense instances, giving us an efficient certificate of low ground energy. Another interpretation of \Cref{thm:refutation} is then as a testable verifier that such a random product state is in fact a permissible approximation for the ground state. We can compare this to works like \cite{BrandaoH13a, AlevJT19}, which show Hamiltonians whose interaction hypergraph is a sufficient spectral expander can have their ground state approximated by a product state. Our result has no such assumption on the underlying graph structure, but instead has the assumption swapped for randomness in the signs, allowing it to succeed for certain families of semirandom Hamiltonians on potentially high-threshold rank graphs that otherwise seem to trick Sum-of-Squares.

Since the model here is average-case, this has no direct implications for worst-case approximation, but draws a tight comparison between the Hamiltonian $k$-$\XOR$ and classical $k$-$\XOR$ refutation models, for which the analogous result of \Cref{thm:refutation} holds. It also provides an interesting classical certificate or ansatz for Hamiltonians generally.

As a special case of our analysis, we also obtain the following refutation algorithm for Gaussian signed Boolean polynomials, akin to refutation for semirandom classical $k$-$\XOR$.

\begin{corollary}[Semirandom refutation of Gaussian polynomials]
    Fix $n/2 \geq \ell \geq k/2$. There is a classical algorithm taking as input a degree-$k$ Boolean polynomial $f = \sum_{C \in \calH} b_C \chi_C$ described by $\calH \subseteq {[n] \choose k}$ and $\{b_C\}_{C \in \calH}$ for $b_C \in \R$ and outputs a number $\algval(f) \in [0,1]$ in time $n^{O(\ell)}$ with the following two guarantees:
    \begin{enumerate}
        \item $\algval(f) \geq \max_{x \in \{\pm1\}^n} f(x)$ for every instance;
        \item If $\abs{\calH} \geq O(n) \cdot \left(\frac{n}{\ell}\right)^{k/2-1} \cdot \log(n) \cdot \varepsilon^{-4}$ and for each $C \in \calH$, $b_C$ is drawn i.i.d. standard Gaussian, then with high probability it holds that $\mathrm{algval}(f) \leq \frac{1}{2} + \varepsilon$.
    \end{enumerate}
\end{corollary}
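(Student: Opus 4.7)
The plan is to reduce the corollary directly to \Cref{thm:refutation} by embedding the Boolean polynomial as an entirely \emph{classical} (i.e.\ diagonal in the computational basis) $k$-$\XOR$ Hamiltonian. Given $f = \sum_{C \in \calH} b_C \chi_C$ with $b_C$ i.i.d.\ standard Gaussian, I would construct the instance $\calI = (\calH, \{(P_C, b_C)\}_{C \in \calH})$ by choosing each $P_C := \prod_{i \in C} Z_i$. This is a legitimate $k$-local Pauli supported exactly on $C$, and the signs $b_C$ are drawn from one of the two semirandom distributions covered by the theorem, so $\calI$ is a valid input to the algorithm guaranteed by \Cref{thm:refutation}.

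The key observation is that all the $P_C$'s are simultaneously diagonalized by the computational basis $\{\ket{x}\}_{x \in \{\pm 1\}^n}$ with eigenvalues $\chi_C(x)$. Hence $\bH_\calI$ itself is diagonal in this basis with
\[
    \bra{x}\bH_\calI\ket{x} \;=\; \frac{1}{2} + \frac{1}{2\abs{\calH}} \sum_{C \in \calH} b_C \chi_C(x) \;=\; \frac{1}{2} + \frac{f(x)}{2\abs{\calH}},
\]
so $\lambda_{\max}(\bH_\calI) = \frac{1}{2} + \frac{1}{2\abs{\calH}}\max_{x \in \{\pm 1\}^n} f(x)$. In other words, up to the standard affine normalization relating a $k$-$\XOR$ Hamiltonian's ground energy to the polynomial's maximum value on $\pm 1$ inputs, $\lambda_{\max}(\bH_\calI)$ and $\max_x f(x)$ are interchangeable.

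I then simply invoke \Cref{thm:refutation} on $\bH_\calI$ and define $\algval(f)$ to be the algorithm's output (interpreted through the above normalization). The deterministic guarantee \ref{item:refutation1} yields $\algval(f) \geq \max_{x} f(x)$, and the high-probability guarantee \ref{item:refutation2}, combined with our hypothesis $\abs{\calH} \geq O(n)\cdot(n/\ell)^{k/2-1}\cdot\log n \cdot \varepsilon^{-4}$, yields $\algval(f) \leq \frac{1}{2} + \varepsilon$. The running time $n^{O(\ell)}$ is inherited directly from \Cref{thm:refutation}.

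There is essentially no real obstacle to overcome: the only things to verify are that (a) the constructed $P_C$'s fit the Hamiltonian $k$-$\XOR$ definition, which is immediate; (b) the Gaussian distribution on the $b_C$'s is exactly the second distribution handled by \Cref{thm:refutation}; and (c) because the embedded Hamiltonian is classical, its ground state is achieved by a computational basis vector, so the Hamiltonian-level guarantee transfers without loss to the Boolean-polynomial optimization problem.
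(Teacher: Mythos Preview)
Your proposal is correct and is exactly the intended derivation: the paper presents this corollary as ``a special case of our analysis,'' and the special case is precisely the $Z$-basis embedding you describe, which is the same map used in \Cref{thm:lifting} to identify classical $k$-$\XOR$ with a one-basis Hamiltonian. There is nothing more to do beyond invoking \Cref{thm:refutation} on the diagonal Hamiltonian and reading off the Boolean maximum via the affine relation $\lambda_{\max}(\bH_\calI) = \tfrac{1}{2} + \tfrac{1}{2|\calH|}\max_x f(x)$, as you note.
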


Our second main result gives a near-matching \textit{non-commutative} Sum-of-Squares lower bound for bounding random \textit{one-basis} $k$-$\XOR$ Hamiltonians at the same time vs. number of terms tradeoff, showing that \Cref{thm:refutation} is tight as far as non-commutative Sum-of-Squares is concerned. A $k$-$\XOR$ Hamiltonian is one-basis if all terms are diagonal in a single basis.

\begin{theorem}[Non-commutative Sum-of-Squares lower bounds for certifying random one-basis $k$-$\XOR$ Hamiltonians]
    \label{thm:soslowerbound}
    Fix $k \geq 3$ and $n \geq \ell \geq k$. Let $\bH_{\calI}$ be a random one-basis $n$-qubit $k$-$\XOR$ Hamiltonian described by $\calI = (\calH, \{(P_C, b_C)\}_{C \in \calH})$ with $\abs{\calH} = \Theta(n) \cdot \left(\frac{n}{\ell}\right)^{k/2-1} \cdot \varepsilon^{-2}$. Then with large probability over the draw of $\calI$ it holds that:
    \begin{enumerate}
        \item $\mathrm{val}(\bH_{\calI}) \leq \frac{1}{2} + \varepsilon$;
        \item The degree-$\tilde{\Omega}(\ell)$ non-commutative Sum-of-Squares relaxation for Hamiltonians (see \Cref{sec:sos}) fails to certify $\mathrm{val}(\bH_\calI) < 1$.
        \end{enumerate}
\end{theorem}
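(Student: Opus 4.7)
The plan is to reduce both conclusions to classical results on random $k$-$\XOR$ by exploiting the fact that a one-basis Hamiltonian is, after a global unitary, an algebraic embedding of a classical $k$-$\XOR$ instance: rotating to the common eigenbasis (WLOG the computational $Z$-basis), every $P_C$ becomes $Z^{s_C}$ for some $s_C \in \{0,1\}^n$ and
\[
\bH_\calI = \frac{1}{\abs{\calH}} \sum_{C \in \calH} \frac{\Id + b_C Z^{s_C}}{2}
\]
is simultaneously diagonalized by the computational basis. Consequently $\lambda_{\max}(\bH_\calI) = \max_{x \in \{\pm 1\}^n} f(x)$ for the embedded Boolean objective $f(x) := \frac{1}{\abs{\calH}}\sum_C \tfrac{1 + b_C \chi_{s_C}(x)}{2}$, and part~(1) follows from a Hoeffding bound on $\sum_C b_C \chi_{s_C}(x)$ at each fixed $x$ union-bounded over $\{\pm 1\}^n$: since $\abs{\calH} \geq \Omega(n/\varepsilon^2)$ at the stated density we obtain $\mathrm{val}(\bH_\calI) \leq \tfrac12 + O(\sqrt{n/\abs{\calH}}) \leq \tfrac12 + \varepsilon$.

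For part~(2), the idea is to construct a non-commutative pseudo-expectation achieving relaxation value $1$ by lifting a classical SoS pseudo-expectation through a \emph{diagonal-restriction map}. Classical degree-$\tilde{\Omega}(\ell)$ SoS lower bounds for random $k$-$\XOR$ at the stated density (e.g.\ \cite{KothariMOW17}) supply, with high probability over $\calI$, a classical pseudo-expectation $\tilde{\mu}$ on $\{\pm 1\}^n$ satisfying $\tilde{\mu}[f] = 1$. Define the lift on the $n$-qubit operator algebra by
\[
\tilde{\mathbb{E}}[O] \;:=\; \tilde{\mu}\!\left[\langle x \lvert O \rvert x \rangle\right].
\]
Because $\langle x \lvert i^c X^a Z^b \rvert x \rangle = [a = 0] \cdot x^b$, this map sends any operator of Pauli weight at most $d$ to a classical polynomial in $x$ of degree at most $d$; normalization $\tilde{\mathbb{E}}[\Id] = 1$ and the constraint $\tilde{\mathbb{E}}[\bH_\calI] = \tilde{\mu}[f] = 1$ are immediate from the one-basis structure.

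The only nontrivial axiom is positivity, $\tilde{\mathbb{E}}[A^\dagger A] \geq 0$ for every $A$ of Pauli weight at most $d = \tilde{\Omega}(\ell)$. Group the Pauli expansion of $A$ by $X$-content, writing $A = \sum_a X^a D_a$ where each $D_a$ is a polynomial in $\{Z_1, \ldots, Z_n\}$ of degree at most $d$. Expanding $A^\dagger A = \sum_{a, a'} D_a^\dagger X^{a+a'} D_{a'}$, the cross terms with $a \ne a'$ have nonzero $X$-content and vanish on the diagonal, while on-diagonal terms use $X^0 = \Id$, so
\[
\langle x \lvert A^\dagger A \rvert x \rangle \;=\; \sum_a \langle x \lvert D_a^\dagger D_a \rvert x \rangle \;=\; \sum_a \abs{p_a(x)}^2,
\]
where $p_a(x)$ is the classical polynomial obtained from $D_a$ by substituting $Z_i \mapsto x_i$. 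This is a bona fide classical sum of squares of degree-$d$ polynomials, so $\tilde{\mu}$ (at degree at least $2d$) evaluates it to a nonnegative number, completing the positivity check.

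The main obstacle is cleanly importing the classical SoS lower bound at the exact parameters of the theorem: one must verify that the classical degree-$\tilde{\Omega}(\ell)$ SoS lower bound for random $k$-$\XOR$ holds at the prescribed hypergraph density (matching the regime covered by \cite{KothariMOW17} up to polylogarithms), and align the non-commutative SoS degree convention of \Cref{sec:sos} with the Pauli-weight bound used in the positivity argument. Both alignments should cost only polylogarithmic factors, which are absorbed by the $\tilde{\Omega}$ notation in the statement.
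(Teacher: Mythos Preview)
Your proposal is correct and coincides with the paper's lifting approach: the paper's \Cref{thm:lifting} constructs exactly your diagonal-restriction pseudo-expectation (phrased there via $Z$-basis equivalence classes on $\calP_{\leq d/2}(n)$ rather than grouping by $X$-content, but the two partitions are the same), and the paper explicitly notes that combining this lifting with the classical lower bound of \cite{KothariMOW17} yields \Cref{thm:soslowerbound}. Your positivity argument via $\langle x | A^\dagger A | x \rangle = \sum_a |p_a(x)|^2$ is cleaner than the paper's equivalence-class bookkeeping but proves the identical fact.

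The one difference worth noting is that the paper's \emph{primary} proof of \Cref{thm:soslowerbound} (in Section~5.1) does not import \cite{KothariMOW17} as a black box; it instead builds the pseudo-expectation directly via a max-entropy propagation rule and shows well-definedness using small-set boundary expansion of the random hypergraph (\Cref{fact:boundaryexpansion}). This buys a self-contained argument at the non-commutative level, whereas your route (and the paper's alternative route via \Cref{thm:lifting}) trades that for modularity: any classical $k$-$\XOR$ SoS lower bound, not just the random one, lifts immediately.
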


We end up actually proving something stronger and more general, a way to lift Sum-of-Squares lower bounds for classical $k$-$\XOR$ instances to non-commutative Sum-of-Squares lower bounds for the associated one-basis Hamiltonian that simulates it.

\begin{theorem}[SoS-hardness of $k$-$\XOR$ $\implies$ ncSoS-hardness of Hamiltonian $k$-$\XOR$]
    \label{thm:lifting}
    Fix $k \geq 2$. Given a classical $k$-$\XOR$ instance $\calI = (\calH, \{b_C\}_{C \in \calH})$, we can compute in polynomial-time the description of the Hamiltonian $k$-$\XOR$ instance $\calJ = (\calH, \{(Z_C, b_C)\}_{C \in \calH})$ satisfying:
    \begin{enumerate}
        \item $\mathrm{val}(\calI) = \lambda_{\mathrm{max}}(\bH_\calJ)$;
        \item The degree-$d$ non-commutative Sum-of-Squares value of $\bH_\calJ$ is the degree-$d$ Sum-of-Squares value of $\calI$.
    \end{enumerate}
\end{theorem}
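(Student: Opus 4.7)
For the first claim, the operators $Z_C = \prod_{i \in C} Z_i$ are simultaneously diagonal in the computational basis, so $\bH_\calJ$ has the standard basis states $|x\rangle$ (with $x \in \{\pm 1\}^n$) as eigenvectors with eigenvalue $\frac{1}{|\calH|}\sum_{C}\frac{1 + b_C\chi_C(x)}{2} = \mathrm{val}_\calI(x)$, yielding $\lambda_{\max}(\bH_\calJ) = \mathrm{val}(\calI)$.

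For the second claim, I would exploit the canonical $*$-algebra isomorphism between the quotient ring $\R[x_1,\ldots,x_n]/(x_i^2 - 1)$ and the commutative $Z$-subalgebra of the $n$-qubit Pauli algebra under $\chi_S \leftrightarrow Z_S$. The easy direction is restriction: given an ncSoS pseudo-expectation $\widetilde{\mathbb{E}}_{\mathrm{nc}}$ at degree $d$, define $\widetilde{\mathbb{E}}_{\mathrm{cl}}[\chi_S] := \widetilde{\mathbb{E}}_{\mathrm{nc}}[Z_S]$ for $|S| \leq d$. The Boolean axiom $x_i^2 = 1$ is inherited from $Z_i^2 = I$; the Hamiltonian value is preserved since $\bH_\calJ$ lies in the $Z$-subalgebra; and for any real polynomial $q = \sum_S q_S \chi_S$ of degree $\leq d/2$ the Hermitian lift $\widehat q := \sum_S q_S Z_S$ has the same word-length, so $\widetilde{\mathbb{E}}_{\mathrm{cl}}[q^2] = \widetilde{\mathbb{E}}_{\mathrm{nc}}[\widehat q^{\,2}] \geq 0$.

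The more delicate direction is the lift. Given a degree-$d$ classical $\widetilde{\mathbb{E}}_{\mathrm{cl}}$, I would define $\widetilde{\mathbb{E}}_{\mathrm{nc}}[Z_S] := \widetilde{\mathbb{E}}_{\mathrm{cl}}[\chi_S]$ and $\widetilde{\mathbb{E}}_{\mathrm{nc}}[P] := 0$ on every Pauli $P$ containing an $X_i$ or $Y_i$ factor, extended linearly. Normalization, Hermiticity, and the identity $\widetilde{\mathbb{E}}_{\mathrm{nc}}[\bH_\calJ] = \widetilde{\mathbb{E}}_{\mathrm{cl}}[\mathrm{val}_\calI]$ are immediate; the crux is positivity of this zero extension. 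Expanding a degree-$\leq d/2$ test operator in the Pauli basis grouped by $X$-support, $A = \sum_{a,b\in\{0,1\}^n} c_{a,b}\, i^{\epsilon_{a,b}} X^a Z^b$, the identity $(X^a Z^b)^\dagger (X^{a'} Z^{b'}) = \pm\, X^{a\oplus a'} Z^{b\oplus b'}$ kills all $a \neq a'$ cross terms under $\widetilde{\mathbb{E}}_{\mathrm{nc}}$, and after absorbing the phases into $d_{a,b} := c_{a,b}\, i^{\epsilon_{a,b}}$ the survivors collapse to
\[
\widetilde{\mathbb{E}}_{\mathrm{nc}}\bigl[A^\dagger A\bigr] \;=\; \sum_{a \in \{0,1\}^n} \widetilde{\mathbb{E}}_{\mathrm{cl}}\bigl[\,|q_a(x)|^2\,\bigr], \qquad q_a(x) \,:=\, \sum_{b} d_{a,b}\, \chi_b(x),
\]
and each $|q_a|^2 = (\Re q_a)^2 + (\Im q_a)^2$ is a degree-$\leq d$ classical sum of squares, hence nonnegative under $\widetilde{\mathbb{E}}_{\mathrm{cl}}$.

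The main obstacle is precisely this PSD-ness check: the lifted pseudo-expectation vanishes outside a small commutative subalgebra, yet must be positive against the much larger non-commutative moment matrix. The $X$-support orthogonality together with careful bookkeeping of the Pauli phase factors $i^{\epsilon_{a,b}}$ is what makes the argument go through, and verifying that each $q_a$ genuinely has degree $\leq d/2$ is precisely why the degree of the SoS relaxation is preserved by the reduction.
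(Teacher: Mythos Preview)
Your proposal is correct and follows essentially the same approach as the paper. Both proofs define the lifted pseudo-expectation identically (classical values on $Z$-type Paulis, zero elsewhere) and establish positivity by grouping Paulis according to their ``$X$-part'': you do this via the $X^a Z^b$ parametrization and group by $a$, while the paper defines the equivalence relation $Q\sim R \Leftrightarrow Q^\dagger R$ is $Z$-type and identifies the classes with $\{\Id,X\}^{\otimes n}$; these are the same decomposition. In both cases the off-block terms vanish and each block reduces to a classical square $|q_a|^2$ (your notation) or $|\sum_{P\in\calF_i}\alpha_P\langle\bH,P\rangle x_{T_P}|^2$ (the paper's), whose nonnegativity is inherited from the classical pseudo-expectation. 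You are also explicit about the easy restriction direction (ncSoS $\leq$ SoS), which the paper leaves implicit.
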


\Cref{thm:lifting} can be applied to random $k$-$\XOR$ instances with the result of \cite{KothariMOW17} to yield \Cref{thm:soslowerbound}, but we can also apply it to any prior explicit $k$-$\XOR$ constructions to get explicit constructions of hard $k$-$\XOR$ Hamiltonians like the following.

\begin{corollary}[Theorem 1.1 \cite{HopkinsL22} + \Cref{thm:lifting}]
    There exists an infinite family of Hamiltonian $3$-$\XOR$ instances $\{\calI_n\}_{n \to \infty}$ and $\varepsilon > 0$ such that:
    \begin{enumerate}
        \item $\mathrm{val}(\bH_{\calI_n}) \leq 1 - \varepsilon$;
        \item The degree-$\Omega(n)$ non-commutative Sum-of-Squares relaxation for Hamiltonians fails to certify $\mathrm{val}(\bH_{\calI_n}) < 1$;
        \item The ground state (maximal eigenvector) of $\bH_{\calI_n}$ is a product state.
    \end{enumerate}
\end{corollary}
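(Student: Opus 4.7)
The plan is to combine \Cref{thm:lifting} with the explicit classical $3$-$\XOR$ Sum-of-Squares lower bounds of \cite{HopkinsL22} in the most direct possible way. First, I would invoke Theorem 1.1 of \cite{HopkinsL22} to obtain an infinite family of classical $3$-$\XOR$ instances $\{\calI_n\}_{n \to \infty}$, together with an absolute constant $\varepsilon > 0$, such that $\mathrm{val}(\calI_n) \leq 1 - \varepsilon$ and yet the degree-$\Omega(n)$ Sum-of-Squares relaxation cannot certify $\mathrm{val}(\calI_n) < 1$.

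Next, to each $\calI_n = (\calH_n, \{b_C\}_{C \in \calH_n})$ I would apply the polynomial-time lifting of \Cref{thm:lifting}, producing the one-basis Hamiltonian $3$-$\XOR$ instance $\calJ_n := (\calH_n, \{(Z_C, b_C)\}_{C \in \calH_n})$. Property 1 of the corollary is then immediate from the first conclusion of \Cref{thm:lifting}: $\lambda_{\max}(\bH_{\calJ_n}) = \mathrm{val}(\calI_n) \leq 1 - \varepsilon$. Property 2 follows from the second conclusion, which asserts that the degree-$d$ ncSoS value of $\bH_{\calJ_n}$ equals the degree-$d$ classical SoS value of $\calI_n$; instantiating at $d = \Omega(n)$ transfers the Hopkins--Lin lower bound verbatim.

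For property 3, I would observe that each $Z_C$ is a tensor product of Pauli $Z$ operators and identities, so $\bH_{\calJ_n}$ is diagonal in the computational basis. Its maximal eigenvector may therefore be chosen to be a standard basis vector $\lvert x^* \rangle$ for some $x^* \in \{0,1\}^n$, which is manifestly a product state.

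Because the corollary is a direct plug-in of two black-box results, there is no serious technical obstacle; the only point worth checking is that the composition loses nothing, and this is guaranteed by the \emph{exact} (rather than approximate) equalities in both conclusions of \Cref{thm:lifting}. The one mildly delicate aspect is confirming that the \cite{HopkinsL22} construction genuinely outputs $3$-$\XOR$ instances in the sense required by \Cref{thm:lifting}, so that the lifting applies as a black box at $k=3$; otherwise the argument is mechanical.
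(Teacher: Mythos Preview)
Your proposal is correct and matches the paper's approach exactly: the corollary is stated in the paper without a standalone proof, simply as the black-box composition of \cite{HopkinsL22} with \Cref{thm:lifting}, and your three-step derivation (value preservation, SoS-value preservation, diagonality in the computational basis) is precisely how the paper intends it to be read. The only minor remark is that the restated version of \Cref{thm:lifting} in \Cref{sec:ncsos} already includes the product-state conclusion as a third item, so you could cite that directly rather than re-deriving it from the $Z$-diagonal structure---but your argument for property 3 is of course the same one the paper uses to establish that item.
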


Overall, this provides a satisfying explanation for the tradeoff observed in \Cref{thm:refutation}. \Cref{thm:lifting} provides a systematic way to derive matching non-commutative Sum-of-Squares lower bounds from their classical counterpart by embedding hard instances of $k$-$\XOR$ as one-basis $k$-$\XOR$ Hamiltonians. However, this also serves as something of a damnation of non-commutative Sum-of-Squares for worst-case refutation, since the bottleneck is entirely on the classical side: classically hard $\XOR$ instances make classically hard Hamiltonians for it. These Hamiltonians are easily seen to be in $\textsf{NP}$ since aligning to a single basis means they have classical ground states, therefore they cannot offer any real quantum non-triviality under $\NP \neq \QMA$ despite fooling non-commutative Sum-of-Squares.

While \Cref{thm:refutation} serves as evidence of a no-go for stronger semirandom certification, it falls short of blocking certification for random Hamiltonians due to the one-basis assumption, and this assumption is important in constructing the perfect completeness integrality gaps ruling out weak refutation, coming from the fact that anti-commuting Paulis cannot be frustration-free, and truly random $k$-$\XOR$ Hamiltonians have many anti-commuting pairs. Moreover, non-commutative Sum-of-Squares ``knows of'' this fact (see \Cref{fact:ncsosknows}) so can detect unsatisfiability in low-degree this way, a phenomenon not present in classical CSPs. In other words, non-commutativity actually helps refutation, which is why the hard instances for non-commutative Sum-of-Squares actually come from commuting Hamiltonians. This allows it to ``trivially'' break the refutation threshold (the observed tradeoff in \Cref{thm:refutation}) by taking advantage of non-commutativity in the fully random setting, showing semirandom is a harder task than random for non-commutative Sum-of-Squares in the Hamiltonian case, in a departure from the classical version. An interesting question we leave open is to show the correct threshold for refutation of truly random Hamiltonian $k$-$\XOR$ or semirandom models with guaranteed amounts of non-commutativity, like the SYK model.

\subsection{Related work}

\parhead{Similar models to Hamiltonian $k$-$\XOR$.} To our knowledge the Hamiltonians of \Cref{def:hamiltonianxor} are not widely labeled as an analogue of $\XOR$ and have never been studied in the semirandom setting, but similar or captured models have appeared in various contexts before. The most prominent local Hamiltonian model is perhaps the quantum $\textsf{MAX}$-$\textsf{CUT}$ or EPR Hamiltonian problem studied in \cite{HallgrenLP20, Lee22, HothemP023, HwangNP0W23, LeeP24} which is closely related to the $k = 2$ restriction of \Cref{def:hamiltonianxor}. The primary algorithm used for this and related models is the non-commutative Sum-of-Squares algorithm (often called the Quantum Lasserre Hierarchy) which serves as an inspiration for our results. Another important class of Hamiltonians that fits \textit{directly} into the Hamiltonian $k$-$\XOR$ model are the celebrated NLTS Hamiltonians of \cite{AnshuBN23}, or more generally the natural frustration-free Hamiltonian associated with a stabilizer code. The NLTS Theorem tells us that there is an instantiation of Hamiltonian $k$-$\XOR$ whose ground states have non-trivial quantum circuit complexity, establishing evidence of quantum non-triviality of the worst-case version of this model.

\parhead{Semirandom Hamiltonians.} In a separate vein, \cite{ChenDBBT23} studies randomly signed sparse Hamiltonians, with their main model essentially corresponding to \Cref{def:hamiltonianxor} with $k = n$ for both Rademacher and Gaussian series, giving evidence that this regime is classically hard but quantumly easy. In statistical physics, typically concrete dense or geometric models resembling Hamiltonian $k$-$\XOR$ are often studied in the spin glass literature to understand energy configurations of (potentially quantum) interacting particles. In the standard $p$-spin Ising model, where one has $p = k$, the interaction coefficients $b_C$ typically represent the ferromagnetism or antiferromagnetism of the interaction in the Rademacher case, and it is not uncommon to study semirandom models where these interactions are random Rademacher or Gaussian and the interacting particles are often chosen particularly on a lattice to model physically relevant systems. One popular model to study, the Sherrington-Kirkpatrick model, can be seen as a dense classical $k$-$\XOR$ like Hamiltonian with i.i.d. Gaussian interactions $b_C$. The goal in these settings is usually to nail down exact behavior (down to constants) of these models, and it has been shown even $n^{\delta}$ levels of Sum-of-Squares is known to fail at certifying the energy of this model \cite{GhoshJJPR20} precisely. Another common system called the SYK model has received similar attention in \cite{HastingsO22, AnschuetzCKK24} shows a similar failure of classical ansatz to capture the exact maximum energy of the associated Hamiltonian. Our approach, which is roughly captured by Sum-of-Squares, is similarly not amenable to exact results of this form. This is to say, while our ansatz provides intuition that semirandom interactions can destroy entanglement in constant-sized energy gaps, our results do not extend down to subconstant gaps, which remain an interesting avenue of study in understanding the behavior of ground and thermal states of quantum many-body systems and potential quantum supremacy.

\parhead{Fourier analytic view of Hamiltonians.} A concurrent work \cite{MaN25} shows that a problem called $XZ$-Quantum-$6$-$\SAT$ is $\QMA_1$-hard. $XZ$-Quantum-$6$-$\SAT$ is constructed by taking two instances of a projection-like $6$-CSP and prescribing them (Fourier analytically) in only the standard $Z$-basis and Hadamard $X$-basis, allowing us to interpret the Hamiltonian as an entangled pair of classical CSPs. In the same vein, this work studies Hamiltonian $k$-$\XOR$ through the lens of a Fourier analytic generalization of classical $k$-$\XOR$. Our notion of Hamiltonian $k$-$\XOR$ can be viewed as a prescription of classical $k$-$\XOR$ onto arbitrary Pauli operators, rather than just two bases. As such, our results contain as a special case the two-basis version of $k$-$\XOR$. This work nonetheless gives strong evidence that the interactions between Pauli bases, even just two of them, create $\QMA$-hardness of ground states, so we should not believe the refutation problem to be efficient in the worst-case, even in $\NP$. As described in our results section, our refutation algorithm shows semirandomness allows us not only to beat $\NP$ in the ground state, but to bring us down to $\PTIME$.

\parhead{Kikuchi matrices in quantum computation.} A recent line of work \cite{SchmidhuberOKB25, GuptaHOS25} uses Kikuchi matrices for classical and quantum algorithms for the planted version of the classical $k$-$\XOR$ problem, attempting to categorize potential quantum speedups for this problem. Rather than devising a quantum algorithm using the classical Kikuchi matrix, we instead develop a \textit{classical} algorithm for a quantum, or perhaps more aptly, Hamiltonian Kikuchi matrix, in order to solve the related problem of Hamiltonian refutation.

\subsection{Structure of the paper}

The rest of the paper is organized as follows. In \Cref{sec:prelims}, we introduce notation and some standard facts we use. In \Cref{sec:refutation}, we prove \Cref{thm:refutation}, our Hamiltonian certification algorithm, for even $k$, which also serves as a self-contained overview of our main proof ideas. In section \Cref{sec:oddcase}, we prove \Cref{thm:refutation} for odd $k$. In section \Cref{sec:ncsos}, we prove the corresponding non-commutative Sum-of-Squares lower bounds, \Cref{thm:soslowerbound}.

\section{Preliminaries}
\label{sec:prelims}

We let $[n]$ denote the set $\{1, \dots, n\}$, ${[n] \choose \ell}$ denote the set of subsets of $[n]$ of size $\ell$, and $\oplus$ be the symmetric difference operator. For a rectangular matrix $A \in \C^{m \times n}$, we let $A^\dagger$ denote its conjugate transpose. We let $\norm{A}{2} := \max_{\substack{x \in \C^m, y \in \C^n\\ \norm{x}{2} = \norm{y}{2} = 1}} x^{\dagger} A y$ denote the spectral norm of $A$. For $A \in \C^{n \times n}$ we let $\tr(A)$ be the trace of $A$, i.e., $\sum_{i = 1}^n A_{i,i}$. 

\subsection{Quantum computation}

We represent pure states on $n$ qubits as unit vectors $\ket{\psi} \in (\C^2)^{\otimes n}$ and assume standard bra-ket notation. The rank-$1$ density operator for a pure state $\ket{\psi}$ we write as $\psi = \ket{\psi}\bra{\psi}$. More generally we have:

\begin{definition}[Density operator]
    A density operator $\rho$ on $n$ qubits is a Hermitian matrix in $\C^{2^n \times 2^n}$ satisfying (1) $\tr(\rho) = 1$ and (2) $\rho \succeq 0$. As a consequence, we may write $\rho$ as
    \begin{equation*}
        \rho = \sum_{i=1}^n \lambda_i \cdot \ket{\psi_i}\bra{\psi_i}\mcom
    \end{equation*}
    where the $\{\ket{\psi_i}\}_{i \in [n]}$ form an orthonormal basis of pure states in $(\C^2)^{\otimes n}$ and the $\{\lambda_i\}_{i \in [n]}$ form a probability distribution in that $\sum_{i=1}^n \lambda_i = 1$ and $\lambda_i \geq 0$ for all $i \in [n]$.
\end{definition}

 \begin{definition}[Local Hamiltonians]
     A $k$-local Hamiltonian $\bH$ on $n$ qubits is an operator in $\C^{2^n \times 2^n}$ that can be written as
     \begin{equation*}
         \bH = \frac{1}{m}\sum_{i=1}^m \bH_i\mcom
     \end{equation*}
     where each $\bH_i$ is a Hermitian term with $\norm{\bH_i}{2} \leq 1$ acting non-trivially on only $k$ of the $n$ qubits.
 \end{definition}

\subsubsection{Pauli operators}

We make great use of the Pauli operators and the following facts about them.

\begin{definition}[Pauli operators]
    We define $X, Y, Z \in \C^{2 \times 2}$ as
    \begin{equation*}
        X = \begin{bmatrix}
            0 & 1\\
            1 & 0
        \end{bmatrix},\;\;\;\;
        Y = \begin{bmatrix}
            0 & -i\\
            i & 0
        \end{bmatrix},\;\;\;\;
        Z = \begin{bmatrix}
            1 & 0\\
            0 & -1
        \end{bmatrix}.
    \end{equation*}
    An $n$-qubit Pauli operator is a $2^n \times 2^n$ matrix from $\{\Id, X, Y, Z\}^{\otimes n}$. 
\end{definition}

\begin{fact}
    The Pauli operators are involutory, $X^2 = Y^2 = Z^2 = \Id$, Hermitian, i.e. $X^\dagger = X$, and pairwise anti-commute, i.e. $XZ = -ZX$.
\end{fact}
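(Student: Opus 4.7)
The plan is to verify each of the three assertions by direct computation on the $2 \times 2$ matrices in the definition of $X$, $Y$, and $Z$. No structural argument is needed here; everything reduces to a handful of explicit matrix products and sign checks.

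First I would handle the involutory claim by computing $X^2$, $Y^2$, $Z^2$ entrywise. For $X$ and $Z$ the computation is transparent from their definitions, yielding $\Id$. For $Y$, the off-diagonal product $(-i)(i) = 1$ gives ones on the diagonal and zeros elsewhere, so $Y^2 = \Id$ as well.

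Second, for Hermiticity I would observe that $X$ and $Z$ are real symmetric matrices, so $X^\dagger = X$ and $Z^\dagger = Z$ are immediate. For $Y$, the off-diagonal entries $-i$ and $i$ are complex conjugates of one another, so applying conjugate transpose to $Y$ recovers $Y$ itself.

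Finally, for pairwise anti-commutation I would compute the six products $XY$, $YX$, $XZ$, $ZX$, $YZ$, $ZY$ and check that within each unordered pair the two products are negatives of each other; in particular $XZ = -ZX$ is obtained from multiplying the two explicit matrices and comparing signs. A slicker repackaging would be to verify the Clifford-algebra relation $\{\sigma_a, \sigma_b\} = 2\delta_{ab}\Id$ for $\sigma_a, \sigma_b \in \{X, Y, Z\}$, which simultaneously encodes both the involutory and the anti-commutation statements, but for this fact there is no real obstacle either way — the proof is purely a sign-keeping exercise in $2 \times 2$ matrix multiplication, with no delicate step to worry about.
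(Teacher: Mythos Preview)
Your proposal is correct; the paper states this as a standard fact without proof, so your direct computation is entirely appropriate and there is nothing to compare against.
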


The most important fact of the Pauli operators is that they form a basis for Hermitian operators, which allows us to decompose any Hamiltonian as a sum of Paulis.

\begin{definition}[Pauli basis]
    \label{def:paulibasis}
    The Pauli operators $\{\Id, X, Y, Z\}^{\otimes n}$ under scalar field $\C$ form an orthogonal basis for operators in $\C^{2^n \times 2^n}$ under the inner product $\langle P, Q \rangle = \frac{1}{2^n} \tr(P^\dagger Q)$. As a consequence, we can write $\bH \in \C^{2^n \times 2^n}$ as
    \begin{equation*}
        \bH = \sum_{P \in \{\Id, X, Y, Z\}^{\otimes n}} \langle \bH, P \rangle \cdot P\mper
    \end{equation*}
    The $\langle \bH, P \rangle$ are called the Pauli coefficients and we shorthand them as $\hat{\bH}(P)$. If all Pauli coefficients are real then $\bH$ is Hermitian and generally $\{\Id, X, Y, Z\}^{\otimes n}$ form a basis for Hermitian operators in $(\C^2)^{\otimes n}$ when over $\R$.
\end{definition}

\begin{definition}[Operator weight]
    A Pauli operator $P \in \{\Id, X, Y, Z\}^{\otimes n}$ has $\deg(P) = k$ if it acts with a non-identity operator on exactly $k$ qubits. In general a matrix $M \in \C^{2^n \times 2^n}$ has $\deg(M) = k$ if its non-zero Pauli coefficients $\hat{M}(P)$ are all on operators with $\deg(P) \leq k$. We denote by $\mathcal{P}_k(n)$ the set of degree-$k$ Pauli operators on $n$ qubits. Note that a $k$-local Hamiltonian decomposes into degree-$k$ Pauli operators by above. Given two operators $P, Q \in \calP(n)$, we say $P \sqsubseteq Q$ if whenever $Q$ is a non-identity operator on qubit $i$, $P$ acts the same on that qubit.
\end{definition}

\subsection{Sum-of-Squares hierarchy}
\label{sec:sos}

In order to build some intuition for non-commutative Sum-of-Squares, we briefly recall the setup of the standard Sum-of-Squares algorithm over the Boolean hypercube. Suppose we have a real polynomial $p$ that we are interested in maximizing over. A kind of odd way to write this is as:
\begin{align*}
    \max_{\mu}\;\;\;\;&\E_{\mu}\sbra{p} = \sum_{x \in \{0,1\}^n} \mu(x) \cdot p(x)\\
    \text{subject to}\;\;\;\;& \mu \text{ a distribution over $\{0,1\}^n\mper$}
\end{align*}

Note this indeed matches the max of $p$ by taking $\mu$ to be supported on $p$'s optimizers. The Sum-of-Squares algorithm relaxes the notion of distribution to pseudo-distribution, which are not true distributions but roughly look like them on low degree moments.

\begin{definition}[Pseudo-expectations/distributions over the hypercube] \label{def:pseudo-expectation}
A degree-$d$ pseudo-expectation $\pE_\mu$ over a pseudo-distribution $\mu$ on $\{0,1\}^n$ is a linear operator that maps degree $\leq d$ real polynomials on $\{0,1\}^n$ into real numbers with the following three properties:
\begin{enumerate}
    \item (Normalization) $\pE_\mu[1] = 1$.
	\item (Booleanity) For any $x_i$ and any polynomial $p$ of degree $\leq d-2$, $\pE_\mu[p (x_i^2 - x_i)] = 0$. 
	\item (Positivity) For any polynomial $p$ of degree at most $d/2$, $\pE_\mu[p^2] \geq 0$. 
\end{enumerate} 

The degree-$d$ Sum-of-Squares relaxation is then:
\begin{align*}
    \max_\mu\;\;\;\;&\pE_{\mu}\sbra{p} = \sum_{x \in \{0,1\}^n} \mu(x) \cdot p(x)\\
    \text{subject to}\;\;\;\;& \mu \text{ a degree-$d$ pseudo-distribution over $\{0,1\}^n\mper$}
\end{align*}
\end{definition}

\subsubsection{Non-commutative Sum-of-Squares}

If we are instead interested in computing the maximum energy state of a Hamiltonian $\bH$, we generalize from probability distributions to maximization over density operators:
\begin{align*}
    \max_\rho\;\;\;\;&\tr(\bH\rho) = \sum_{i =1}^n \lambda_i(\rho) \cdot \tr(\bH\ket{\psi_i} \bra{\psi_i}) = \E_{i \sim \lambda(\rho)} \sbra{\bra{\psi_i}\bH \ket{\psi_i}}\\
    \text{subject to}\;\;\;\;& \rho \text{  a density operator on $(\C^2)^{\otimes n}\mper$}
\end{align*}

In the same vein as typical Sum-of-Squares, the non-commutative Sum-of-Squares hierarchy relaxes the notion of density operator to pseudo-density operator.

\begin{definition}[Pseudo-density operators] \label{def:pseudo-density}
A degree-$d$ pseudo-expectation $\pE_\rho$ operator defined for pseudo-density operator $\rho$ on $(\C^2)^{\otimes n}$ is the linear operator $\bH \mapsto \tr(\bH\rho)$ satisfying the properties:
\begin{enumerate}
    \item (Normalization) $\pE_\rho[\Id] = 1$.
    \item (Positivity) For any operator $\bH$ acting on $(\C^2)^{\otimes n}$ of degree at most $d/2$, $\pE_\rho[\bH^\dagger \bH] \geq 0$. 
\end{enumerate} 

The degree-$d$ non-commutative Sum-of-Squares relaxation is then written as
\begin{align*}
    \max_\rho\;\;\;&\tr(\bH\rho) = \sum_{i =1}^n \lambda_i(\rho) \cdot \tr(\bH\ket{\psi_i} \bra{\psi_i}) = \pE_{i \sim \lambda(\rho)} \sbra{\bra{\psi_i}\bH \ket{\psi_i}}\\
    \text{subject to}\;\;\;\;& \rho \text{ a degree-$d$ pseudo-density operator on $(\C^2)^{\otimes n}\mper$}
\end{align*}

\end{definition}

\subsection{Hamiltonian $k$-$\XOR$}

In this section we define in greater depth Hamiltonian $k$-$\XOR$ and prove a few auxiliary facts about the resulting family of Hamiltonians.

\begin{definition}[Hamiltonian $k$-$\XOR$]
    \label{def:quantumxor}
    An instance of Hamiltonian $k$-$\XOR$ $\calI = (\calH, \{(P_C, b_C)\}_{C \in \calH})$ where $\calH$ is a $k$-uniform hypergraph on $[n]$, each $P_C$ is a succinctly described $n$ qubit Pauli operator acting non-trivially only on $C$, and each $b_C$ is a $\pm1$-interaction coefficient. The instance $\calI$ defines a Hamiltonian on $n$ qubits given by
    \begin{equation*}
        \bH_{\calI} := \frac{1}{\abs{\calH}} \sum_{C \in \calH} \frac{\Id + b_CP_C}{2}\mper
    \end{equation*}
    When $\calI$ is drawn from a distribution where each $b_C$ is drawn independently and uniformly from $\{\pm 1\}$ we say $\calI$ is (Rademacher) semirandom. If additionally $\calH$ is a uniformly random hypergraph of fixed size, we say that $\calI$ is random. If each $b_C$ is drawn independently from $\calN(0,1)$, we say $\calI$ is Gaussian semirandom. Finally, if all $P_C$ are $Z$-type operators, i.e. $P_C = \bigotimes_{i \in C} Z_i \otimes \bigotimes_{i \in [n] \setminus C} \Id_i$, then we say $\calI$ is in the $Z$-basis, and analogously for $X$ and $Y$. More generally, if the Hamiltonian is diagonal in any basis on $(\C^2)^{\otimes n}$ we call it a single-basis or one-basis Hamiltonian.
\end{definition}

\begin{fact}
    Let $\calI = (\calH, \{(P_C, b_C)\}_{C \in \calH})$ be a Hamiltonian $k$-$\XOR$ instance on $n$ qubits defining $\bH_\calI$. Then $\lambda_{\mathrm{max}}(\bH_\calI) \geq \frac{1}{2}$.
\end{fact}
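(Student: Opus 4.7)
The plan is to observe that $\lambda_{\max}(\bH_\calI)$ equals the maximum of $\Tr(\bH_\calI \rho)$ over density operators $\rho$, and to exhibit a particular $\rho$ that already achieves value $1/2$. The natural candidate is the maximally mixed state $\rho_\star = \Id/2^n$, because it interacts trivially with every non-identity Pauli operator.

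Concretely, I would first compute $\Tr(\bH_\calI)$ by using linearity together with the fact that each $P_C$ is a non-identity Pauli, so by the orthogonality of the Pauli basis in \Cref{def:paulibasis} we have $\Tr(P_C) = 0$ (it is orthogonal to $\Id^{\otimes n}$ under $\langle P,Q\rangle = \tfrac{1}{2^n}\Tr(P^\dagger Q)$). Hence each term $\tfrac{\Id+b_C P_C}{2}$ contributes trace $2^{n-1}$, so
\begin{equation*}
\Tr(\bH_\calI) = \frac{1}{|\calH|}\sum_{C \in \calH} \frac{\Tr(\Id) + b_C \Tr(P_C)}{2} = 2^{n-1}.
\end{equation*}
Dividing by $2^n$ gives $\Tr(\bH_\calI\,\rho_\star) = 1/2$.

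Finally, I would conclude by noting that $\rho_\star$ is a convex combination of pure states (for instance, $\rho_\star = 2^{-n}\sum_{x \in \{0,1\}^n} \ket{x}\bra{x}$), so the value $1/2$ is an average of energies $\bra{x}\bH_\calI\ket{x}$; therefore at least one of these pure states achieves energy at least $1/2$, giving $\lambda_{\max}(\bH_\calI) \geq 1/2$. I expect no real obstacle here: the only thing to be careful about is ensuring $P_C$ is genuinely non-identity, which follows from the definition since $P_C$ acts non-trivially on $C$ (so $\Tr(P_C) = 0$ by the tensor structure $\Tr(A\otimes B) = \Tr(A)\Tr(B)$ and $\Tr(X) = \Tr(Y) = \Tr(Z) = 0$).
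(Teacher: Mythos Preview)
Your proof is correct and follows essentially the same averaging idea as the paper: both arguments show that the expected energy under a suitable distribution of states is exactly $\tfrac{1}{2}$, hence some state attains at least $\tfrac{1}{2}$. The paper averages over Haar-random product states $\mu_{\text{Haar}}^{\otimes n}$ (using that a single-qubit Haar state has expectation $0$ on any non-identity Pauli), whereas you average over the maximally mixed state (equivalently over computational basis states), using that $\Tr(P_C)=0$ for non-identity Paulis. Your route is arguably the more elementary of the two since it avoids any mention of the Haar measure and reduces directly to the trivial inequality $\lambda_{\max}(\bH_\calI)\geq 2^{-n}\Tr(\bH_\calI)$; the paper's choice has the minor narrative benefit of exhibiting a \emph{product} state witness, though your basis states $\ket{x}$ are of course product states as well.
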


\begin{proof}
    Note $\bH_\calI = \frac{\Id}{2} + \frac{1}{2\abs{\calH}} \sum_{C \in \calH} b_CP_C := \frac{\Id}{2} + \frac{1}{2\abs{\calH}} \bH^*_\calI$, so $\lambda_{\mathrm{max}}(\bH_\calI) = \frac{1}{2} + \lambda_{\mathrm{max}}(\bH^*_\calI)$. We show the fact by probabilistic method via a distribution $\mu$ over states $\ket{\psi} \in (\C^2)^{\otimes n}$ such that $\E_{\mu}\sbra{\bra{\psi} \bH^*_\calI \ket{\psi}} = 0$. In particular, let $\mu_{\text{Haar}}$ be the single qubit Haar measure and $\mu = \mu_{\text{Haar}}^{\otimes n}$. It follows that:
    \begin{align*}
        \E_{\psi \sim \mu}\sbra{\bra{\psi} \bH^*_\calI \ket{\psi}} &= \sum_{C \in \calH, i \in C} b_C \cdot \E_{\phi \sim \mu_{\text{Haar}}}\sbra{\bra{\phi} (P_C)_i \ket{\phi}} = 0\mper
    \end{align*}
    Here $(P_C)_i$ means the Pauli operator on qubit $i$. The last equality follows from the fact that a uniform Haar state has expectation $0$ over any Pauli operator.
\end{proof}

\begin{fact}
    \label{fact:quantumxorconcentration}
    Let $\calI = (\calH, \{(P_C, b_C)\}_{C \in \calH})$ be a semirandom Hamiltonian $k$-$\XOR$ instance on $n$ qubits defining $\bH_\calI$. Then if $\abs{\calH} \geq 2(n+1) \cdot \varepsilon^{-2} \cdot \log(1/\delta)$ with probability at least $1-\delta$, $\lambda_{\mathrm{max}}(\bH_\calI) \leq \frac{1}{2} + \varepsilon$.
\end{fact}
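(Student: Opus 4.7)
The plan is to reduce the statement to a standard matrix concentration bound on the ``signal'' part of $\bH_\calI$. As noted in the preceding fact, we may decompose
\[
\bH_\calI \;=\; \frac{\Id}{2} \;+\; \frac{1}{2\abs{\calH}}\,\bH^*_\calI, \qquad \bH^*_\calI \;=\; \sum_{C \in \calH} b_C P_C,
\]
so that $\lambda_{\max}(\bH_\calI) \leq \tfrac12 + \varepsilon$ is equivalent to $\norm{\bH^*_\calI}_2 \leq 2\varepsilon \abs{\calH}$. It therefore suffices to show that the random Hermitian matrix $\bH^*_\calI$, a Rademacher (or Gaussian) linear combination of the fixed Hermitian involutions $P_C$, has spectral norm $\leq 2\varepsilon \abs{\calH}$ with probability $\geq 1-\delta$.

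The key step is Tropp's Matrix Hoeffding (Matrix Rademacher series) inequality: for independent uniform $\pm 1$ signs $\{b_C\}$ and Hermitian matrices $\{P_C\}$ of dimension $d = 2^n$,
\[
\Pr\!\left[\,\norm{\textstyle\sum_C b_C P_C}_2 \geq t\,\right] \;\leq\; 2 \cdot 2^n \exp\!\left(-\frac{t^2}{2\sigma^2}\right), \qquad \sigma^2 \;=\; \Bigl\lVert \textstyle\sum_C P_C^2 \Bigr\rVert_2.
\]
Because each $P_C$ is a Pauli operator we have $P_C^2 = \Id$, hence $\sigma^2 = \abs{\calH}$. Setting $t = 2\varepsilon \abs{\calH}$ gives a failure probability bounded by $2^{n+1}\exp(-2\varepsilon^2 \abs{\calH})$, and the hypothesis $\abs{\calH} \geq 2(n+1)\varepsilon^{-2}\log(1/\delta)$ is (comfortably) enough to push this below $\delta$. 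For the Gaussian case I would invoke the Matrix Gaussian series version of the same inequality, which has identical variance proxy $\sigma^2 = \norm{\sum_C P_C^2}_2 = \abs{\calH}$ and so yields the same bound.

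There is really no serious obstacle here; the step requiring the most care is ensuring the correct variance proxy for the matrix concentration bound, i.e.\ noting that $P_C^\dagger P_C = P_C^2 = \Id$ so that $\sigma^2$ collapses to $\abs{\calH}$ regardless of which Pauli strings appear. The only ``cost'' to keep track of is the dimension factor $d = 2^n$ inside the matrix concentration bound, which is exactly what forces the linear-in-$n$ overhead on the number of constraints needed to drive the tail probability down; this is also what sets the baseline $O(n/\varepsilon^2)$ threshold that the subsequent Kikuchi-based refutation algorithm aims to trade off against running time.
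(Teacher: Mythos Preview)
Your proof is correct and matches the paper's approach essentially line for line: both invoke Tropp's matrix Rademacher/Gaussian series bound on $\sum_{C} b_C P_C$, use $P_C^2 = \Id$ to collapse the variance proxy to $\sigma^2 = \abs{\calH}$, and then read off the required lower bound on $\abs{\calH}$ from the $2^{n+1}$ dimension factor. The only cosmetic difference is that you bound the full spectral norm (sufficient but slightly stronger than needed, so ``equivalent'' should really be ``implied by'') whereas the paper bounds $\lambda_{\max}$ directly with $t = \varepsilon\abs{\calH}$.
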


\begin{proof}
    The result requires the following matrix concentration bound.

    \begin{lemma}[Matrix Rademacher/Gaussian series \cite{Tropp15}]
        \label{lem:matrixchernoff}
        Let $\{A_i\}_{i \in [m]}$ be a sequence of Hermitian complex matrices in $\C^{d \times d}$ and
        \begin{equation*}
            A = \sum_{i=1}^m b_i A_i\mcom
        \end{equation*}
        where $b_i$ are i.i.d. from $\{\pm 1\}$ or $\calN(0, 1)$ for $i = 1,..., m$. Let $\Var(A) = \norm{\sum_{i=1}^m A_i^2}{2}$. Then
        \begin{equation*}
            \Pr\sbra{\lambda_{\mathrm{max}}(A) \geq t} \leq 2d \exp\left(\frac{-t^2}{2\Var(A)}\right)\mper
        \end{equation*}
    \end{lemma}

    Considering the sequence $\{P_C\}_{C \in \calH}$ and $\bH = \sum_{C \in \calH} b_C P_C$, \Cref{lem:matrixchernoff} yields
    \begin{equation*}
        \Pr\sbra{\lambda_{\mathrm{max}}(\bH) \geq \varepsilon \abs{\calH}} \leq 2^{n+1} \exp\left(\frac{-\varepsilon^2 \abs{\calH}}{2}\right)\mper
    \end{equation*}
    We use here that the Pauli operators are involutory to get $\Var(P) = \norm{\sum_{C \in \calH} \Id_{2^n}}{2} = \abs{\calH}$. By letting $\abs{\calH} \geq 2(n+1) \cdot \varepsilon^{-2} \cdot \log(1/\delta)$ for any $\delta > 0$ of our choosing, this probability becomes at most $\delta$.
\end{proof}

\subsection{Binomial coefficient inequalities}

\begin{fact}
\label{fact:binomest}
Let $n, \ell, q$ be positive integers with $\ell \leq n$. Let $q$ be constant and $\ell, n$ be asymptotically large with $\ell \leq n/2$. Then, 
\begin{align*}
    &\frac{ {n \choose \ell - q}}{ {n \choose \ell }} = \Theta\left(\left(\frac{\ell}{n}\right)^q\right)\mcom \\
    &\frac{ {n - q \choose \ell} }{{n \choose \ell}} = \Theta(1)\mper
\end{align*}

\end{fact}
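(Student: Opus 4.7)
Both bounds follow from writing the ratios as ratios of falling factorials with exactly $q$ factors in numerator and denominator, and then bounding each factor individually using $\ell \leq n/2$ (so $n-\ell = \Theta(n)$) and $q = O(1)$.

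For the first identity, the plan is to expand
\begin{equation*}
\frac{\binom{n}{\ell-q}}{\binom{n}{\ell}} = \frac{\ell!\,(n-\ell)!}{(\ell-q)!\,(n-\ell+q)!} = \frac{\ell(\ell-1)\cdots(\ell-q+1)}{(n-\ell+q)(n-\ell+q-1)\cdots(n-\ell+1)}.
\end{equation*}
The numerator is a product of $q$ terms each of size $\Theta(\ell)$ (since $q$ is constant and $\ell$ is asymptotically large, each factor $\ell - j$ with $j < q$ satisfies $\ell/2 \leq \ell - j \leq \ell$ for large enough $\ell$), so the numerator is $\Theta(\ell^q)$. Similarly, each of the $q$ factors in the denominator lies between $n-\ell+1$ and $n-\ell+q$, and since $\ell \leq n/2$ gives $n-\ell \geq n/2$, each factor is $\Theta(n)$ and the denominator is $\Theta(n^q)$. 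Dividing yields $\Theta((\ell/n)^q)$.

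For the second identity, the analogous expansion is
\begin{equation*}
\frac{\binom{n-q}{\ell}}{\binom{n}{\ell}} = \frac{(n-q)!\,(n-\ell)!}{n!\,(n-q-\ell)!} = \frac{(n-\ell)(n-\ell-1)\cdots(n-\ell-q+1)}{n(n-1)\cdots(n-q+1)}.
\end{equation*}
Again each of the $q$ numerator factors is $\Theta(n)$ because $n - \ell - j \geq n/2 - q = \Theta(n)$ for $j < q$, and each denominator factor is $\Theta(n)$, so the ratio is $\Theta(1)$.

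There is no real obstacle here; the only care needed is to ensure that $\ell \leq n/2$ really does force $n - \ell = \Theta(n)$ uniformly so that the $\Theta$ constants absorb the constant $q$. Both identities are symmetric manipulations of falling factorials, and the estimates are tight up to constants depending only on $q$.
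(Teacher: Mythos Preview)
Your proof is correct and is essentially the same as the paper's: both arguments write the ratios as products of $q$ factors and bound each factor using $\ell \leq n/2$ and $q = O(1)$. The only cosmetic difference is that for the first identity the paper rewrites the falling-factorial ratio as $\binom{\ell}{q}/\binom{n-\ell+q}{q}$ and applies the standard bound $(a/b)^b \leq \binom{a}{b} \leq (ea/b)^b$, which amounts to the same estimate you give directly.
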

\begin{proof}
We have that
\begin{align*}
&\frac{ {n \choose \ell - q}}{ {n \choose \ell }} = \frac{ { \ell \choose q}}{ {n - \ell + q \choose q}} \mper
\end{align*}
Using that $\left(\frac{a}{b} \right)^{b} \leq {a \choose b} \leq \left(\frac{e a}{b} \right)^{b}$ finishes the proof of the first equation.

We also have that
\begin{align*}
&\frac{ {n - q \choose \ell} }{{n \choose \ell}} = \frac{(n - q)! (n - \ell)!}{n! (n - \ell - q)!} = \prod_{i = 0}^{q -1 } \frac{n - \ell - i}{n - i} = \prod_{i = 0}^{q-1} \left(1 - \frac{\ell}{n - i}\right) \mcom
\end{align*}
and this is $\Theta(1)$ since $\ell \leq n/2$ and $q$ is constant.
\end{proof}

\section{Certifying Semirandom $k$-$\XOR$ Hamiltonians for even $k$}
\label{sec:refutation}

In this section, we prove the case of \Cref{thm:refutation} where $k$ is even using the Kikuchi matrix method. Our certification algorithm largely follows the framework of \cite{GuruswamiKM22, HsiehKM23} for refuting semirandom $k$-$\XOR$ instances, with our main new ingredient being a spectral certificate in the form of a novel Kikuchi matrix built for Hamiltonians as opposed to classical CSPs, which we introduce here. The Kikuchi matrix we give here is crafted for Hamiltonian $k$-$\XOR$, but it can be defined more generally for any Hamiltonian, and we give some intuition and a derivation of such in \Cref{sec:appendix}.

\begin{definition}[Even-arity Kikuchi matrix for Hamiltonian $k$-$\XOR$]
    \label{def:kikuchimatrix}
    Fix $k \in \N$ even and $k/2 \leq \ell \leq n/2$. Let $P \in \{\Id, X, Y, Z\}^{\otimes n}$ be a weight-$k$ Pauli operator. We define the following adjacency matrix on $\mathcal{P}_\ell(n)$:
    \begin{equation*}
        A_P(Q, R) = \begin{cases}
            1 & Q^\dagger R = P \text{ and } \abs{\supp(Q) \cap \supp(R)} = \ell-k/2\\
            0 & \text{otherwise}
        \end{cases}
    \end{equation*}
    Let $\calI = (\calH, \{(P_C, b_C)\}_{C \in \calH})$ be a Hamiltonian $k$-$\XOR$ instance on $n$ qubits. The level-$\ell$ Kikuchi matrix of $\bH_\calI$ is then defined by
    \begin{equation*}
        K_{\bH_\calI} = \frac{1}{\abs{\calH}}\sum_{C \in \calH} b_C A_{P_C} \otimes \Id_{2^n} := A^*_{\bH_\calI} \otimes \Id_{2^n}\mper
    \end{equation*}
    We call the graph built from $A_{\bH_\calI} = \sum_{C \in \calH} \abs{b_C} A_{P_C}$ the Kikuchi graph.
\end{definition}

It is important, to apply the trace moment method, that our Kikuchi matrix is Hermitian, which we establish here.

\begin{observation}
    \label{obs:symmetric}
    Let $\calI = (\calH, \{(P_C, b_C)\}_{C \in \calH})$ be a Hamiltonian $k$-$\XOR$ instance on $n$ qubits describing $\bH_\calI$. Then $K_{\bH_\calI}$ is symmetric and the underlying graph is undirected.
\end{observation}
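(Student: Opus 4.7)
The plan is to reduce the claim to symmetry of each single-Pauli adjacency matrix $A_P$, then deduce this from Hermiticity of $P$. Since $K_{\bH_\calI} = A^*_{\bH_\calI} \otimes \Id_{2^n}$ with $A^*_{\bH_\calI} = \frac{1}{\abs{\calH}}\sum_{C \in \calH} b_C A_{P_C}$, and each $b_C$ is real (whether Rademacher or Gaussian), symmetry is preserved under real linear combinations and tensoring with $\Id_{2^n}$. So it suffices to show that for every $k$-local Pauli $P$, the matrix $A_P$ on $\calP_\ell(n)$ is symmetric.

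Fix such a $P$ and arbitrary $Q, R \in \calP_\ell(n)$. I would split the defining conditions of $A_P(Q,R) = 1$ into two parts. The support-intersection condition $\abs{\supp(Q) \cap \supp(R)} = \ell - k/2$ is manifestly invariant under swapping $Q$ and $R$, so the only content is to check that the product condition $Q^\dagger R = P$ is symmetric in $Q, R$. This is where Hermiticity enters: each of $\Id, X, Y, Z$ is Hermitian, so any element of $\{\Id, X, Y, Z\}^{\otimes n}$, including $P$, satisfies $P^\dagger = P$. Applying $(\cdot)^\dagger$ to $Q^\dagger R = P$ then gives $R^\dagger Q = (Q^\dagger R)^\dagger = P^\dagger = P$, which is precisely the condition for $A_P(R, Q) = 1$; the reverse implication is identical.

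Hence $A_P$ is symmetric for every Pauli $P$, so $A^*_{\bH_\calI}$ and $K_{\bH_\calI}$ are symmetric (and real, so in fact Hermitian). The Kikuchi graph with adjacency matrix $A_{\bH_\calI} = \sum_C \abs{b_C} A_{P_C}$ is therefore undirected. There is no real obstacle in this argument; the entire observation is essentially one application of Hermiticity of the Pauli basis together with linearity of the construction in the (real) coefficients $b_C$.
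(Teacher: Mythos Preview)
Your proof is correct and in fact more direct than the paper's. The paper argues symmetry of $A_P$ by showing that any $Q,R \in \calP_\ell(n)$ with $Q^\dagger R = P$ must \emph{commute}: it uses the support condition $|\supp(Q)\cap\supp(R)| = \ell - k/2$ to deduce that on the $\ell - k/2$ overlapping qubits $Q$ and $R$ act identically, and on the remaining $k$ qubits (which must be $\supp(P)$) exactly one of them is non-trivial, whence $QR = RQ$ and so $R^\dagger Q = P$. You instead bypass this structural analysis entirely by taking adjoints: $Q^\dagger R = P$ gives $R^\dagger Q = (Q^\dagger R)^\dagger = P^\dagger = P$ directly from Hermiticity of $P$, and the support condition is manifestly symmetric.

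Your route is cleaner for the bare statement, and as a byproduct it also yields $QR = RQ$ (since both equal $P$), so it actually recovers the commutativity the paper proves. What the paper's longer argument buys is an explicit description of the edge pairs $(Q,R)$---identical on the overlap, disjoint halves of $P$ elsewhere---which is reused immediately afterward to count $\Delta$ in the quadratic-form identity and to justify the remark that the Kikuchi graph, though built on a non-Abelian group, only ever connects commuting operators.
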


\begin{proof}
    It suffices to show that $A_P$ is symmetric for any $P \in \{\Id, X, Y, Z\}^{\otimes n}$. To see this, we argue that each pair $Q, R \in \mathcal{P}_\ell(n)$ with $Q^\dagger R = P$ commute. The main observation is since $\abs{\supp(Q) \cap \supp(R)} = \ell-k/2$, there must be $k$ qubits for which only one of $Q$ and $R$ are non-trivial. To fulfill $Q^\dagger R = P$, it must be the case these qubits are exactly those in $\supp(P)$, half from $Q$ and half from $R$. The remaining parts of $Q$ and $R$ must be identical in order to cancel, giving us $R^\dagger Q = P$ as well.
\end{proof}

Previously it was known how to build Kikuchi matrices for any finite Abelian group \cite{KocurekM25}. Our Kikuchi matrix on the other hand is built on the non-Abelian group of the Pauli operators. Nonetheless, the above shows we only ever take edges between commuting operators.

\subsection{Step 1: Expressing $\bra{\psi}\bH_\calI\ket{\psi}$ as a quadratic form of a Kikuchi matrix}

As the core of the Kikuchi matrix method, we show how the Kikuchi matrix captures Hamiltonian optimization directly in its quadratic forms.

\begin{observation}
    \label{obs:kikuchirelaxation}
    Let $\calI = (\calH, \{(P_C, b_C)\}_{C \in \calH})$ be a Hamiltonian $k$-$\XOR$ instance on $n$ qubits describing $\bH_\calI$. Let $\ket{\psi} \in (\C^2)^{\otimes n}$ be a pure state. Consider $\psi^{\odot \ell}$ in $((\C^{2})^{\otimes n})^{\mathcal{P}_\ell(n)}$ defined by $\psi^{\odot \ell}_P = P\ket{\psi}$. Then
    \begin{equation*}
        \bra{\psi}\bH_\calI \ket{\psi} =  \frac{1}{\Delta}(\psi^{\odot \ell})^\dagger K_{\bH_\calI} \psi^{\odot \ell}\mcom
    \end{equation*}
    where $\Delta = {k \choose k/2}{n-k \choose \ell-k/2} 3^{\ell-k/2}$.
\end{observation}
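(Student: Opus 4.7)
The plan is to compute $(\psi^{\odot \ell})^\dagger K_{\bH_\calI} \psi^{\odot \ell}$ directly, reducing everything to a counting problem on pairs of degree-$\ell$ Paulis. By linearity and the definition $K_{\bH_\calI} = \frac{1}{\abs{\calH}} \sum_{C \in \calH} b_C (A_{P_C} \otimes \Id_{2^n})$, it suffices to compute $(\psi^{\odot \ell})^\dagger (A_P \otimes \Id_{2^n}) \psi^{\odot \ell}$ for a single weight-$k$ Pauli $P$ and then sum over constraints.

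Expanding the matrix product gives
\begin{equation*}
(\psi^{\odot \ell})^\dagger (A_P \otimes \Id_{2^n}) \psi^{\odot \ell} = \sum_{Q, R \in \mathcal{P}_\ell(n)} (A_P)_{Q, R} \bra{\psi} Q^\dagger R \ket{\psi} = \sum_{(Q, R) \text{ valid}} \bra{\psi} P \ket{\psi},
\end{equation*}
where the second sum is over pairs satisfying $Q^\dagger R = P$ exactly (not up to phase) and $\abs{\supp(Q) \cap \supp(R)} = \ell - k/2$. Crucially, because $A_P$ demands the exact equality $Q^\dagger R = P$, every non-zero summand contributes exactly $\bra{\psi} P \ket{\psi}$ with no stray $\pm i$ factor. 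Thus this quadratic form equals $N_P \cdot \bra{\psi} P \ket{\psi}$ where $N_P$ counts the number of valid pairs.

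The main step is to show $N_P = \Delta$ independently of $P$. Following the single-qubit structural analysis already used in the proof of \Cref{obs:symmetric}, on each qubit $i \in \supp(P)$ the single-qubit Pauli algebra and the requirement $Q_i R_i = P_i$ together force exactly one of $Q_i, R_i$ to be $\Id$ and the other to be $P_i$ (any other configuration either produces a $\pm i$ phase or $\Id$). On each qubit in $\supp(Q) \cap \supp(R)$, which must therefore lie in $[n] \setminus \supp(P)$, we need $Q_i = R_i$ a common non-identity Pauli; elsewhere both operators act trivially. Combining $\abs{\supp(Q)} = \abs{\supp(R)} = \ell$ with $\abs{\supp(Q) \cap \supp(R)} = \ell - k/2$ then forces exactly $k/2$ qubits of $\supp(P)$ to belong to $\supp(Q)$ and $k/2$ to $\supp(R)$. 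Enumerating these choices: $\binom{k}{k/2}$ for the split of $\supp(P)$ between $Q$ and $R$, $\binom{n-k}{\ell - k/2}$ for the shared support $S \subseteq [n] \setminus \supp(P)$, and $3^{\ell - k/2}$ for the common non-identity Pauli type on each qubit of $S$. Multiplying gives $N_P = \binom{k}{k/2}\binom{n-k}{\ell - k/2}3^{\ell - k/2} = \Delta$.

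Substituting back yields $(\psi^{\odot \ell})^\dagger K_{\bH_\calI} \psi^{\odot \ell} = \frac{\Delta}{\abs{\calH}} \sum_{C} b_C \bra{\psi} P_C \ket{\psi}$, and dividing by $\Delta$ recovers $\bra{\psi} \bH_\calI \ket{\psi}$ (after accounting for the identity shift in the decomposition $\bH_\calI = \frac{\Id}{2} + \frac{1}{2\abs{\calH}} \sum_C b_C P_C$, which contributes a constant to every pure state). The main obstacle is keeping the counting clean: the exact equality $Q^\dagger R = P$ rather than equality up to a phase is essential, since allowing phase-equal products would introduce complex coefficients and break the uniform identification of every summand with $\bra{\psi} P \ket{\psi}$; the insistence on this exact equality is precisely what made \Cref{obs:symmetric} go through, and it is the same algebraic fact that drives the clean count here.
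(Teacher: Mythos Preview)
Your proposal is correct and follows essentially the same route as the paper: expand the quadratic form $(\psi^{\odot \ell})^\dagger (A_P \otimes \Id_{2^n}) \psi^{\odot \ell}$ as a sum over pairs $(Q,R) \in \mathcal{P}_\ell(n)^2$, observe that each nonzero summand equals $\bra{\psi}P\ket{\psi}$ because the Kikuchi edge condition demands the exact equality $Q^\dagger R = P$, and then count the valid pairs to obtain $\Delta = \binom{k}{k/2}\binom{n-k}{\ell-k/2}3^{\ell-k/2}$. The paper phrases the expansion via $\Tr\bigl((A^* \otimes \Id_{2^n}) \cdot \psi^{\odot \ell}(\psi^{\odot \ell})^\dagger\bigr)$ and arrives at the identical count with the identical enumeration of choices; your direct inner-product expansion is the same computation, and your remark about the identity shift is a fair caveat that the paper leaves implicit.
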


\begin{proof}
    For a Kikuchi matrix $K = A^* \otimes \Id_{2^n}$ of a basis Pauli operator $P$ we can write
    \begin{align*}
        (\psi^{\odot \ell})^\dagger K \psi^{\odot \ell} &= \tr((A^* \otimes \Id_{2^n}) \cdot \psi^{\odot \ell} (\psi^{\odot \ell})^\dagger)\\
        &= \sum_{Q, R \in \mathcal{P}_\ell(n)} \tr((A^*_{Q,R} \cdot \Id_{2^n}) \cdot (\psi_Q^{\odot \ell} (\psi_R^{\odot \ell})^\dagger))\\
        &= \sum_{Q, R \in \mathcal{P}_\ell(n)} \mathbbm{1}(Q^\dagger R = P) \cdot \tr(\Id_{2^n} \cdot Q\ket{\psi}\bra{\psi}R)\\
        &= \sum_{Q, R \in \mathcal{P}_\ell(n)} \mathbbm{1}(Q^\dagger R = P) \cdot \tr(P\ket{\psi}\bra{\psi})\\
        &= \Delta \bra{\psi}P\ket{\psi}\mper
    \end{align*}
    In the second line, we are using the fact that the trace of a product is the sum of the Hadamard product to rewrite $\tr(K \cdot \psi^{\odot \ell} (\psi^{\odot \ell})^\dagger)$ as a sum over submatrix products. In the fourth line, we use the cyclicity of the trace and the fact $Q$ and $R$ commute. We get the count $\Delta$ by counting how many pairs $(Q, R) \in \mathcal{P}_\ell^n \times \mathcal{P}_\ell^n$ have $Q^\dagger R = P$. As in \Cref{obs:symmetric}, we note that $Q$ and $R$ must both act non-trivially on $\ell-k/2$ qubits and disjointly on the $k$ qubits $P$ acts on. We choose from (1) ${k \choose k/2}$ choices of which qubits of $P$ that $Q$ acts non-trivially on, (2) ${n-k \choose \ell-k/2}$ choices of which qubits not of $P$ that $Q$ and $R$ both act non-trivially on, and (3) $3^{\ell-k/2}$ which non-trivial operator $\{X, Y, Z\}$ each of the latter operators use. We finish by computing
    \begin{align*}
        (\psi^{\odot \ell})^\dagger K_{\bH_\calI} \psi^{\odot \ell} &= \frac{1}{\abs{\calH}} \sum_{C \in \calH} b_C \cdot (\psi^{\odot \ell})^\dagger K_{P_C}\psi^{\odot \ell}\\
        &= \frac{\Delta}{\abs{\calH}} \sum_{C \in \calH}  \bra{\psi}b_CP_C\ket{\psi}\\
        &= \Delta \bra{\psi}\bH\ket{\psi}\mper
    \end{align*}
\end{proof}

Given this relationship, it suffices to provide a spectral norm bound for the Kikuchi matrix when the underlying Hamiltonian instance is semirandom, which we do using the trace moment method. In order to do so, we need to establish a few more basic facts about the Kikuchi matrix. For starters, it turns out to be much easier to get a tight bound on the Hamiltonian's energy after applying a particular regularization process to the Kikuchi graph, which we define here.

\begin{definition}[Degree-regularized Kikuchi matrix]
    \label{def:regularkikuchimatrix}
    Let $K = A^* \otimes \Id_{2^n}$ be a level-$\ell$ Kikuchi matrix. Let $\Gamma \in \R^{\mathcal{P}_\ell(n) \times \mathcal{P}_\ell(n)}$ be defined as $\Gamma = D + d\Id$ where $D$ is the diagonal degree matrix of the unsigned $A$ and $d = \E_{P \sim \mathcal{P}_\ell(n)}\sbra{\deg(P)}$ is the average degree. The degree-regularized Kikuchi matrix is then
    \begin{equation*}
        \tilde{K} = \Gamma^{-1/2}A^*\Gamma^{-1/2} \otimes \Id_{2^n} \mper
    \end{equation*}
\end{definition}

To use this regularization effectively, we want the following bound on the average degree (or number of non-zero entries) in a row/column in $A$.

\begin{observation}
    \label{fact:avgdegree}
    Let $\calI = (\calH, \{(P_C, b_C)\}_{C \in \calH})$ be a Hamiltonian $k$-$\XOR$ instance on $n$ qubits describing $\bH_\calI$ and let $K_{\bH_\calI}$ be the Kikuchi matrix. For $P \in \mathcal{P}_\ell(n)$ we define the graph degree according to the Kikuchi graph $A_{\bH_\calI}$. Then $\E_{P \sim \mathcal{P}_\ell(n)}[\deg(P)] \geq \left({\frac{\ell}{3n}}\right)^{k/2} \cdot \abs{\calH}$.
\end{observation}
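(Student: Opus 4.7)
The plan is to compute the average degree directly by a double counting argument on the Kikuchi graph. Since the graph $A_{\bH_\calI}$ is the sum $\sum_{C \in \calH} A_{P_C}$ (working in the Rademacher case where $|b_C|=1$; the Gaussian case is analogous up to rescaling the edge weights), the total degree mass decomposes by constraint. For a single weight-$k$ Pauli $P$, the sum of all entries of $A_P$ equals the number of ordered pairs $(Q,R) \in \mathcal{P}_\ell(n)^2$ with $Q^\dagger R = P$ and $|\mathrm{supp}(Q) \cap \mathrm{supp}(R)| = \ell - k/2$. This count was already carried out in the proof of \Cref{obs:kikuchirelaxation}: it equals
\[
\Delta = \binom{k}{k/2}\binom{n-k}{\ell-k/2} 3^{\ell-k/2}\mper
\]
Since $A_P$ is symmetric with no self-loops (as $Q^\dagger Q = \mathrm{Id} \neq P$), this is also the sum of degrees of vertices in the graph defined by $A_P$. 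Summing over $C \in \calH$ gives a total degree of $|\calH| \cdot \Delta$ in $A_{\bH_\calI}$.

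Next, I would divide by the number of vertices $|\mathcal{P}_\ell(n)| = \binom{n}{\ell} 3^\ell$ to obtain
\[
\E_{P \sim \mathcal{P}_\ell(n)}[\deg(P)] = \frac{|\calH| \cdot \Delta}{\binom{n}{\ell} 3^\ell} = |\calH| \cdot \binom{k}{k/2} \cdot 3^{-k/2} \cdot \frac{\binom{n-k}{\ell-k/2}}{\binom{n}{\ell}}\mper
\]
All that remains is to estimate the binomial ratio. I would split it as
\[
\frac{\binom{n-k}{\ell-k/2}}{\binom{n}{\ell}} = \frac{\binom{n-k}{\ell-k/2}}{\binom{n}{\ell-k/2}} \cdot \frac{\binom{n}{\ell-k/2}}{\binom{n}{\ell}}
\]
and invoke \Cref{fact:binomest} with $q = k$ on the first factor (which is $\Theta(1)$) and with $q = k/2$ on the second factor (which is $\Theta((\ell/n)^{k/2})$). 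Combined with $\binom{k}{k/2} \geq 2^{k/2}$ and a careful bookkeeping of the constants hidden inside the $\Theta(\cdot)$ notation (writing out the products $\ell(\ell-1)\cdots(\ell-k/2+1)$ explicitly and lower-bounding each factor by $\ell/2$, using the hypothesis $\ell \geq k/2$), this yields the desired lower bound $|\calH| \cdot (\ell/(3n))^{k/2}$.

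The one subtlety I expect is pinning down the constant in the final inequality: \Cref{fact:binomest} is stated only up to $\Theta(\cdot)$, so to actually get the explicit $\tfrac{1}{3^{k/2}}$ factor cleanly I would instead work with the exact telescoping form
\[
\frac{\binom{n-k}{\ell-k/2}}{\binom{n}{\ell}} = \prod_{i=0}^{k/2-1} \frac{\ell - i}{n - i} \cdot \prod_{i=0}^{k/2-1} \frac{n-\ell-i}{n - k/2 - i}
\]
and lower-bound the second product by $1$ trivially (each factor is $\leq 1$, which does not help — so instead I would lower bound using $n - \ell - i \geq n/2 - k/2$ and a matching upper bound on the denominator). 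Once these constants are worked out, combining them with $\binom{k}{k/2}/3^{k/2}$ should absorb into a clean $(\ell/(3n))^{k/2}$ bound. This constant-chasing is the only real obstacle; the counting itself is immediate from the structure already established in \Cref{obs:kikuchirelaxation}.
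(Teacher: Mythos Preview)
Your approach is essentially identical to the paper's: compute the average degree as $|\calH|\cdot \Delta / |\mathcal{P}_\ell(n)|$ using the count $\Delta$ from \Cref{obs:kikuchirelaxation}, then bound the resulting binomial ratio via \Cref{fact:binomest}. The paper's entire proof is the single display
\[
\E_{P \sim \mathcal{P}_\ell(n)}[\deg(P)] = \frac{\Delta}{|\mathcal{P}_\ell(n)|}\cdot |\calH| = \frac{3^{\ell-k/2}\binom{k}{k/2}\binom{n-k}{\ell-k/2}}{3^\ell\binom{n}{\ell}}\cdot |\calH| \geq \Bigl(\frac{\ell}{3n}\Bigr)^{k/2}\cdot |\calH|,
\]
with the last step justified only by a pointer to \Cref{fact:binomest}; you are actually more scrupulous than the paper in worrying about the exact constant, since \Cref{fact:binomest} is stated only up to $\Theta(\cdot)$. (For what it is worth, the identity $\binom{k}{k/2}\binom{n-k}{\ell-k/2}/\binom{n}{\ell} = \binom{\ell}{k/2}\binom{n-\ell}{k/2}/\binom{n}{k}$ makes the constant-chasing cleaner than the telescoping route you sketch, but downstream the constant is absorbed into the $O(\cdot)$ in the hypothesis on $|\calH|$ anyway.)
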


\begin{proof}
    Each $C \in \calH$ contributes $\Delta$ to the total degree, so the average degree is $\E_{P \sim \mathcal{P}_\ell(n)}[\deg(P)] = \frac{\abs{\calH}\Delta}{\abs{\mathcal{P}_\ell(n)}}$. We then have
    \begin{equation*}
    \E_{P \sim \mathcal{P}_\ell(n)}[\deg(P)] = \frac{\Delta}{\abs{\mathcal{P}_\ell(n)}} \cdot \abs{\calH} = \frac{3^{\ell-k/2} {k \choose k/2}{n-k \choose \ell-k/2}}{3^\ell {n \choose \ell}} \cdot \abs{\calH} \geq  \left({\frac{\ell}{3n}}\right)^{k/2} \cdot \abs{\calH} \mcom
    \end{equation*}
    where the last inequality follows from \Cref{fact:binomest}.
\end{proof}

\subsection{Step 2: Bounding the spectral norm of $\tilde{K}$ via the trace moment method}

We are now ready to state our main technical component, which is a spectral bound on the underlying Kikuchi adjacency matrix using the trace moment method, and use it to prove the even case of \Cref{thm:refutation}.

\begin{lemma}
    \label{lem:spectralnorm}
    Let $\calI = (\calH, \{(P_C, b_C)\}_{C \in \calH})$ be a Hamiltonian $k$-$\XOR$ instance on $n$ qubits and $\calH_\calI \in \C^{2^n \times 2^n}$ the associated Hamiltonian. Define $\bH^*_\calI = \bH_\calI - \frac{\Id_{2^n}}{2}$. Let $\tilde{K}_{\bH^*_\calI} = \Gamma^{-1/2}A^*\Gamma^{-1/2} \otimes \Id_{2^n}$ be the level-$\ell$ degree-regularized Kikuchi matrix, as defined in \Cref{def:kikuchimatrix}, for $\bH^*_\calI$. Suppose additionally that the interaction coefficients $\{b_C\}_{C \in \calH}$ are drawn independently and uniformly from $\{\pm 1\}$ or they are drawn as independent standard Gaussians, i.e., the instance $\calI$ is semirandom as in \Cref{def:quantumxor}. Then, with probability $\geq 1 - \frac{1}{\poly(n)}$, it holds that
    \begin{equation*}
        \norm{\Gamma^{-1/2} A^* \Gamma^{-1/2}}{2} \leq O\left(\sqrt{\frac{\ell\log n}{d}}\right)\mper
    \end{equation*}
\end{lemma}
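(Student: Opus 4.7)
The plan is to apply the trace moment method to the real symmetric matrix $M := \Gamma^{-1/2} A^* \Gamma^{-1/2}$, since $\tilde{K}_{\bH^*_\calI} = M \otimes \Id_{2^n}$ gives $\norm{\tilde{K}_{\bH^*_\calI}}_2 = \norm{M}_2$. For an even integer $q$, Markov's inequality gives
\begin{equation*}
    \Pr\sbra{\norm{M}_2 \geq t} \leq t^{-2q} \cdot \E\sbra{\Tr(M^{2q})}\mper
\end{equation*}
Since $|\calP_\ell(n)| = 3^\ell {n \choose \ell} \leq (3en/\ell)^\ell$, picking $q = \Theta(\ell \log n)$ makes $|\calP_\ell(n)|^{1/2q} = O(1)$, so the result reduces to showing $\E\sbra{\Tr(M^{2q})} \leq |\calP_\ell(n)| \cdot (O(\ell \log n/d))^q$.

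Next I would expand the trace as a sum over labeled closed walks of length $2q$ in the Kikuchi graph:
\begin{equation*}
    \Tr(M^{2q}) = \frac{1}{\abs{\calH}^{2q}}\sum_{\substack{Q_0, \dots, Q_{2q-1} \in \calP_\ell(n) \\ C_0, \dots, C_{2q-1} \in \calH}} \left(\prod_{i=0}^{2q-1} b_{C_i}\right) \prod_{i=0}^{2q-1} \frac{(A_{P_{C_i}})_{Q_i, Q_{i+1}}}{\Gamma_{Q_i}}\mcom
\end{equation*}
with indices mod $2q$, using that each $Q_i$ appears as an endpoint exactly twice so the two $\Gamma^{-1/2}$ factors at each vertex visit combine into $\Gamma_{Q_i}^{-1}$. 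Taking expectation, the Rademacher sign product vanishes unless every hyperedge $C$ appears an even number of times in $(C_0, \dots, C_{2q-1})$, and the Gaussian case is identical via Wick's theorem (with only constant-factor moment blowups absorbed into $O(\cdot)$). This restricts attention to \emph{even closed walks}, which use at most $q$ distinct hyperedges.

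The core step is then a combinatorial bound on the contribution of these even walks, weighted by the $\Gamma^{-1}$ factors, following the encoding/charging framework developed for regularized trace moments in \cite{GuruswamiKM22, HsiehKM23, KocurekM25}. I would partition each walk's traversals into \emph{fresh} (first traversal of an edge) and \emph{revisit} steps, extract a spanning-forest structure on the traversed edges with at most $q$ non-tree traversals, and bound the number of walks realizing a given combinatorial pattern using this tree structure together with an $O(q)$ encoding cost per revisit to specify which prior edge is being reused. Combined with $\Gamma_{Q_i}^{-1} \leq 1/d$ at each step, the $\abs{\calH}^{-2q}$ prefactor, and the identity $d \cdot |\calP_\ell(n)| = \abs{\calH}\Delta$ from \Cref{fact:avgdegree}, the contribution of walks of each pattern type multiplies out to at most $|\calP_\ell(n)| \cdot (O(\ell \log n)/d)^q$, and summing over the polynomially-many pattern types completes the bound.

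The main obstacle I expect is the careful analysis of the regularization, which must handle both low-degree vertices (where $\deg(Q) < d$ so $\Gamma_Q \approx d$) and high-degree vertices (where $\Gamma_Q \approx \deg(Q)$) uniformly. The definition $\Gamma = D + dI$ is designed precisely to interpolate between these regimes, but extracting a tight spectral bound requires the encoding to detect and compensate for each in the charging argument, as done in \cite{GuruswamiKM22, HsiehKM23}. The non-Abelian Pauli structure does not fundamentally change the walk counting --- the commutativity constraint from \Cref{obs:symmetric} just restricts which pairs of Pauli operators form Kikuchi edges --- so the main adaptations from classical Abelian Kikuchi matrix analysis are largely bookkeeping. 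A final Markov step with $q = \Theta(\ell \log n)$ then delivers the claimed $1 - 1/\poly(n)$ probability bound.
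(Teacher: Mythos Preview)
Your approach is essentially the same as the paper's: trace moment method on $\Gamma^{-1/2}A^*\Gamma^{-1/2}$, Markov with $r=\Theta(\ell\log n)$, restriction to even-multiplicity hyperedge sequences, and an encoding argument that uses $\Gamma_Q^{-1}\le 1/\deg(Q)$ on fresh steps (cancelling the $\deg(Q)$ branching) and $\Gamma_Q^{-1}\le 1/d$ on revisit steps. The paper's encoding is phrased via a ``template'' (a partition of the $2r$ step indices into buckets by which $C\in\calH$ is used) rather than a spanning-forest, but these are equivalent bookkeeping devices here.

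One spot where you are too quick: the Gaussian case is \emph{not} a constant-factor perturbation termwise. A single $b_C$ appearing with multiplicity $m$ contributes $(m-1)!!$, which can be as large as $(2r-1)!!$ for a walk using one hyperedge $2r$ times. The paper handles this by grouping walks according to their multiplicity profile $\{m_1,\dots,m_q\}$ and showing combinatorially that the product $\prod_i (m_i-1)!!\cdot\binom{2r}{\{m_1,\dots,m_q\}}$ (templates weighted by Gaussian moments) is bounded by $(2r-1)!!\cdot 2^r(2r)^{r-q}$ uniformly in the profile, so the moment growth is offset by the scarcity of high-multiplicity walks. This is a genuine (if short) argument, not something Wick's theorem gives for free; your sketch should flag it rather than absorb it into $O(\cdot)$.
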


\begin{proof}[Proof of \Cref{thm:refutation} for even $k$]
    For any unit norm state $\ket{\psi} \in (\C^2)^{\otimes n}$ we have $\bra{\psi}\bH_\calI\ket{\psi} = \frac{1}{2} + \frac{1}{2\abs{\calH}}\bra{\psi}\bH^*_\calI\ket{\psi}$. Let $\tilde{A} = \Gamma^{-1/2} A^* \Gamma^{-1/2}$. Note by \Cref{obs:kikuchirelaxation} we have
    \begin{align*}
        \frac{1}{2\abs{\calH}}\bra{\psi}\bH^*_\calI\ket{\psi} 
        &= \frac{1}{2\abs{\calH}\Delta} \cdot (\psi^{\odot \ell})^\dagger K_{\bH^*_\calI} \psi^{\odot \ell} \\
        &= \frac{1}{2\abs{\calH}\Delta} \cdot (\psi^{\odot \ell})^\dagger (\Gamma^{1/2} \otimes \Id_{2^n})(\tilde{A} \otimes \Id_{2^n}) (\Gamma^{1/2} \otimes \Id_{2^n})\psi^{\odot \ell}\\
        &\leq \frac{1}{2\abs{\calH}\Delta} \cdot \lVert \tilde{A} \otimes \Id_{2^n}\rVert_{2} \cdot \norm{(\Gamma^{1/2} \otimes \Id_{2^n}) \psi^{\odot \ell}}{2}^2\\
        &= \frac{1}{2\abs{\calH}\Delta} \cdot\lVert\tilde{A}\rVert_{2} \cdot \tr(\Gamma)\mper
    \end{align*}
    The last step follows from the multiplicativity of spectral norm across tensor product and the fact
    \begin{equation*}
        \norm{(\Gamma^{1/2} \otimes \Id_{2^n}) \psi^{\odot \ell}}{2}^2 = (\psi^{\odot \ell})^\dagger(\Gamma \otimes \Id_{2^n})\psi^{\odot \ell} = \sum_{P \in \mathcal{P}_\ell(n)} (\psi^{\odot \ell}_P)^\dagger (\Gamma_{P, P} \Id_{2^n})\psi_P^{\odot \ell} = \tr(\Gamma)\mper
    \end{equation*}
    Here we use the fact that $P^2 = \Id$ for any $P \in \mathcal{P}_\ell(n)$. Now we note $\tr(\Gamma) = \sum_{P \in \mathcal{P}_\ell(n)} \deg(P) + d = 2\abs{\calH}\Delta$ which is just twice the total degree. Putting it all together and invoking \Cref{lem:spectralnorm} we have
    \begin{equation*}
        \lambda_{\mathrm{max}}(\bH_\calI) = \max_{\ket{\psi} \in (\C^2)^{\otimes n}}\bra{\psi}\bH_\calI\ket{\psi} \leq \frac{1}{2} + \lVert \tilde{A} \rVert_2 \leq \frac{1}{2} + O\left(\sqrt\frac{\ell \log n}{d}\right)\mper
    \end{equation*}
    Finally, we recall \Cref{fact:avgdegree} that $d \geq \left(\frac{\ell}{3n}\right)^{k/2} \cdot \abs{\calH}$ and observe that if $\abs{\calH} \geq O(1) \cdot n \log n \left(\frac{3n}{\ell}\right)^{k/2-1} \varepsilon^{-2}$ for a sufficiently large universal constant, the bound becomes $\frac{1}{2} + \varepsilon$ as desired.
\end{proof}

\begin{proof}[Proof of \Cref{lem:spectralnorm}]
By \Cref{obs:symmetric}, we have that $\lVert \tilde{A} \rVert_2 \leq \tr((\Gamma^{-1} A)^{2r})^{1/2r}$ for any positive integer $r \in \Z_{> 0}$. We view $A$ as a random matrix and by Markov's inequality establish
\begin{equation*}
\Pr\left[\tr((\Gamma^{-1} A)^{2r}) \geq N \cdot \E[\tr((\Gamma^{-1} A)^{2r})]\right] \leq \frac{1}{N} \mper
\end{equation*}
Let $N = \abs{\mathcal{P}_\ell(n)}$. We note this event is the same as $\tr((\Gamma^{-1} A)^{2r})^{1/2r} \geq N^{1/2r} \cdot \E[\tr((\Gamma^{-1} A)^{2r})]^{1/2r}$, and for $2r \geq \log N $ we have $N^{1/2r} \leq O(1)$. This immediately gives us that with probability $\geq 
1- \frac{1}{N}$, $\lVert\tilde{A}\rVert_2 \leq O\left(\E[\tr((\Gamma^{-1} A)^{2r})]^{1/2r}\right)$. We then have that:
\begin{flalign*}
    \E\left[\tr\left(\left(\Gamma^{-1} A\right)^{2r}\right)\right] 
    &= \E\left[\tr\left(\left(\Gamma^{-1} \sum_{C \in \calH} b_CA_{P_C} \right)^{2r}\right) \right]\\
    &= \E\left[\tr\left(\sum_{C_1, \dots, C_{2r} \in \calH} \prod_{i = 1}^{2r} \Gamma^{-1} \cdot b_{C_i} A_{P_{C_i}} \right) \right]\\
    &= \sum_{C_1 ,\dots, C_{2r} \in \calH}\E\left[\tr\left(\prod_{i = 1}^{2r} \Gamma^{-1} \cdot b_{C_i} A_{P_{C_i}} \right) \right] \\
        &= \sum_{C_1 ,\dots, C_{2r} \in \calH}\E\left[\prod_{i=1}^{2r} b_{C_i} \right] \cdot \tr\left(\prod_{i = 1}^{2r} \Gamma^{-1} A_{P_{C_i}} \right)\mper
    \end{flalign*}
    Now we make the following observation. Let $C_1 ,\dots, C_{2r} \in \calH$ be a term in the above sum. Fix $C \in \calH$ and count the number of times $r$ that $C = C_i$ in the sequence above. Assuming we are in the $\{\pm 1\}$ case, observe that if $m$ is even, then $\E\sbra{b_{C_i}^m} = 1$ in the product above when $b_{C_i} \sim \{\pm1\}$. If $m$ is odd, we instead have $\E\sbra{b_{C_i}^m} = \E\sbra{b_{C_i}} = 0$. Similarly, in the Gaussian case the moments look as follows.
    \begin{fact}[Moments of a standard Gaussian]
        \label{fact:gaussianmoments}
        Let $b \sim \calN(0,1)$. Then for any $r \geq 1$
        \begin{equation*}
            \E\sbra{b^m} = \begin{cases}
                (m-1)!! & \text{$m$ is even}\\
                0 & \text{$m$ is odd}
            \end{cases}
        \end{equation*}
    \end{fact}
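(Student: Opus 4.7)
The plan is to split into two cases by parity of $m$, handle odd $m$ with a symmetry argument, and handle even $m$ by setting up an integration-by-parts recursion that drops the exponent by $2$ at each step. Throughout, I would work directly from the density form
\begin{equation*}
    \E\sbra{b^m} = \frac{1}{\sqrt{2\pi}} \int_{-\infty}^{\infty} x^m e^{-x^2/2}\, dx\mper
\end{equation*}

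For odd $m$, the integrand $x \mapsto x^m e^{-x^2/2}$ is the product of an odd and an even function, hence odd. Since $x^m e^{-x^2/2}$ is absolutely integrable (Gaussian tails dominate any polynomial), the integral over $\R$ vanishes and $\E[b^m] = 0$.

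For even $m \geq 2$, I would derive the recursion $\E[b^m] = (m-1)\cdot \E[b^{m-2}]$ by integration by parts. Writing $x^m e^{-x^2/2} = -x^{m-1} \cdot \frac{d}{dx} e^{-x^2/2}$ and integrating by parts gives
\begin{equation*}
    \int_{-\infty}^{\infty} x^m e^{-x^2/2}\, dx = \Bigl[-x^{m-1} e^{-x^2/2}\Bigr]_{-\infty}^{\infty} + (m-1)\int_{-\infty}^{\infty} x^{m-2} e^{-x^2/2}\, dx\mper
\end{equation*}
The boundary term vanishes by Gaussian decay, yielding $\E[b^m] = (m-1)\cdot \E[b^{m-2}]$. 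Iterating from base case $\E[b^0] = 1$ produces $\E[b^m] = (m-1)(m-3)\cdots 3 \cdot 1 = (m-1)!!$ for even $m$.

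There is no real obstacle here since the statement is a standard textbook identity; the only point worth double-checking is the consistency of $(m-1)!!$ at small values (e.g.\ $m = 2$ gives $1!! = 1$, matching $\Var(b) = 1$), which is immediate. The proof is a two-line symmetry argument plus a one-line induction, and nothing beyond the Gaussian density, oddness, and integration by parts is needed.
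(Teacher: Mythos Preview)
Your proof is correct and is the standard textbook argument. The paper itself does not prove this fact at all; it is stated without proof as a well-known identity, so there is no approach to compare against.
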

    Once again the odd moments are $0$, but in this case the even moments increase with $(m-1)!!$. This motivates the following definition.
    \begin{definition}[Trivially closed sequences]
    \label{def:triviallyclosedwalks}
    Let $C_1 ,\dots, C_{2r} \in \calH$. We say that $C_1 ,\dots, C_{2r} \in \calH$ is trivially closed with respect to $C$ if $C$ appears an even number of times in the sequence. We say that the sequence is trivially closed if it is trivially closed with respect to all $C \in \calH$. Further, we say the sequence is a closed $\{m_1, \dots, m_q\}$-walk if $q$ distinct $C_i$ appear and after collating their non-zero multiplicities $m_i$ we get the set $\{m_1, \dots, m_q\}$. A trivially closed sequence is one in which the walk is closed and all $m_i$ for $i \in [q]$ are even.
    \end{definition}
    With the above definition in hand, we argue in the Gaussian case:
    \begin{align*}
      \E\left[\tr((\Gamma^{-1} A)^{2r})\right] &= \sum_{\substack{\{m_1, \dots, m_q\}\\ \sum_{i=1}^q m_i = 2r\\ \forall i \in [q], m_i \text{ even}}} \sum_{\substack{C_1 ,\dots, C_{2r} \in \calH \\ \text{a } \{m_1,\dots,m_q\}\text{-walk}}} \E\left[\prod_{i=1}^{2r} b_{C_i} \right]  \cdot \tr\left(\prod_{i = 1}^{2r} \Gamma^{-1} A_{P_{C_i}} \right)\\
      &= \sum_{\substack{\{m_1, \dots, m_q\}\\ \sum_{i=1}^q m_i = 2r\\ \forall i \in [q], m_i \text{ even}}} \prod_{i = 1}^q (m_i - 1)!! \sum_{\substack{C_1 ,\dots, C_{2r} \in \calH \\ \text{a } \{m_1,\dots,m_q\}\text{-walk}}} \tr\left(\prod_{i = 1}^{2r} \Gamma^{-1} A_{P_{C_i}} \right) \mper
    \end{align*}
    In the first line, we observe if any $C_i$ appears without even multiplicity, the whole term contributes $0$. We then categorize the walks by the (indistinguishable) repetitions of each hyperedge, which tells us the corresponding Gaussian moments across the entire walk. From here, observe the trace term is now unsigned, so all terms are positive, so the $\{\pm 1\}$ coefficient case yields the exact same expression with the Gaussian moments reduced to $1$. For this reason, it suffices to bound this term to finish both cases. The following lemma yields the desired bound.
\begin{lemma}
    \label{lem:countingwalks}
    For any Hamiltonian $k$-$\XOR$ instance $\calI = (\calH, \{(P_C, b_C)\}_{C \in \calH})$,
    \begin{equation*}
        \sum_{\substack{\{m_1, \dots, m_q\}\\ \sum_{i=1}^q m_i = 2r\\ \forall i \in [q], m_i \text{ even}}} \prod_{i = 1}^q (m_i - 1)!! \sum_{\substack{C_1 ,\dots, C_{2r} \in \calH \\ \text{a } \{m_1,\dots,m_q\}\text{-walk}}} \tr\left(\prod_{i = 1}^{2r} \Gamma^{-1} A_{P_{C_i}} \right) \leq 8^{2r} \cdot N \left(\frac{2r}{d}\right)^r \mper
    \end{equation*}
\end{lemma}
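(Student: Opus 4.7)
The plan is a trace-moment / walk-counting argument for Kikuchi-type matrices in the style of \cite{GuruswamiKM22, HsiehKM23}, with the Gaussian double-factorials $(m_i-1)!!$ absorbed into a Wick-style enumeration of walk shapes.

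Since $\Gamma$ is diagonal with $\Gamma_{Q,Q} \geq d$, expanding the trace gives
\begin{equation*}
    \Tr\left(\prod_{i=1}^{2r} \Gamma^{-1} A_{P_{C_i}}\right) = \sum_{Q_1, \dots, Q_{2r} \in \mathcal{P}_\ell(n)} \prod_{i=1}^{2r} \frac{(A_{P_{C_i}})_{Q_i, Q_{i+1}}}{\Gamma_{Q_i, Q_i}}
\end{equation*}
with $Q_{2r+1} = Q_1$. Combined with the outer sum over trivially closed sequences $(C_1, \dots, C_{2r})$, this becomes a sum over closed labeled walks $W = (Q_1, C_1, Q_2, \dots, Q_{2r}, C_{2r}, Q_1)$ on $\mathcal{P}_\ell(n)$ whose $i$-th step is a Kikuchi edge labeled by $C_i$, subject to every $C$ appearing with even multiplicity. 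I would then assign each such walk a shape by recording, at each step, whether $C_i$ is fresh ($C_i \notin \{C_1, \dots, C_{i-1}\}$) or a repeat, and at each repeat step which earlier occurrence it pairs with. For a walk using $q$ distinct constraints with even multiplicities $m_1, \dots, m_q$, there are exactly $q$ fresh and $2r-q \geq r$ repeat steps, and the number of valid pairings of repeats to fresh occurrences is $\prod_i (m_i-1)!!$ -- exactly the Gaussian moment weight on the LHS.

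For each fixed shape I count walks as follows: (i) $N$ choices for the starting vertex $Q_1$; (ii) each fresh step contributes at most $\abs{\calH} \cdot \Delta$ choices (new constraint plus next vertex), which combined with the $1/\Gamma_{Q_i, Q_i} \leq 1/d$ factor and the identity $\abs{\calH}\Delta = d \cdot N$ from \Cref{fact:avgdegree} collapses to an $O(1)$ factor; (iii) each repeat step contributes only $O(1)$ successor choices together with a $1/d$ factor, since the constraint is already forced by the pairing and the Pauli label plus one endpoint determines the edge up to $O(1)$ ambiguity. Multiplying yields a per-shape contribution of at most $N \cdot O(1)^{2r} \cdot d^{-r}$. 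Summing over integer partitions of $2r$ into even parts and over the positions of the $q$ fresh steps introduces at most a $(2r)^r$ overhead (a crude upper bound on the number of shapes of given $q$), and gathering all absolute constants into $8^{2r}$ yields the claimed bound.

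The main obstacle is step (iii): verifying that in the non-Abelian Pauli setting the Kikuchi edges are rigid enough that one endpoint plus the constraint label determines the other up to an $O(1)$ factor. This is where the commuting-pair observation of \Cref{obs:symmetric} is essential -- each edge of $A_{P_C}$ corresponds to a splitting of $P_C$ across two commuting $\ell$-Paulis, so fixing one side of the split and the label determines the other up to the bounded ambiguity coming from the $3^{\ell-k/2}$ common-support choices. Absorbing this ambiguity into the constant base of the $8^{2r}$ keeps the repeat-step contribution under control and delivers the target bound $8^{2r} \cdot N \cdot (2r/d)^r$.
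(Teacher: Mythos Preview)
Your step (ii) contains a genuine gap. The key maneuver in this style of argument is to use \emph{two different} bounds on $\Gamma^{-1}_{Q,Q}$ depending on the step type: for a fresh step one uses $\Gamma^{-1}_{Q_{i-1},Q_{i-1}} \leq 1/\deg(Q_{i-1})$ (valid because $\Gamma = D + d\,\Id$, not merely $d\,\Id$), and this exactly cancels the $\deg(Q_{i-1})$ choices of outgoing edge from $Q_{i-1}$; only at repeat steps does one fall back to $\Gamma^{-1}_{Q_{i-1},Q_{i-1}} \leq 1/d$. You instead use $\leq 1/d$ everywhere and then try to bound the number of fresh-step choices by $\abs{\calH}\cdot\Delta$ --- but $\abs{\calH}\Delta$ is the \emph{total} number of edges in the Kikuchi graph, not the number of edges leaving a fixed vertex $Q_{i-1}$. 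Your ``collapse to $O(1)$'' is therefore off by a factor of $N$ per fresh step: indeed $\abs{\calH}\Delta/d = N$, not $O(1)$, so the argument as written would produce $N^{q+1}(1/d)^{2r-q}$ rather than $N(1/d)^{2r-q}$. The entire purpose of the regularization $\Gamma = D + d\,\Id$ is to make the fresh-step contribution telescope to $1$ vertex by vertex; using only the uniform $1/d$ bound throws this away.

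Separately, your worry in step (iii) is misplaced. Given one endpoint $Q$ and the label $P_C$, the other endpoint is \emph{uniquely} determined: $R = QP_C$ either lies in $\mathcal{P}_\ell(n)$ with the required support intersection (in which case it is the unique neighbor) or it does not (and there is no edge). There is no $3^{\ell-k/2}$ ambiguity once one side is fixed; that factor appears in $\Delta$ only because both endpoints are free when counting edges globally. So repeat steps contribute a deterministic successor and a single $1/d$ weight --- no non-Abelian obstacle, and \Cref{obs:symmetric} is not needed for this rigidity.
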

With \Cref{lem:countingwalks}, we bound $\E[\tr((\Gamma^{-1} A)^{2r})]$. Taking $r$ to be $O( \log N)$ for a sufficiently large universal constant and applying Markov's inequality finishes the proof.
\end{proof}

\begin{proof}[Proof of  \Cref{lem:countingwalks}]
    We bound the sum as follows. First, note that the number of ways to choose even $\{m_1, \dots, m_q\}$ such that $\sum_{i=1}^q m_i = 2r$ is the same as choosing ways to add positive integers to $r$, which is called the integer partition $p(r)$. A classical bound gives $p(r) \leq (e^{\pi \sqrt{\frac{2}{3}}})^{\sqrt{r}}$ which is simply less than $4^{2r}$, so we can pay this factor and look to bound the maximum sum term across all even choices of $\{m_1, \dots, m_q\}$.
    
    Now, we observe that for a closed sequence $C_1 ,\dots, C_{2r}$, we have
    \begin{flalign*}
    \tr\left(\prod_{i = 1}^{2r} \Gamma^{-1} A_{P_{C_i}} \right) = \sum_{Q_0, Q_1, \dots, Q_{2r-1} \in \mathcal{P}_\ell(n)} \prod_{i = 0}^{2r - 1} \Gamma^{-1}_{Q_i, Q_i} \cdot \mathbbm{1}\left(Q_i \cdot Q_{i+1} = P_{C_i}\right) \mcom
    \end{flalign*}
    where we define $Q_{2r} = Q_0$. Thus, the sum we wish to bound in \Cref{lem:countingwalks} simply counts the total weight of closed $\{m_1, \dots, m_q\}$-walks $Q_0, C_1, Q_1, \dots, Q_{2r-1}, C_{2r}, Q_{2r}$ (where $Q_{2r} = Q_0$) in the Kikuchi graph $A$, where the weight of a walk is simply $\prod_{i = 0}^{2r-1} \Gamma^{-1}_{Q_i, Q_i}$, and then multiplies the whole thing by $\prod_{i =1}^q (m_i-1)!!$. Intuitively, when the $m_i$ are large, the prefactor $(m_i-1)!!$ grows and gives the corresponding walk a higher weight than we see in the Rademacher case. Our hope is to offset this by showing walks with such a larger $m_i$ are less frequent.
    
    Let us now bound this total weight by uniquely encoding a $\{m_1, \dots, m_q\}$-walk $Q_0, C_1, \dots , C_{2r}, Q_{2r}$ as follows.
    \begin{itemize}
        \item First, we choose the template for the walk, which is the way in which the indices $1, \dots, 2r$ are related such that the multiplicities indeed fulfill a $\{m_1, \dots, m_q\}$-walk. Formally, we can let a template be a partition $T$ of the indices $1, \dots, 2r$ into indistinguishable buckets of sizes $\{m_1, \dots, m_q\}$.
        \item Second, we write down the start vertex $Q_0$.
        \item For $i = 1, \dots, 2r$, if $i$ is the first index in its bucket, we choose an edge $C_i$ from the neighbors of $Q_{i-1}$ to walk along. If $i$ is not the first index in its bucket, then it is determined completely by whatever $C$ we chose previously, since the template condition enforces equality.
    \end{itemize}
    With the above encoding, we can now bound the total weight of closed $\{m_1, \dots, m_q\}$-walks as follows. First, let us consider the total weight of walks for some fixed choice of template $T$. We have $N$ choices for the start vertex $Q_0$. For each $i = 1, \dots, 2r$ if $i$ is the first index in its bucket, we have $\deg(Q_{i-1})$ choices for $Q_i$, and we multiply by a weight of $\Gamma^{-1}_{Q_{i-1}, Q_{i-1}} \leq \frac{1}{\deg(Q_{i-1})}$, so the contribution to the product is $1$. For each $i = 1, \dots, 2r$ where $i$ is not the first in its buckets, its value is predetermined, but we still multiply by a weight of $\Gamma^{-1}_{Q_{i-1}, Q_{i-1}} \leq \frac{1}{d}$. Hence, the total weight for walks coming from a set template is at most $N \left(\frac{1}{d}\right)^{2r-q}$, since there are $2r-q$ indices that are not first in one of the $q$ buckets.

    Now let ${2r \choose \{m_1, \dots, m_q\}}$ count the number of ways to partition $[2r]$ into indistinguishable buckets $\{m_1, \dots, m_q\}$. We aim to bound the maximum of
    \begin{equation*}
        \prod_{i =1}^q (m_i-1)!! \cdot {2r \choose \{m_1, \dots, m_q\}} \cdot N\left(\frac{1}{d}\right)^{2r-q}\mcom
    \end{equation*}
     across all even choices for $\{m_i\}_{i \in [q]}$. Observe that $\prod_{i =1}^q (m_i-1)!!$ counts exactly the number of perfect matchings within the $q$ buckets, so in total the first two terms count how to partition $[2r]$ in the buckets and then match them. Another way to count this is to first perfectly match $[2r]$ and then partition the $r$ edges into $\{m_1/2, \dots, m_q/2\}$ buckets. The standard $(2r-1)!!$ count for number of perfect matchings on $[2r]$ yields
     \begin{equation*}
         (2r-1)!! \cdot {r \choose \{m_1/2, \dots, m_q/2\}} \cdot N\left(\frac{1}{d}\right)^{2r-q}\mper
     \end{equation*}

     We aim to bound ${r \choose \{m_1/2, \dots, m_q/2\}} \leq 2^r (2r)^{r-q}$ regardless of the choice of $\{m_i\}_{i \in [q]}$. To do this, we encode a partition as follows. Scanning through $i = 1, \dots, r$ construct a string $z \in \{0, 1\}^r$ by letting $z_i = 0$ if $i$ is the first element in its bucket, and $z_i = 1$ otherwise. Now for each $z_i = 1$, we specify the first element with $z_i = 1$ whose bucket it shares. Assuming there are $q$ buckets, we can specify this element uniquely with only $q$ symbols. Thus there are at most $2^r$ choices for $z$ and only $q^{r-q}$ choices for how the remainder sort in. Since $q \leq r$, $2^r (2r)^{r-q}$ serves as an upper bound. Putting it all together we have:
    \begin{align*}
        &\sum_{\substack{\{m_1, \dots, m_q\}\\ \sum_{i=1}^q m_i = 2r\\ \forall i \in [q], m_i \text{ even}}} \prod_{i = 1}^q (m_i - 1)!! \sum_{\substack{C_1 ,\dots, C_{2r} \in \calH \\ \text{a } \{m_1,\dots,m_q\}\text{-walk}}} \tr\left(\prod_{i = 1}^{2r} \Gamma^{-1} A_{P_{C_i}} \right)\\
        &\leq 4^{2r} \max_{\{m_1, \dots, m_q\}}\prod_{i = 1}^q (m_i - 1)!! \sum_{\substack{C_1 ,\dots, C_{2r} \in \calH \\ \text{a } \{m_1,\dots,m_q\}\text{-walk}}} \tr\left(\prod_{i = 1}^{2r} \Gamma^{-1} A_{P_{C_i}} \right)\\
        &\leq 4^{2r} (2r-1)!! \cdot {r \choose \{m_1/2, \dots, m_q/2\}} \cdot N \left(\frac{1}{d}\right)^{2r-q}\\
        &\leq 8^{2r} \cdot N \left(\frac{2r}{d}\right)^{2r-q}\mper
    \end{align*}
    In the last line we use $(2r-1)!! \leq (2r)^r$ and ${r \choose \{m_1/2, \dots, m_q/2\}} \leq 2^r (2r)^{r-q}$. To conclude we just notice $q \leq r$.
\end{proof}
\section{Certifying Semirandom $k$-$\XOR$ Hamiltonians for odd $k$}
\label{sec:oddcase}

In this section, we extend on the framework built in \Cref{sec:refutation} and prove \Cref{thm:refutation} for odd $k$. Our proof again follows the Kikuchi matrix method approach of \cite{GuruswamiKM22, HsiehKM23} by crafting a more complicated odd-arity Kikuchi matrix capturing Hamiltonian optimization. As is usual with this method, straightforward reductions to the even case seem to fail, so we require some additional instance preprocessing that we explain now.

\subsection{Refuting bipartite Hamiltonian $k$-$\XOR$ instances}

We begin the proof by defining a structured family of \textit{bipartite} Hamiltonian $k$-$\XOR$ instances.

\begin{definition}[$\calU$-bipartite Hamiltonian $k$-$\XOR$]
   Given $\calU$ a multiset over $\calP_t(n)$, we say a Hamiltonian $k$-$\XOR$ instance $\calI = (\calH, \{(P_C, b_C)\}_{C \in \calH})$A is a $t$-sparse $\calU$-bipartite instance if we can write $\calH = \{\calH_U\}_{U \in \calU}$ and each $\calH_U \subseteq \calH$ with the property that $U \sqsubseteq P_C$ for all $P_C \in \calH_U$. We call the collection $\{\calH_U\}_{U \in \calU}$ the $\calU$-bipartite decomposition of the instance.
\end{definition}

\begin{definition}[$(\varepsilon, \ell)$-regularity in $\calU$-bipartite instances]
    Given a $\calU$-bipartite decomposition $\{\calH_U\}_{U \in \calU}$, a partition $\calH_U$ is $(\varepsilon, \ell)$-regular if there does not exist non-zero $W \in \calP(n)$ with $\abs{W} > \abs{U}$ and a subset $\calH' \subseteq \calH_U$ with the property $W \sqsubseteq V$ for all $V \in \calH'$ and $\abs{\calH'} > \max\left(\left(\frac{3n}{\ell}\right)^{k/2-1-\abs{W}}, 1\right) \cdot \varepsilon^{-2}$. The entire collection $\{\calH_U\}_{U \in \calU}$ is said to be $(\varepsilon, \ell)$-regular if all partitions $\calH_U$ are $(\varepsilon, \ell)$-regular. If $\varepsilon = 1$, we abbreviate to just $\ell$-regular.
\end{definition}

Given the structure of a bipartite Hamiltonian $k$-$\XOR$ instance $\calI$, the following Cauchy-Schwarz trick, adapted from the analogous classical CSP refutation trick, gives us a new way to bound the maximum eigenvalue of $\bH_\calI$.

\begin{lemma}[Cauchy-Schwarz Trick]
\label{lem:cauchyschwarz}
    For a Hamiltonian $k$-$\XOR$ instance $\calI = (\calH, \{(P_C, b_C)\}_{C \in \calH})$ we recall
    \begin{equation*}
        \lambda_{\mathrm{max}}(\bH_\calI) = \frac{1}{2} + \frac{1}{2 \abs{\calH}}\lambda_{\mathrm{max}}\left(\sum_{C \in \calH} b_CP_C\right) := \frac{1}{2} + \frac{1}{2\abs{\calH}} \lambda_{\mathrm{max}}(\bH^*_\calI)\mper
    \end{equation*}
    Given a bipartite decomposition $\{\calH_U\}_{U \in \calU}$ of $\calH$ we have the bound
    \begin{equation*}
        \lambda_{\mathrm{max}}(\bH^*_\calI)^2 \leq \max_{\substack{\ket{\psi} \in (\C^2)^{\otimes n} \\ \ket{\psi} \text{ pure state}}} \abs{\calU} \sum_{U \in \calU} \sum_{C, C' \in \calH_U} b_C b_{C'} \bra{\psi} P_{\wt{C}} P_{\wt{C}'}\ket{\psi} \mcom
    \end{equation*}
    where $P_{\wt{C}} = P_C P_U$ for $C \in \calH_U$, $U \in \calU$.
\end{lemma}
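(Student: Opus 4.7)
The plan is to reduce to a Cauchy-Schwarz argument applied to the quadratic form $\bra{\psi} \bH^*_\calI \ket{\psi}$ at a maximizing $\ket{\psi}$. First, note that $\bH^*_\calI = \sum_{C \in \calH} b_C P_C$ is Hermitian and traceless (since each $P_C$ is a non-identity Pauli, which is traceless), so $\lambda_{\max}(\bH^*_\calI) \geq 0$ and squaring the variational characterization is legitimate. Let $\ket{\psi}$ be a pure state attaining the maximum.

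Next, I would rewrite each $P_C$ using the bipartite structure. Since $U \sqsubseteq P_C$ for every $C \in \calH_U$, the operators $P_U$ and $P_{\widetilde{C}} := P_C P_U$ act on disjoint sets of qubits, so they commute, and $P_U^2 = \Id$ gives $P_C = P_U P_{\widetilde{C}}$. Grouping the sum over constraints by the $\calU$-bipartite decomposition yields
\begin{equation*}
\lambda_{\max}(\bH^*_\calI) = \sum_{U \in \calU} \bra{\psi} P_U \Bigl( \sum_{C \in \calH_U} b_C P_{\widetilde{C}} \Bigr) \ket{\psi} = \sum_{U \in \calU} \bigl\langle P_U \psi \,\big|\, v_U \bigr\rangle \mcom
\end{equation*}
where $\ket{v_U} := \sum_{C \in \calH_U} b_C P_{\widetilde{C}} \ket{\psi}$ and I used $P_U^\dagger = P_U$.

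Now I apply Cauchy-Schwarz twice. Squaring the outer sum and using the vector Cauchy-Schwarz over the index set $\calU$ gives
\begin{equation*}
\lambda_{\max}(\bH^*_\calI)^2 \leq \abs{\calU} \sum_{U \in \calU} \abs{\langle P_U \psi \mid v_U \rangle}^2 \leq \abs{\calU} \sum_{U \in \calU} \norm{P_U \ket{\psi}}_2^2 \cdot \norm{\ket{v_U}}_2^2 \mcom
\end{equation*}
and since $P_U$ is unitary, $\norm{P_U \ket{\psi}}_2 = 1$. Expanding the remaining squared norm and using that each $P_{\widetilde{C}}$ is Hermitian yields
\begin{equation*}
\norm{\ket{v_U}}_2^2 = \bra{\psi} \Bigl( \sum_{C \in \calH_U} b_C P_{\widetilde{C}} \Bigr)^2 \ket{\psi} = \sum_{C, C' \in \calH_U} b_C b_{C'} \bra{\psi} P_{\widetilde{C}} P_{\widetilde{C}'} \ket{\psi} \mper
\end{equation*}
Combining gives exactly the claimed bound, where the maximum over $\ket{\psi}$ on the right absorbs the specific optimizer used.

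I don't expect a real obstacle here: the only subtlety is verifying that the factorization $P_C = P_U P_{\widetilde{C}}$ is well-defined and that $P_U$ and $P_{\widetilde{C}}$ commute, both of which follow immediately from $U \sqsubseteq P_C$ (they agree on $\mathrm{supp}(U)$ so those qubits cancel via $P_U^2 = \Id$, and the remaining supports are disjoint). Everything else is a routine double application of Cauchy-Schwarz.
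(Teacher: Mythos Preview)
Your proof is correct and follows essentially the same route as the paper's: write $\lambda_{\max}(\bH^*_\calI)$ as the quadratic form at a maximizer, factor $P_C = P_U P_{\wt{C}}$ via the bipartite structure, then apply Cauchy--Schwarz first over the index set $\calU$ and second to the inner product $\langle P_U\psi \mid v_U\rangle$ using $\norm{P_U\ket{\psi}}_2 = 1$. Your added justification that $\lambda_{\max}(\bH^*_\calI)\ge 0$ (via tracelessness) is a nice touch the paper leaves implicit.
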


\begin{proof}
    We start by writing $\lambda_{\mathrm{max}}(\bH^*_\calI)$ as the maximum quadratic form and partition $\calH^*_\calI$ according to $\calU$. 
    \begin{align*}
        \lambda_{\mathrm{max}}(\bH^*_\calI)^2 &= \max_{\substack{\ket{\psi} \in (\C^2)^{\otimes n} \\ \ket{\psi} \text{ pure state}}} \bra{\psi}\sum_{U \in \calU} P_U\sum_{C \in \calH_U} b_C P_{\wt{C}}\ket{\psi}^2\\
        &= \max_{\substack{\ket{\psi} \in (\C^2)^{\otimes n} \\ \ket{\psi} \text{ pure state}}} \left(\sum_{U \in \calU}\bra{\psi} P_U \sum_{C \in \calH_U} b_C P_{\wt{C}}\ket{\psi}\right)^2\\
        &\leq \max_{\substack{\ket{\psi} \in (\C^2)^{\otimes n} \\ \ket{\psi} \text{ pure state}}} \abs{\calU} \sum_{U \in \calU}\bra{\psi} P_U \sum_{C \in \calH_U} b_C P_{\wt{C}}\ket{\psi}^2\\
        &\leq \max_{\substack{\ket{\psi} \in (\C^2)^{\otimes n} \\ \ket{\psi} \text{ pure state}}} \abs{\calU} \sum_{U \in \calU} \sum_{C, C' \in \calH_U} b_C b_{C'} \bra{\psi} P_{\wt{C}} P_{\wt{C}'}\ket{\psi}\mper
    \end{align*}
    In the third and fourth lines we applied the Cauchy-Schwarz inequality.
\end{proof}

Given the definition of bipartite instances and a tool to bound their value, our natural goal is to design a way to find non-trivial bipartite decompositions for \textit{arbitrary} Hamiltonian $k$-$\XOR$ instances, which we accomplish through the following.

\begin{lemma}[Regularity decomposition algorithm for Hamiltonian $k$-$\XOR$]
    \label{lem:decompositionalg} 
    There is an algorithm that takes as input a Hamiltonian $k$-$\XOR$ instance $\calI = (\calH, \{(P_C, b_C)\}_{C \in \calH})$ and outputs a partition of $\calI$ into subinstances $\calI^{(t)} = (\calH^{(t)}, \{(P_C, b_C)\}_{C \in \calH^{(t)}})$ and $t$-sparse $\calU^{(t)}$-bipartite decompositions $\{\calH^{(t)}_U\}_{U \in \calU^{(t)}}$ for each $\calH^{(t)}$ in time $n^{O(\ell)}$ with the guarantees:
    \begin{enumerate}
        \item \label{item:decompositionalg2} For $t \neq 1$ and all $U \in \calU^{(t)}$, $\abs{\calH^{(t)}_U} = \tau_t := \max(1, \left(\frac{3n}{\ell}\right)^{k/2-t}) \cdot 4k^2\varepsilon^{-2}$.
        \item \label{item:decompositionalg3} For all $U \in \calU^{(1)}$, $\abs{\calH^{(1)}_U} \leq \tau_1$.
        \item \label{item:decompositionalg4} For all $t$, $\abs{\calU^{(t)}} \leq \frac{2\abs{\calH}}{\tau_t}$.
        \item \label{item:decompositionalg5} $\calH$ is $\left(\frac{\varepsilon}{2k}, \ell\right)$-regular.
    \end{enumerate}
\end{lemma}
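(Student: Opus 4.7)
The plan is a top-down greedy peeling procedure. Initialize $\calH_{\mathrm{rem}} \leftarrow \calH$ and iterate over $t = k, k-1, \ldots, 2$ in decreasing order. At level $t$: while there exists $U \in \calP_t(n)$ for which at least $\tau_t$ constraints $C \in \calH_{\mathrm{rem}}$ satisfy $U \sqsubseteq P_C$, pick any such $U$, move exactly $\tau_t$ of those constraints into a new bucket $\calH^{(t)}_U$, append $U$ to $\calU^{(t)}$, and delete those constraints from $\calH_{\mathrm{rem}}$. For $t = 1$, first fold this same loop in with threshold $\tau_1$, then assign every constraint still in $\calH_{\mathrm{rem}}$ to a bucket $\calH^{(1)}_U$ keyed by a canonical weight-$1$ subpattern of $P_C$ (ties broken arbitrarily). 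Each iteration enumerates at most $3^t \binom{n}{t} \leq n^{O(\ell)}$ templates and removes at least one constraint, so the total running time is $n^{O(\ell)}$ and the output is clearly a partition of $\calH$.

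Properties~\ref{item:decompositionalg2} and~\ref{item:decompositionalg3} are immediate by construction: every level-$t \geq 2$ bucket has exactly $\tau_t$ elements, and every level-$1$ bucket has at most $\tau_1$ elements since after the $t = 1$ loop no weight-$1$ template has $\geq \tau_1$ containing constraints left. Property~\ref{item:decompositionalg4} is equally immediate: the buckets at each level are disjoint subsets of $\calH$, so $\tau_t \cdot \abs{\calU^{(t)}} \leq \abs{\calH}$, which is stronger than the claimed factor-$2$ bound.

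For property~\ref{item:decompositionalg5}, fix any bucket $\calH^{(t)}_U$ and any non-zero $W \in \calP(n)$ with $\abs{W} = s > t$. By the time $\calH^{(t)}_U$ is created, the algorithm has already finished its level-$s$ pass, so $\calH_{\mathrm{rem}}$ at that moment contains fewer than $\tau_s$ constraints with $W \sqsubseteq V$; since $\calH^{(t)}_U \subseteq \calH_{\mathrm{rem}}$ the same bound transfers to the bucket. It remains to check that this is at most the $(\varepsilon/(2k), \ell)$-regularity threshold $\max((3n/\ell)^{k/2-1-s}, 1) \cdot (2k/\varepsilon)^2$, where the prefactor $4k^2 = (2k)^2$ sitting inside $\tau_s$ is engineered exactly to absorb the rescaling of $\varepsilon$ to $\varepsilon/(2k)$.

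The main obstacle is an exponent mismatch: the peeling threshold carries the factor $(3n/\ell)^{k/2-s}$ while the regularity bound uses $(3n/\ell)^{k/2-1-s}$, so a direct transfer loses a factor of $3n/\ell$. Closing this gap requires an auxiliary pigeonhole step that passes from weight-$s$ templates to weight-$(s+1)$ templates inside the bucket: every constraint $V$ with $W \sqsubseteq V$ admits at most $k - s$ weight-$(s+1)$ extensions $W'$ with $W \sqsubset W' \sqsubseteq V$, and each such $W'$ appears in fewer than $\tau_{s+1}$ constraints after the level-$(s+1)$ pass. Combining these two facts shaves the exponent by exactly one level and recovers $(3n/\ell)^{k/2-1-s}$, with the residual $O(k)$ bookkeeping losses absorbed by the $4k^2$ prefactor inside $\tau_t$. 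This constant-juggling is the only real subtlety; the structure of the greedy peeler itself is routine.
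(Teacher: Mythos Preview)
Your greedy peeling algorithm is exactly the paper's, and Properties~\ref{item:decompositionalg2} and~\ref{item:decompositionalg3} are handled correctly. There are two genuine gaps.

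\textbf{Property~\ref{item:decompositionalg4} at $t=1$.} The residual buckets you create after the $t=1$ loop can have size strictly less than $\tau_1$, so the inequality $\tau_1 \cdot \abs{\calU^{(1)}} \leq \abs{\calH}$ does \emph{not} follow from disjointness. The paper handles this by separately bounding the number of residual buckets by $\abs{\calP_1(n)} = 3n$ and then invoking the refutation-threshold assumption $\abs{\calH} \geq \Omega(n\log n)\cdot \tau_1$ to get $3n \leq \abs{\calH}/\tau_1$; this is precisely where the factor of $2$ in the statement comes from.

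\textbf{Property~\ref{item:decompositionalg5}.} The exponent mismatch you flagged is a typo in the paper's definition of $(\varepsilon,\ell)$-regularity: the threshold exponent should be $k/2-\abs{W}$, not $k/2-1-\abs{W}$. The paper's own proof of this item, and the later edge-deletion analysis (\Cref{lem:edgedeletionanalysis}), both identify the $(\varepsilon/(2k),\ell)$-regularity threshold at weight $t'$ with $\tau_{t'}$, which is only consistent with the corrected exponent. Under that reading, the greedy bound $<\tau_{t'}$ you already derived is exactly what is needed, and no further step is required. Your proposed pigeonhole does not in fact recover a factor of $3n/\ell$: the number of weight-$(s+1)$ extensions $W'$ of a fixed $W$ is $3(n-s)$, not $k-s$ (the $k-s$ is the per-constraint count), so double-counting pairs $(C,W')$ and using $\abs{\{C : W'\sqsubseteq P_C\}} < \tau_{s+1}$ yields only
\[
\abs{\{C : W\sqsubseteq P_C\}} \;<\; \frac{3(n-s)}{k-s}\,\tau_{s+1} \;\approx\; \frac{\ell}{k-s}\,\tau_s,
\]
which is weaker than $\tau_s$ and far from the (mis-stated) target $(\ell/(3n))\tau_s$.
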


We prove \Cref{lem:decompositionalg} in \Cref{sec:decompositionalg}. In turns out given an instance with a bipartite decomposition, the resulting polynomial after applying Cauchy-Schwarz is able to be refuted with Kikuchi matrix machinery so long as they satisfy the regularity property guaranteed by \Cref{lem:decompositionalg}. Formally, we show the following.

\begin{lemma}[$\varepsilon^2$-refutation of semirandom bipartite Hamiltonian $k$-$\XOR$]
    \label{lem:mainrefutation}
    Fix $k \geq 2$, $\ell \geq k/2$, and $1 \leq t \leq k$. There is an algorithm that takes as input a Hamiltonian $k$-$\XOR$ instance $\calI^{(t)} = (\calH^{(t)}, \{(P_C, b_C)\}_{C \in \calH^{(t)}})$ with a $\calU^{(t)}$-bipartite decomposition $\calH^{(t)} = \{\calH^{(t)}_U\}_{U \in \calU^{(t)}}$ a subset of $\calH$ describing an associated operator $\bU_t = \frac{k^2\abs{\calU^{(t)}}}{4\abs{\calH}^2} \sum_{U \in \calU^{(t)}} \sum_{C, C' \in \calH_U} b_C b_{C'} P_{\wt{C}} P_{\wt{C}'}$ and outputs a certificate $\algval(\bU_t) \in \R$ in time $n^{O(\ell)}$ with the guarantees:
    \begin{enumerate}
        \item \label{item:ref1} $\algval(\bU_t) \geq \lambdamax(\bU_t)$.
        \item \label{item:ref2} $\algval(\bU_t) < \varepsilon^2$ with high probability (over the randomness in $b_C$) given:
        \begin{enumerate}
            \item \label{item:ref2a} The hypothesis of \Cref{item:refutation2} in \Cref{thm:refutation} holds.
            \item \label{item:ref2b} The output guarantees of \Cref{lem:decompositionalg} hold.
        \end{enumerate}
    \end{enumerate}
\end{lemma}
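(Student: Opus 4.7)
I would first split $\bU_t$ into a near-deterministic diagonal piece and a random off-diagonal piece. The $C = C'$ terms in the double sum contribute
\begin{equation*}
D_t := \frac{k^2 \abs{\calU^{(t)}}}{4 \abs{\calH}^2} \sum_{U \in \calU^{(t)}} \sum_{C \in \calH_U} b_C^2 \cdot \Id\mcom
\end{equation*}
since $P_{\wt{C}} P_{\wt{C}} = \Id$; let $R_t := \bU_t - D_t$ be the off-diagonal remainder. In the Rademacher case $b_C^2 = 1$, and using $\sum_U \abs{\calH_U} \leq \abs{\calH}$ together with $\abs{\calU^{(t)}} \leq 2\abs{\calH}/\tau_t$ and $\tau_t \geq 4 k^2 \varepsilon^{-2}$ from \Cref{lem:decompositionalg}, one immediately gets $\norm{D_t}_2 \leq \varepsilon^2/8$ deterministically; in the Gaussian case the same bound holds with high probability by standard $\chi^2$ concentration applied to $\sum_C b_C^2$. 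The main task therefore reduces to bounding $\lambda_{\mathrm{max}}(R_t) \leq 7\varepsilon^2/8$ with high probability.

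Next, I would build a Kikuchi matrix for $R_t$ in the style of \Cref{def:kikuchimatrix}: each ordered pair $(C, C')$ with $C \neq C'$ in some $\calH_U$ contributes a Pauli $P_{\wt{C}} P_{\wt{C}'}$ of weight at most $2(k-t)$ supported on $[n] \setminus \supp(U)$, with interaction $b_C b_{C'}$. The level-$\ell$ signed adjacency matrix $K_{R_t} = A^*_{R_t} \otimes \Id_{2^n}$ of the resulting weighted graph on $\calP_\ell(n)$ then satisfies an analogue of \Cref{obs:kikuchirelaxation}: a direct trace calculation (using the same pair-counting as in the even case, but now over the weight-$2(k-t)$ product Paulis) shows $\bra{\psi} R_t \ket{\psi} = (\Delta')^{-1}(\psi^{\odot \ell})^\dagger K_{R_t} \psi^{\odot \ell}$ for the appropriate pair count $\Delta'$. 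After degree regularization (\Cref{def:regularkikuchimatrix}), $\lambda_{\mathrm{max}}(R_t)$ is then controlled by $\norm{\Gamma^{-1/2} A^*_{R_t} \Gamma^{-1/2}}_2$ up to the same $\Tr(\Gamma)$ normalization as in the even case.

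To bound this regularized spectral norm I would invoke the trace moment method of \Cref{lem:spectralnorm} and \Cref{lem:countingwalks}. The essential new feature is that the random coefficients are products $b_C b_{C'}$: upon expanding $\E\sbra{\Tr((\Gamma^{-1} A^*_{R_t})^{2r})}$, only closed sequences of $2r$ pairs in which each \emph{individual} constraint $C$ appears an even total number of times across all pairs survive. I would encode each such surviving walk by (i) a template identifying which pair-indices share each constraint, (ii) a starting vertex $Q_0 \in \calP_\ell(n)$, and (iii) the sequence of first-appearance pairs, paying $N = \abs{\calP_\ell(n)}$ for the start and a $\Gamma^{-1}_{Q,Q}$-absorbed weight for each step as in \Cref{lem:countingwalks}. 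The main obstacle is counting walks in which a constraint is reused alongside a \emph{new} partner: here the number of admissible partners is not controlled by the raw degree, but by the number of $\calH_U$-members consistent with previously visited operators, which is exactly what the $(\varepsilon/(2k), \ell)$-regularity of \Cref{item:decompositionalg5} bounds. Combining these counts gives an estimate of the form $\E\sbra{\Tr((\Gamma^{-1} A^*_{R_t})^{2r})} \leq N \cdot (O(r)/d')^r$ for a suitable effective degree $d'$, and the hypothesis $\abs{\calH} \geq O(n)(n/\ell)^{k/2-1}\log(n)\varepsilon^{-4}$ makes $d'$ large enough that, taking $r = \Theta(\ell \log n)$ and applying Markov as in the proof of \Cref{lem:spectralnorm}, we obtain $\lambda_{\mathrm{max}}(R_t) \leq 7\varepsilon^2/8$ with high probability. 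Setting $\algval(\bU_t)$ to $\norm{D_t}_2$ plus the resulting Kikuchi spectral-norm certificate then gives both \Cref{item:ref1} and \Cref{item:ref2}.
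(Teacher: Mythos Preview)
Your high-level plan matches the paper's: split off the $C = C'$ diagonal and bound it using the $\abs{\calU^{(t)}} \leq 2\abs{\calH}/\tau_t$ estimate (with $\chi^2$ concentration in the Gaussian case), then control the off-diagonal piece by a Kikuchi spectral certificate analyzed via the trace moment method. However, two of your technical steps do not go through as stated.

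First, the even-arity Kikuchi matrix on $\calP_\ell(n)$ does not give a clean quadratic-form identity here. The product $P_{\wt{C}} P_{\wt{C}'}$ is only a Pauli up to a phase in $\{\pm 1,\pm i\}$, and its weight is $2(k-t)$ minus twice the overlap of $\supp(\wt{C})$ and $\supp(\wt{C}')$, which varies with $(C,C')$. Consequently the pair count ``$\Delta'$'' is not a single number but depends on $(C,C')$, and your claimed identity $\bra{\psi} R_t \ket{\psi} = (\Delta')^{-1}(\psi^{\odot \ell})^\dagger K_{R_t}\psi^{\odot \ell}$ fails. The paper fixes this by working on a \emph{different} vertex set $\calQ_\ell$ of pairs $(Q^{(1)},Q^{(2)})$ with $\abs{Q^{(1)}}+\abs{Q^{(2)}}=\ell$, and an edge relation $Q \xrightarrow{P_{\wt{C}},P_{\wt{C}'}} R$ that imposes separate conditions on $(Q^{(1)},R^{(1)})$ via $P_{\wt{C}}$ and on $(Q^{(2)},R^{(2)})$ via $P_{\wt{C}'}$ (\Cref{def:kikuchimatrixodd}). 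Because each of $P_{\wt{C}},P_{\wt{C}'}$ has fixed weight $k-t$, this yields a uniform count $\Delta^{(t)}$; the residual non-uniformity from commutation is absorbed into a bounded reweighting $\rho_{P,P'}\in[\tfrac12,1]$.

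Second, the role of regularity is not to bound partner counts inside the walk encoding directly. The trace-moment bound (\Cref{lem:spectralnormodd}) produces a factor $\sqrt{\eta_t \ell \log n / d^{(t)}}$ where $\eta_t$ is the \emph{local degree}: for a vertex $Q$ and a fixed $C$, the number of $C'$ with an incident $(C,C')$-edge. For $t \geq k/2$ the trivial bound $\eta_t \leq \tau_t$ suffices, but for $t < k/2$ it does not, and one must preprocess the Kikuchi graph by an edge-deletion step (\Cref{lem:edgedeletion}) that prunes local degree down to $O(1)^k \varepsilon^{-2}$ while removing at most half the edges of each type. The $(\varepsilon/(2k),\ell)$-regularity from \Cref{lem:decompositionalg} is used in the analysis of this deletion (\Cref{lem:edgedeletionanalysis}), not in the walk counting itself. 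Your sketch collapses this into the encoding argument, which would not yield the required exponent on $(n/\ell)$ when $t < k/2$.
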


With this, we have all the tools needed to build our refutation algorithm.

\begin{proof}[Proof of \Cref{thm:refutation} for odd $k$]
    To achieve an $\varepsilon$-refutation for an arbitrary odd-arity Hamiltonian $k$-$\XOR$ instance $\calI = (\calH, \{(P_C, b_C)\}_{C \in \calH})$, we begin by applying \Cref{lem:decompositionalg} to achieve a partition of $\calI$ into subinstances $\{\calI^{(t)}\}_{t \in [k]}$ and associated $\calU^{(t)}$-bipartite decompositions. We can rewrite the maximum eigenvalue of $\bH_\calI$ as follows.
    \begin{align*}
        \lambda_{\mathrm{max}}(\bH_\calI) &= \frac{1}{2} + \frac{1}{2 \abs{\calH}}\lambda_{\mathrm{max}}\left(\sum_{C \in \calH} b_CP_C\right)\\
        &\leq \frac{1}{2} + \frac{1}{2 \abs{\calH}}\sum_{t \in [k]} \lambda_{\mathrm{max}}\left(\sum_{C \in \calH^{(t)}} b_CP_C\right)\\
        &\leq \frac{1}{2} + \sum_{t \in [k]} \frac{1}{2 \abs{\calH}}\lambda_{\mathrm{max}}\left(\sum_{C \in \calH^{(t)}} b_CP_C\right)\\
        &:= \frac{1}{2} + \sum_{t \in [k]} \varepsilon^{(t)}\mper
    \end{align*}
    Our goal is now to bound each term $\varepsilon^{(t)}$ in the sum by $\frac{\varepsilon}{k}$. Applying the Cauchy-Schwarz trick to $\sum_{C \in \calH^{(t)}} b_CP_C$ yields an immediate upper bound
    \begin{equation*}
        (k\varepsilon^{(t)})^2 \leq \abs{\lambdamax \left(\frac{k^2\abs{\calU^{(t)}}}{4\abs{\calH}^2} \sum_{U \in \calU^{(t)}} \sum_{C, C' \in \calH_U} b_C b_{C'} P_{\wt{C}} P_{\wt{C}'}\right)} := \lambdamax(\bU_t)\mper
    \end{equation*}
    Since the output guarantees of \Cref{lem:decompositionalg} hold for our decomposition $\calU^{(t)}$, assuming the semirandom setting of \Cref{thm:refutation} allows us to invoke \Cref{lem:mainrefutation}, a Kikuchi matrix refutation of $\bU_t$ and conclude $(k\varepsilon^{(t)})^2 \leq \varepsilon^2$ and further $\lambdamax(\bH_\calI) \leq \frac{1}{2} + \varepsilon$ as desired.
\end{proof}

It suffices to prove the auxiliary lemmas, \Cref{lem:decompositionalg} and \Cref{lem:mainrefutation} to conclude this section.

\subsection{Hamiltonian $k$-$\XOR$ regularity decomposition}
\label{sec:decompositionalg}

We state here the regularity decomposition algorithm of \Cref{lem:decompositionalg}.

\begin{tcolorbox}[
    width=\textwidth,   
    colframe=black,  
    colback=white,   
    title=Hamiltonian $k$-$\XOR$ Regularity Decomposition Algorithm,
    colbacktitle=white, 
    coltitle=black,      
    fonttitle=\bfseries,
    center title,   
    enhanced,       
    frame hidden,           
    borderline={1pt}{0pt}{black},
    sharp corners,
    toptitle=2.5mm
]
\textbf{Input:} A Hamiltonian $k$-$\XOR$ instance $\calI = (\calH, \{(P_C, b_C)\}_{C \in \calH})$.\\

\textbf{Output:} A set of instances $\cbra{\calI^{(t)}}_{t \in [k]}$ satisfying the criteria of \Cref{lem:decompositionalg}.\\

\textbf{Algorithm:}
\begin{enumerate}
    \item Let $t = k$, and while $\exists U \in \calP_t(n)$ such that $\abs{\{P_C \mid C \in \calH, U \sqsubseteq P_C\}} \geq \tau_t := \max(1, (\frac{3n}{\ell})^{k/2-t}) \cdot 4k^2 \varepsilon^{-2}$, do the following. Otherwise, decrement $t$.
    \begin{enumerate}
        \item Let $\calH_U^{(t)}$ hold $C \in \calH$ for exactly $\tau_t$ such $P_C$ and move the set $\calH_U^{(t)}$ from $\calH$ to $\calH^{(t)}$.
    \end{enumerate}
    \item When $t = 0$, add all remaining $C \in \calH$ to $\calH^{(1)}_{(P_C)_1}$, where $(P_C)_1 \in \calP_1(n)$ is $P_C$ with all but its first non-trivial operator set to identity.
\end{enumerate}

\end{tcolorbox}

\begin{proof}[Proof of \Cref{lem:decompositionalg}]
    The $t$-sparsity of $\calU^{(t)}$ and \Cref{item:decompositionalg2} follow simply from the loop condition, which enforces that only $\tau_t$ sized sets with $\abs{U} = t$ get added to $\calH^{(t)}$. For \Cref{item:decompositionalg3}, note by the greediness of the algorithm any set added outside the for loop must have size less than $\tau_1$ otherwise it would have been added within.
    
    For \Cref{item:decompositionalg4}, assume $t > 1$ and note that by \Cref{item:decompositionalg2} we have $\abs{\calU^{(t)}} \leq \frac{\abs{\calH}}{\tau_t}$. When $t = 1$, we have two kinds of sets $\calH_U$, ones added in the for loop and those added outside. The number added in the for loop is $\leq \frac{\abs{\calH}}{\tau_1}$ following the case above. The number added outside is at most $\abs{\calP_1(n)} = 3n$. Since by assumption $\abs{\calH} \geq C n \log n \cdot \tau_1$ we have $3n \leq \frac{\abs{\calH}}{\tau_1}$ for $C \geq 1$, which gives in total $\abs{\calQ^{(1)}} \leq \frac{2 \abs{\calH}}{\tau_1}$.

    \Cref{item:decompositionalg5} follows by the greediness of the algorithm. Assume for sake of contradiction there is some partition $\calH^{(t)}$ which is not $\left(\frac{\varepsilon}{2k}, \ell\right)$-regular. By definition, there is some $\calH' \subseteq \calH_U^{(t)}$ and $W \in \calP(n)$ such that all $V \in \calH'$ have $W \sqsubseteq V$ and moreover $\abs{W} = t' > t$ and $\abs{\calH'} \geq \tau_{t'}$. Since iteration $t'$ happens before $t$, such a set would have been available on iteration $t'$ and would have been added then instead, a contradiction.
\end{proof}

\subsection{Odd-arity Kikuchi matrices}

In this section, we prove our main technical component, \Cref{lem:mainrefutation}. The main component is an odd-arity Kikuchi matrix for the bipartite operators appearing as a result of applying the Cauchy-Schwarz trick, \Cref{lem:cauchyschwarz}. Our construction is inspired by the odd-arity constructions of \cite{GuruswamiKM22, HsiehKM23, KocurekM25}, but requires modifications inherent to the Hamiltonian setting.

Recall our goal is to bound the maximum eigenvalue of an operator $\bU_t \propto \sum_{U \in \calU^{(t)}} \sum_{C, C' \in \calH_U} b_C b_{C'} P_{\wt{C}} P_{\wt{C}'}$ built from a Hamiltonian $k$-$\XOR$ instance $\calI^{(t)} = (\calH^{(t)}, \{(P_C, b_C)\}_{C \in \calH^{(t)}})$ with a $\calU^{(t)}$-bipartite decomposition $\calH^{(t)} = \{\calH^{(t)}_U\}_{U \in \calU^{(t)}}$. Towards this, we define the following odd-arity Kikuchi matrix, meant to capture the optimization of $\bU_t$ over quantum states.

\begin{definition}[Odd-arity Kikuchi matrix]
    \label{def:kikuchimatrixodd}
    Let $k/2 \leq \ell \leq n/2$ be a parameter and let $N = 3^\ell {2n \choose \ell}$. For each pair $(P, P') \in \calP_{k-t}(n)$, we define a matrix $A_{P, P'} \in \R^{N \times N}$ as follows. Identifying $N$ with $\calQ_\ell$ the set of pairs $(Q^{(1)}, Q^{(2)}) \in \calP(n)^{\otimes 2}$ with $\abs{Q^{(1)}} + \abs{Q^{(2)}} = \ell$, we define the entry $(Q,R) \in \calP_\ell(n)^{\otimes 2}$ by
        \begin{equation*}
            A_{P, P'}(Q, R) = \begin{cases}
                          1  & Q\xrightarrow{\text{$P, P'$}} R\\
                          0 & \text{otherwise}
                        \end{cases}
        \end{equation*}
        where we say $Q \xrightarrow{\text{$P, P'$}} R$ if the following conditions hold
        \begin{enumerate}
            \item $(Q^{(1)})^\dagger R^{(1)} = P$ and $(Q^{(2)})^\dagger R^{(2)} = P'$.
            \item $\abs{\supp(Q^{(1)}) \oplus \supp(P)} = \lfloor\frac{k-t}{2}\rfloor$ and $\abs{\supp(Q^{(2)}) \oplus \supp(P')} = \lceil \frac{k-t}{2} \rceil$ or vice versa.
            \item $(Q^{(2)})^\dagger(Q^{(1)})^\dagger R^{(1)} R^{(2)} = PP'$.
        \end{enumerate}
        Let $\mathsf{Commuting}_{P, P'}$ be the set of pairs $(Q, R) \in \calQ_\ell^{\otimes 2}$ for which $Q \xrightarrow{\text{$P, P'$}} R$ and let $\mathsf{Anticommuting}_{P, P'}$ be the set satisfying the first two items but $(Q^{(2)})^\dagger(Q^{(1)})^\dagger R^{(1)} R^{(2)} = -PP'$. Define $\rho_{P, P'} = \frac{1}{2} \frac{\abs{\mathsf{Commuting}_{P, P'}} + \abs{\mathsf{Anticommuting}_{P, P'}}}{\abs{\mathsf{Commuting}_{P, P'}}}$.
        
    Let $\bU_t \propto \sum_{U \in \calU^{(t)}} \sum_{C, C' \in \calH_U} b_C b_{C'} P_{\wt{C}} P_{\wt{C}'}$ be built from a Hamiltonian $k$-$\XOR$ instance\\ $\calI^{(t)} = (\calH^{(t)}, \{(P_C, b_C)\}_{C \in \calH^{(t)}})$ with a $\calU^{(t)}$-bipartite decomposition $\calH^{(t)} = \{\calH^{(t)}_U\}_{U \in \calU^{(t)}}$. We define the level-$\ell$ Kikuchi matrix for this instance to be $K_{\bU_t} = \sum_{U \in \calU^{(t)}}  \sum_{C \neq C' \in \calH^{(t)}_U} \rho_{P_{C}, P_{C'}}  b_C b_{C'} A_{P_{\wt{C}}, P_{\wt{C}'}} \otimes \Id_{2^n}$ where again $P_{\wt{C}} = U^\dagger P_C$ for $C \in \calH^{(t)}_U$. We shorthand $A_{P_{\wt{C}}, P_{\wt{C}'}} := A_{C, C'}^U$ and refer to the graph of $A_{\bU_t} = \sum_{U \in \calU^{(t)}}  \sum_{C \neq C' \in \calH^{(t)}_U} \rho_{P_{\wt{C}}, P_{\wt{C}'}}A_{C, C'}^U$ the underlying adjacency matrix as the Kikuchi graph and $A_{\bU_t}^* = \sum_{U \in \calU^{(t)}}  \sum_{C \neq C' \in \calH^{(t)}_U} \rho_{P_{\wt{C}}, P_{\wt{C}'}} b_C b_{C'} A_{C, C'}^U$ the signed version.
\end{definition}

\begin{observation}
    \label{obs:kikuchirelaxationodd}
    Let $\bU_t = \frac{k^2\abs{\calU^{(t)}}}{4\abs{\calH}^2} \sum_{U \in \calU^{(t)}} \sum_{C, C' \in \calH_U} b_C b_{C'} P_{\wt{C}} P_{\wt{C}'}$ for some Hamiltonian $k$-$\XOR$ instance $\calI^{(t)} = (\calH^{(t)}, \{(P_C, b_C)\}_{C \in \calH^{(t)}})$ with a $\calU^{(t)}$-bipartite decomposition $\calH^{(t)} = \{\calH^{(t)}_U\}_{U \in \calU^{(t)}}$. Let $\ket{\psi} \in (\C^2)^{\otimes n}$ be a pure state. Consider $\psi^{\odot \ell}$ in $((\C^2)^{\otimes n})^N$ defined by $\psi^{\odot \ell}_{(Q, Q')} = Q^\dagger Q'\ket{\psi}$ for $(Q, Q') \in \calQ_\ell$. Then
    \begin{equation*}
        \bra{\psi}\bU_t\ket{\psi} = \frac{k^2\abs{\calU^{(t)}}}{\abs{\calH}^2} \sum_{C \in \calH^{(t)}} b_C^2 + \frac{k^2\abs{\calU^{(t)}}}{4\Delta^{(t)}\abs{\calH}^2} (\psi^{\odot \ell})^\dagger K_{\bU_t} \psi^{\odot \ell}\mcom
    \end{equation*}
    where $\Delta^{(t)} := \frac{1}{2}{k-t \choose \lceil \frac{k-t}{2} \rceil}{k-t \choose \lfloor \frac{k-t}{2} \rfloor}{2n-2(k-t) \choose \ell-k-t} \cdot 3^{\ell - k-t} \cdot 2^{\mathbbm{1}(\text{$k-t$ odd})}$
\end{observation}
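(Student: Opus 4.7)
The plan is to mirror the proof of \Cref{obs:kikuchirelaxation} from the even case, splitting the sum defining $\bU_t$ into its diagonal ($C = C'$) and off-diagonal ($C \neq C'$) contributions. The diagonal is immediate: each $P_{\wt{C}}$ is a Pauli, so $P_{\wt{C}}^2 = \Id$, and the $C = C'$ terms contribute $\sum_{C \in \calH^{(t)}} b_C^2 \cdot \Id$, which after multiplication by the overall prefactor of $\bU_t$ accounts for the first term on the right-hand side of the stated identity.

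For the off-diagonal piece I would expand $(\psi^{\odot\ell})^\dagger K_{\bU_t} \psi^{\odot\ell}$ as a sum over the edges of the Kikuchi graph. Writing pairs in $\calQ_\ell$ as $(Q^{(1)}, Q^{(2)})$ and using Hermiticity of the Pauli operators, each matrix entry contributes
\begin{equation*}
(\psi^{\odot\ell}_{(Q^{(1)}, Q^{(2)})})^\dagger \psi^{\odot\ell}_{(R^{(1)}, R^{(2)})} = \bra{\psi} Q^{(2)} Q^{(1)} R^{(1)} R^{(2)} \ket{\psi}\mper
\end{equation*}
Applying conditions 1 and 2 of the edge relation $Q \xrightarrow{P_{\wt{C}}, P_{\wt{C}'}} R$, namely $(Q^{(1)})^\dagger R^{(1)} = P_{\wt{C}}$ and $(Q^{(2)})^\dagger R^{(2)} = P_{\wt{C}'}$, together with involutivity $(Q^{(i)})^2 = \Id$, the inner operator collapses to $\epsilon \cdot P_{\wt{C}} P_{\wt{C}'}$ where $\epsilon \in \{\pm 1\}$ records whether $Q^{(2)}$ commutes with $P_{\wt{C}}$. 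Condition 3 of the edge relation is precisely $\epsilon = +1$, so each edge of $A_{C, C'}^U$ contributes $+\bra{\psi} P_{\wt{C}} P_{\wt{C}'} \ket{\psi}$, giving a total multiplicity $\abs{\mathsf{Commuting}_{P_{\wt{C}}, P_{\wt{C}'}}}$ on each off-diagonal term.

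The normalization $\rho_{P, P'}$ is engineered precisely so that $\rho_{P, P'} \cdot \abs{\mathsf{Commuting}_{P, P'}} = \frac{1}{2}(\abs{\mathsf{Commuting}_{P, P'}} + \abs{\mathsf{Anticommuting}_{P, P'}}) = \Delta^{(t)}$, a quantity depending only on $\abs{P} = \abs{P'} = k - t$ and $\ell$. With this identity, every $(C, C')$ pair contributes the common multiplicity $\Delta^{(t)}$, and dividing by $\Delta^{(t)}$ recovers the off-diagonal sum $\sum_{U} \sum_{C \neq C'} b_C b_{C'} \bra{\psi} P_{\wt{C}} P_{\wt{C}'} \ket{\psi}$ appearing in $\bra{\psi}\bU_t\ket{\psi}$. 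The main obstacle is verifying this counting identity: I would enumerate quadruples $(Q^{(1)}, Q^{(2)}, R^{(1)}, R^{(2)})$ by first choosing how the $k-t$ qubits of $\supp(P_{\wt{C}})$ and of $\supp(P_{\wt{C}'})$ split between $Q^{(i)}$ and $R^{(i)}$ in the proportions demanded by condition 2, contributing $\binom{k-t}{\lceil (k-t)/2 \rceil}\binom{k-t}{\lfloor (k-t)/2 \rfloor}$; then choosing which $\ell - (k-t)$ of the $2n - 2(k-t)$ remaining free qubit slots are jointly non-trivial together with a non-identity Pauli assignment, contributing $\binom{2n - 2(k-t)}{\ell - (k-t)} \cdot 3^{\ell - (k-t)}$; and finally a factor of $2^{\mathbbm{1}(k-t \text{ odd})}$ from the ``or vice versa'' symmetry in condition 2. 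The crucial subtlety is that this total must be independent of the specific $P_{\wt{C}}, P_{\wt{C}'}$ chosen (depending only on their weight), which is what legitimizes pulling the uniform scalar $\Delta^{(t)}$ out of the sum.
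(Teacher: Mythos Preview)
Your proposal is correct and follows essentially the same route as the paper's proof: split $\bU_t$ into its $C=C'$ diagonal (giving the constant term via $P_{\wt C}^2=\Id$) and $C\neq C'$ off-diagonal, expand the Kikuchi quadratic form edge by edge, use conditions 1--3 to collapse each edge contribution to $\bra{\psi}P_{\wt C}P_{\wt C'}\ket{\psi}$, and then observe that $\rho_{P,P'}\cdot\lvert\mathsf{Commuting}_{P,P'}\rvert=\tfrac12(\lvert\mathsf{Commuting}_{P,P'}\rvert+\lvert\mathsf{Anticommuting}_{P,P'}\rvert)$ is the uniform count $\Delta^{(t)}$. Your enumeration of the quadruples matches the paper's counting argument essentially verbatim; just note that what you enumerate is $\lvert\mathsf{Commuting}\rvert+\lvert\mathsf{Anticommuting}\rvert=2\Delta^{(t)}$, so the leading $\tfrac12$ in the stated formula for $\Delta^{(t)}$ is absorbed there.
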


\begin{proof}
    For any Kikuchi matrix we compute
    \begin{align*}
        (\psi^{\odot \ell})^\dagger K_{\bU_t} \psi^{\odot \ell}
        &= \tr\left(\left(\sum_{U \in \calU^{(t)}}  \sum_{C \neq C' \in \calH^{(t)}_U} \rho_{P_{\wt{C}}, P_{\wt{C}'}} b_C b_{C'} A_{C, C'}^U \otimes \Id_{2^n}\right) \cdot \psi^{\odot \ell} (\psi^{\odot \ell})^\dagger\right) \\
        &= \sum_{(Q, R) \in \calQ_\ell^{\otimes 2}} \sum_{U \in \calU^{(t)}} \sum_{C \neq C' \in \calH^{(t)}_U} \rho_{P_{\wt{C}}, P_{\wt{C}'}} b_C b_{C'} \cdot \tr(((A_{C, C'}^U)_{Q, R} \cdot \Id_{2^n}) \cdot \psi_Q^{\odot \ell} (\psi_R^{\odot \ell})^\dagger)\\
        &= \sum_{(Q, R) \in \calQ_\ell^{\otimes 2}} \sum_{U \in \calU^{(t)}} \sum_{C \neq C' \in \calH^{(t)}_U} \rho_{P_{\wt{C}}, P_{\wt{C}'}} b_C b_{C'} \cdot (A_{C,C'}^U)_{Q,R} \cdot \tr((Q^{(1)})^\dagger Q^{(2)} \ket{\psi} \bra{\psi} (R^{(2)})^\dagger R^{(1)})\\
        &= \sum_{(Q, R) \in \calQ_\ell^{\otimes 2}} \sum_{U \in \calU^{(t)}}\sum_{C \neq C' \in \calH^{(t)}_U} \mathbbm{1}(Q \xrightarrow{\text{$P$, $P'$}} R) \cdot \rho_{P_{\wt{C}}, P_{\wt{C}'}} b_C b_{C'} \bra{\psi} P_C P_{C'}\ket{\psi}.
    \end{align*}
    
    Each term appears once for every pair $(Q, R)$ with $Q\xrightarrow{\text{$P_{\wt{C}}, P_{\wt{C}'}$}} R$ giving a count $\Delta^{(t)}_{P_{\wt{C}}, P_{\wt{C}'}}$, but note this is exactly the set $\mathsf{Commuting}_{P_{\wt{C}}, P_{\wt{C}'}}$, so $\Delta^{(t)}_{P_{\wt{C}}, P_{\wt{C}'}} \rho_{P_{\wt{C}}, P_{\wt{C}'}} = \frac{1}{2} (\abs{\mathsf{Commuting}_{P_{\wt{C}}, P_{\wt{C}'}}} + \abs{\mathsf{Anticommuting}_{P_{\wt{C}}, P_{\wt{C}'}}})$. We claim this quantity is $\Delta^{(t)} := \frac{1}{2}{k-t \choose \lceil \frac{k-t}{2} \rceil}{k-t \choose \lfloor \frac{k-t}{2} \rfloor}{2n-k-t \choose \ell-k-t} \cdot 3^{\ell-k-t} \cdot 2^{\mathbbm{1}(\text{$k-t$ odd})}$ regardless of the choice of $P_{\wt{C}}, P_{\wt{C}'}$. Plugging this directly into the equation above yields the result.
    
    To justify the value of $\Delta^{(t)}$, we fix an arbitrary pair $P_{\wt{C}}, P_{\wt{C}'}$ and show how to count. We first specify the supports of $Q^{(1)}$, $Q^{(2)}$, $R^{(1)}$, and $R^{(2)}$ and then the non-zero elements $Q^{(1)}_i$ for $i \in \supp(Q^{(1)})$ (and the same for $Q^{(2)}$, $R^{(1)}$, and $R^{(2)}$). The condition $\abs{\supp(Q^{(1)}) \oplus \supp(P_{\wt{C}})} = \lfloor\frac{k-t}{2}\rfloor$ and $\abs{\supp(Q^{(2)}) \oplus \supp(P_{\wt{C}'})} = \lceil \frac{k-t}{2} \rceil$ or the other way around gives $2$ options when $k-t$ is odd (otherwise the conditions are the same). Without loss of generality we let $\abs{\supp(Q^{(1)}) \oplus \supp(P_{\wt{C}})} = \lfloor\frac{k-t}{2}\rfloor$, then there are ${k-t \choose \lfloor \frac{k-t}{2} \rfloor}$ choices for $\supp(Q^{(1)}) \cap \supp(P_{\wt{C}})$. Similarly we have ${k-t \choose \lceil \frac{k-t}{2} \rceil}$ ways to pick the intersection $\supp(Q^{(2)}) \cap \supp(P_{\wt{C}'})$. Among $\supp(Q^{(1)}) \setminus \supp(P_{\wt{C}})$ and $\supp(Q^{(2)}) \setminus \supp(P_{\wt{C}'})$, there are then $\ell - k-t$ indices, and each falls outside of $[n] \setminus \supp(P_{\wt{C}})$ and $[n] \setminus \supp(P_{\wt{C}'})$, giving ${2n-2k-t \choose \ell - k-t}$ options. The condition $(Q^{(1)})^\dagger R^{(1)} = P_{\wt{C}}$ enforces that $\supp(Q^{(1)}) \oplus \supp(R^{(1)}) = \supp(P_{\wt{C}})$ which requires $\supp(R^{(1)}) \cap \supp(P_{\wt{C}}) = \supp(P_{\wt{C}}) \setminus (\supp(Q^{(1)}) \cap \supp(P_{\wt{C}}))$ and $\supp(R^{(1)}) \setminus \supp(P_{\wt{C}}) = \supp(Q^{(1)}) \setminus \supp(P_{\wt{C}})$, fully determining $\supp(R^{(1)})$. Likewise $\supp(R^{(2)})$ is fully determined. Now we specify the actual values. Note for $i \in \supp(Q^{(1)}) \cap \supp(P_{\wt{C}})$ we have $Q^{(1)}_i = (P_{\wt{C}})_i$, and likewise for the other intersections. For the remaining $\ell - k-t$ non-zero entries across $Q^{(1)}$ and $Q^{(2)}$, we may choose any non-trivial Paulis $X, Y, Z$. The corresponding entries in $R^{(1)}$ and $R^{(2)}$ are then determined as they must cancel, totaling to $3^{\ell -k-t}$ choices. Combining these counts yields exactly $\Delta^{(t)}$ above.

    Now all we are missing is the constant term $\frac{k^2\abs{\calU^{(t)}}}{4\abs{\calH}^2} \sum_{U \in \calU^{(t)}} \sum_{C \in \calH_U} b_C^2 P_{\wt{C}} P_{\wt{C}}$, which, since the Paulis are self-inverse, and is exactly the constant term in the statement of the theorem. 
\end{proof}

As before, we will define and work with a degree-regularized Kikuchi matrix (which is defined analogously to \Cref{def:regularkikuchimatrix}) and utilize a few basic facts about the degree in the Kikuchi graph.

The first fact says the $\rho$-reweighting in our matrix, which is necessary to balance the contribution from each local term, never truly blows up the degree of the graph.

\begin{proposition}
    For any pair $P, P' \in \calP(n)$, recall $\rho_{P, P'} = \frac{1}{2} \frac{\abs{\mathsf{Commuting}_{P, P'}} + \abs{\mathsf{Anticommuting}_{P, P'}}}{\abs{\mathsf{Commuting}_{P, P'}}}$. We have $\frac{1}{2} \leq \rho_{P, P'} \leq 1$.
\end{proposition}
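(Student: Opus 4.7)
The lower bound $\rho_{P,P'} \geq \tfrac{1}{2}$ is immediate from $|\mathsf{Anticommuting}_{P,P'}| \geq 0$, which gives $\rho_{P,P'} \geq \tfrac{1}{2} \cdot \tfrac{|\mathsf{Commuting}_{P,P'}|}{|\mathsf{Commuting}_{P,P'}|} = \tfrac{1}{2}$. The substance of the proposition is the upper bound, which is equivalent to $|\mathsf{Anticommuting}_{P,P'}| \leq |\mathsf{Commuting}_{P,P'}|$. My plan is to prove this by exhibiting an injection $\iota \colon \mathsf{Anticommuting}_{P,P'} \hookrightarrow \mathsf{Commuting}_{P,P'}$.

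First I would reduce the sign condition to a statement purely about the commutator of $Q^{(2)}$ with $P$, mirroring the calculation carried out inside the proof of \Cref{obs:kikuchirelaxationodd}: conditions (1) and (2) and $R^{(i)} = Q^{(i)} \cdot (\text{local target})$ imply
\[
(Q^{(2)})^\dagger (Q^{(1)})^\dagger R^{(1)} R^{(2)} \;=\; Q^{(2)} P Q^{(2)} P' \;=\; \pm P P'\mcom
\]
where the sign is $+$ iff $[Q^{(2)}, P] = 0$ as $n$-qubit Paulis, using $Q^{(2)} P Q^{(2)} = \pm P$. Hence membership of a pair $(Q,R)$ in $\mathsf{Commuting}_{P, P'}$ versus $\mathsf{Anticommuting}_{P, P'}$ is determined solely by whether $Q^{(2)}$ commutes with $P$, which in turn is governed by the parity of the positions $i \in \supp(Q^{(2)}) \cap \supp(P)$ at which $Q^{(2)}_i$ and $P_i$ differ as single-qubit Paulis.

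Second, I would define $\iota$ by a local toggle. For $(Q, R) \in \mathsf{Anticommuting}_{P,P'}$ there is an odd number of positions in $\supp(Q^{(2)}) \cap \supp(P)$ at which $Q^{(2)}$ anticommutes with $P$, so a smallest such $i^*$ is well-defined. I would then modify the pair $(Q^{(2)}_{i^*}, R^{(2)}_{i^*})$ so as to flip the anticommutation at $i^*$ into a commutation, while preserving (a) the algebraic identity $Q^{(2)} R^{(2)} = P'$ as a pure Pauli and (b) the support-size constraint $|\supp(Q^{(2)}) \oplus \supp(P')|$ from condition (2). Injectivity would follow because $i^*$ is recoverable from the image as the unique position where the toggle was applied: the image $(Q', R')$ has $[(Q')^{(2)}, P] = 0$ with exactly one fewer anticommuting position in $\supp((Q')^{(2)}) \cap \supp(P)$ than $(Q, R)$.

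The main obstacle I anticipate is the global phase in the product $Q^{(2)} R^{(2)}$: this product must be exactly $P'$, not $-P'$ or $\pm i P'$, so flipping $Q^{(2)}_{i^*}$ alone generically introduces a $\pm i$ factor at qubit $i^*$ and $R^{(2)}_{i^*}$ must be adjusted in tandem to cancel it. I would handle this via a case split on whether $i^* \in \supp(P) \cap \supp(P')$ (where the flip can be implemented by a joint swap of both $Q^{(2)}_{i^*}$ and $R^{(2)}_{i^*}$ among the three nontrivial Paulis at $i^*$) or $i^* \in \supp(P) \setminus \supp(P')$ (where $R^{(2)}_{i^*} = Q^{(2)}_{i^*}$ and the flip must be absorbed by a coordinated move on a partner qubit outside $\supp(P) \cup \supp(P')$, which exists by the assumption $\ell \geq k/2 \geq \lceil(k-t)/2\rceil$). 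Checking across this case split that $\iota$ is well-defined and injective, and in particular that the support parities of $Q^{(2)}$ and $R^{(2)}$ with respect to $\supp(P')$ are preserved, is the core technical content.
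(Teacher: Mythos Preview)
Your reduction of the sign in condition (3) to whether $Q^{(2)}$ commutes or anticommutes with $P$ is correct and is exactly the first step the paper takes. From there, however, the paper follows a completely different and much shorter route: rather than toggling a single qubit, it sends $(Q,R)\mapsto(R,Q)$. The point is that condition (3) for the swapped pair reads
\[
(R^{(2)})^\dagger (R^{(1)})^\dagger Q^{(1)} Q^{(2)} \;=\; \big((Q^{(2)})^\dagger(Q^{(1)})^\dagger R^{(1)} R^{(2)}\big)^\dagger \;=\; -P'P,
\]
and the paper argues (splitting on whether $P$ and $P'$ commute or anticommute) that this lands in $\mathsf{Commuting}_{P,P'}$. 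Conditions (1) and (2) are symmetric in $Q$ and $R$, so no support bookkeeping or qubit-level case analysis is needed at all. Your toggle strategy misses this global symmetry entirely.

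Independently of that comparison, your proposal has real gaps. In case (a), when $i^*\in\supp(P)\cap\supp(P')$ lies in $\supp(Q^{(2)})$, the structure of valid pairs established in the $\Delta^{(t)}$ count of \Cref{obs:kikuchirelaxationodd} forces $Q^{(2)}_{i^*}=P'_{i^*}$ and $R^{(2)}_{i^*}=\Id$: the supports of $Q^{(2)}$ and $R^{(2)}$ \emph{partition} $\supp(P')$, so exactly one of them is nontrivial at $i^*$. There is thus no ``joint swap among three nontrivial Paulis'' available; changing $Q^{(2)}_{i^*}$ either breaks $(Q^{(2)})^\dagger R^{(2)}=P'$ or changes $|\supp(Q^{(2)})|$ and hence condition (2). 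In case (b), the injectivity claim fails as written: after replacing $Q^{(2)}_{i^*}$ by a Pauli commuting with $P_{i^*}$ (necessarily $P_{i^*}$ itself, since that is the unique commuting nontrivial choice), the image carries no feature that distinguishes $i^*$ from any other position in $\supp((Q')^{(2)})\cap\supp(P)$ at which $(Q')^{(2)}$ already commuted with $P$, so $i^*$ cannot be recovered from the image alone. Finally, the asserted existence of a ``partner qubit outside $\supp(P)\cup\supp(P')$'' inside $\supp(Q^{(2)})$ does not follow from $\ell\geq k/2$: the free $\ell-(k-t)$ positions may all sit in $Q^{(1)}$ rather than $Q^{(2)}$.
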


\begin{proof}
    The lower bound follows trivially, so we show the upper bound. To do this, we just show for any $(Q, R) \in \mathsf{Anticommuting}_{P, P'}$ can be paired with a pair in $\mathsf{Commuting}_{P, P'}$, establishing $\abs{\mathsf{Anticommuting}_{P, P'}} \leq \abs{\mathsf{Commuting}_{P, P'}}$ and our result. Recall for any such pair, we have $(Q^{(2)})^\dagger(Q^{(1)})^\dagger R^{(1)} R^{(2)} = -PP'$, which implies $(Q^{(2)})^\dagger P R^{(2)} = -PP'$. Assuming $P$ and $P'$ commute, this implies $P$ anti-commutes with $Q^{(2)}$ and commutes with $R^{(2)}$. We can confirm simply switching the order of $Q$ and $R$ preserves the other criteria for $R \xrightarrow{P, P'} Q$ but necessarily satisfies $(R^{(1)})^\dagger (R^{(2)})^\dagger Q^{(1)} Q^{(2)} = PP'$, since the commutation relations described above imply $(R^{(2)})^\dagger P Q^{(2)} = PP'$. This yields $(R, Q) \in \abs{\mathsf{Commuting}_{P, P'}}$ and is clearly a bijective mapping. In the case $P$ and $P'$ anti-commute, the exact same argument goes through with commutation relations swapped.
\end{proof}

An important consequence of the above proof is the Kikuchi matrix is Hermitian, since $Q \xrightarrow{P, P'} R$ implies $R \xrightarrow{P', P} Q$, which we need to apply the trace moment method properly. Next, we lower bound the average degree in the Kikuchi graph.

\begin{observation}
    \label{fact:avgdegreeodd}
    Let $\bU_t = \frac{k^2\abs{\calU^{(t)}}}{4\abs{\calH}^2} \sum_{U \in \calU^{(t)}} \sum_{C, C' \in \calH_U} b_C b_{C'} P_{\wt{C}} P_{\wt{C}'}$ for some Hamiltonian $k$-$\XOR$ instance $\calI^{(t)} = (\calH^{(t)}, \{(P_C, b_C)\}_{C \in \calH^{(t)}})$ with a $\calU^{(t)}$-bipartite decomposition $\calH^{(t)} = \{\calH^{(t)}_U\}_{U \in \calU^{(t)}}$ and denote by $K_{\bU_t}$ the associated Kikuchi matrix. For a vertex $(Q^{(1)}, Q^{(2)}) \in \calQ_\ell$ we define the weighted graph degree according to the Kikuchi graph $A_{\bU_t}$. Then $d^{(t)} := \E_{(Q^{(1)}, Q^{(2)}) \sim \calQ_\ell}[\deg(P)] \geq \frac{1}{2}\left({\frac{\ell}{6n}}\right)^{k-t} \cdot \sum_{U \in \calU^{(t)}} {\abs{\calH^{(t)}_U} \choose 2}$.
\end{observation}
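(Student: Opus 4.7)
The argument is essentially the same bookkeeping as in the even case (\Cref{fact:avgdegree}), but with the counting constant $\Delta^{(t)}$ from the proof of \Cref{obs:kikuchirelaxationodd} in place of $\Delta$. The plan is to compute the sum of all entries of the (symmetric) Kikuchi adjacency matrix $A_{\bU_t}$, divide by $N = 3^\ell \binom{2n}{\ell}$ to obtain the average degree, and then apply \Cref{fact:binomest} to produce the claimed $(\ell/6n)^{k-t}$ factor.

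First I would observe that each ordered pair $(C,C')$ with $C \neq C' \in \calH^{(t)}_U$ contributes a single term $\rho_{P_{\wt C}, P_{\wt{C}'}} \, A^U_{C,C'}$ to $A_{\bU_t}$, and the total number of nonzero entries of $A^U_{C,C'}$ is exactly $|\mathsf{Commuting}_{P_{\wt C}, P_{\wt{C}'}}|$. By the definition of $\rho$ and the counting computation already carried out inside the proof of \Cref{obs:kikuchirelaxationodd}, the product
\begin{equation*}
\rho_{P_{\wt C}, P_{\wt{C}'}} \cdot |\mathsf{Commuting}_{P_{\wt C}, P_{\wt{C}'}}| \;=\; \tfrac{1}{2}\bigl(|\mathsf{Commuting}_{P_{\wt C}, P_{\wt{C}'}}| + |\mathsf{Anticommuting}_{P_{\wt C}, P_{\wt{C}'}}|\bigr) \;=\; \Delta^{(t)}
\end{equation*}
is the same constant $\Delta^{(t)}$ for every choice of $C, C'$. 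Summing over all ordered pairs and all $U \in \calU^{(t)}$ gives
\begin{equation*}
\sum_{Q,R \in \calQ_\ell} (A_{\bU_t})_{Q,R} \;=\; \Delta^{(t)} \cdot 2 \sum_{U \in \calU^{(t)}} \binom{|\calH^{(t)}_U|}{2}.
\end{equation*}

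Dividing by $N$ gives the average row sum, i.e., the average degree:
\begin{equation*}
d^{(t)} \;=\; \frac{2\Delta^{(t)}}{N}\sum_{U \in \calU^{(t)}} \binom{|\calH^{(t)}_U|}{2}.
\end{equation*}
It remains to bound $\Delta^{(t)}/N$. The binomial coefficients $\binom{k-t}{\lceil(k-t)/2\rceil}\binom{k-t}{\lfloor(k-t)/2\rfloor}$ are $\geq 1$, and $3^{\ell - (k-t)}/3^\ell = 3^{-(k-t)}$, so
\begin{equation*}
\frac{\Delta^{(t)}}{N} \;\geq\; \frac{1}{2} \cdot 3^{-(k-t)} \cdot \frac{\binom{2n - 2(k-t)}{\ell - (k-t)}}{\binom{2n}{\ell}}.
\end{equation*}
I would then split the binomial ratio through $\binom{2n - 2(k-t)}{\ell}$ and invoke \Cref{fact:binomest} twice (once with modulus $2n - 2(k-t)$ and shift $q = k-t$, once with modulus $2n$ and shift $q' = 2(k-t)$) to obtain
\begin{equation*}
\frac{\binom{2n - 2(k-t)}{\ell - (k-t)}}{\binom{2n}{\ell}} \;=\; \Theta\!\left(\left(\frac{\ell}{2n}\right)^{k-t}\right).
\end{equation*}
Combining these estimates yields $\Delta^{(t)}/N \geq \tfrac{1}{4}(\ell/6n)^{k-t}$ (absorbing universal constants appropriately), and multiplying by the $2\sum_U \binom{|\calH^{(t)}_U|}{2}$ factor recovers the stated bound with a $\tfrac{1}{2}$ in front.

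The only minor obstacle is being careful about ordered versus unordered pairs of hyperedges (the factor of $2$ from $\sum_{C \neq C'} = 2\binom{|\calH^{(t)}_U|}{2}$ combines with the $\tfrac{1}{2}$ in the definition of $\rho$ and $\Delta^{(t)}$), and verifying that the two applications of \Cref{fact:binomest} indeed apply, which requires $\ell \leq n/2$ and $k-t = O(1)$ so that the shifts are treated as constant; both hold under the hypotheses of \Cref{thm:refutation}.
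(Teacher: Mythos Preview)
Your proposal is correct and follows essentially the same approach as the paper: compute the total entry sum of $A_{\bU_t}$ as $\Delta^{(t)}$ times the number of (ordered) pairs, divide by $N$, and reduce the ratio of binomials via \Cref{fact:binomest}. The only cosmetic difference is that you track the factor of $2$ from ordered versus unordered pairs explicitly, whereas the paper writes the total degree directly as $\Delta^{(t)}\sum_U\binom{|\calH^{(t)}_U|}{2}$; both arrive at the same bound up to constants absorbed into the $\Theta$ from \Cref{fact:binomest}.
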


\begin{proof}
    Note that every distinct choice $C, C' \in \calH_U^{(t)}$ uniformly adds $\Delta^{(t)} = \frac{1}{2}{k-t \choose \lceil \frac{k-t}{2} \rceil}{k-t \choose \lfloor \frac{k-t}{2} \rfloor}{2n-2(k-t) \choose \ell-k-t} \cdot 3^{\ell - k-t} \cdot 2^{\mathbbm{1}(\text{$k-t$ odd})}$ edges. The total degree can then be written as $\Delta^{(t)} \sum_{U \in \calU^{(t)}} {\abs{\calH^{(t)}_U} \choose 2}$ and the average using standard binomial approximations is:
    \begin{flalign*}
        &d^{(t)} = \frac{\Delta^{(t)}}{N} \sum_{U \in \calU^{(t)}} {\abs{\calH^{(t)}_U} \choose 2}
        = \frac{{k-t \choose \lceil \frac{k-t}{2} \rceil}{k-t \choose \lfloor \frac{k-t}{2} \rfloor}{2n-(k-t) \choose \ell-k-t} \cdot 3^{\ell - k-t} \cdot 2^{\mathbbm{1}(\text{$k-t$ odd})}}{3^\ell {2n \choose \ell}2} \sum_{U \in \calU^{(t)}} {\abs{\calH^{(t)}_U} \choose 2}\\
        &\;\;\;\;\geq \frac{1}{2}\left({\frac{\ell}{6n}}\right)^{k-t}  \sum_{U \in \calU^{(t)}} {\abs{\calH^{(t)}_U} \choose 2}\mper
    \end{flalign*}
    The inequality follows from \Cref{fact:binomest}.
\end{proof}

There is one other property important to track, which is the local degree of each vertex, which is roughly for any operator $P_C$ and vertex, the number of $P_{C'}$ for which an edge typed as $(C, C')$ is incident.

\begin{definition}[Local degree]
    Let $Q = (Q^{(1)}, Q^{(2)}) \in \calQ_\ell$ be a vertex of the level-$\ell$ Kikuchi graph and $C \in \calH^{(t)}$. We define the $C$-local degrees of a vertex as
    \begin{align*}
        d_{Q, C, 0} = \{C' \in \calH^{(t)} \mid \exists R \in \calQ_\ell, Q \xrightarrow{\text{$P_{\wt{C}}, P_{\wt{C}'}$}} R\}\mcom
    \end{align*}
    and
    \begin{align*}
        d_{Q, C, 1} = \{C' \in \calH^{(t)} \mid \exists R \in \calQ_\ell, Q \xrightarrow{\text{$P_{\wt{C}'}, P_{\wt{C}}$}} R\}\mcom
    \end{align*}
    We say a Kikuchi graph has $\eta$-bounded local degree if $d_{Q, C, b} \leq \eta$ for all $Q \in \calQ_\ell$, $C \in \calH^{(t)}$, and $b \in \{0,1\}$.
\end{definition}

With our Kikuchi matrix properly defined, we are ready to prove our main spectral refutation lemma.

\begin{lemma}
    \label{lem:spectralnormodd}
    Let $\bU_t = \frac{k^2\abs{\calU^{(t)}}}{4\abs{\calH}^2} \sum_{U \in \calU^{(t)}} \sum_{C, C' \in \calH_U} b_C b_{C'} P_{\wt{C}} P_{\wt{C}'}$ for some Hamiltonian $k$-$\XOR$ instance $\calI^{(t)} = (\calH^{(t)}, \{(P_C, b_C)\}_{C \in \calH^{(t)}})$ with a $\calU^{(t)}$-bipartite decomposition $\calH^{(t)} = \{\calH^{(t)}_U\}_{U \in \calU^{(t)}}$. Let $\tilde{K}_{\bU_t} = \Gamma_t^{-1/2}A^*_{\bU_t}\Gamma_t^{-1/2} \otimes \Id_{2^n}$ be the level-$\ell$ degree-regularized Kikuchi matrix, as defined in \Cref{def:kikuchimatrixodd}. Suppose additionally the interaction coefficients $\{b_C\}_{C \in \calH}$ are drawn independently and uniformly from $\{\pm 1\}$ or they are drawn as independent standard Gaussians, i.e., the instance $\calI$ is semirandom as in \Cref{def:quantumxor} and assume further the Kikuchi graph has $\eta_t$-bounded local degree. Then, with probability $\geq 1 - \frac{1}{\poly(n)}$, it holds that
    \begin{equation*}
        \norm{\Gamma_t^{-1/2} A^*_{\bU_t} \Gamma_t^{-1/2}}{2} \leq 8\sqrt{\frac{\eta_t\ell\log n}{d^{(t)}}} \mper
    \end{equation*}
\end{lemma}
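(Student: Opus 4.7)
The plan is to mirror the trace moment method argument of \Cref{lem:spectralnorm}, adapted to the odd-arity Kikuchi matrix, with the local degree bound $\eta_t$ serving as the new key parameter that controls an extra ``branching'' factor that appears in the walk-counting step. Concretely, since $\tilde{K}_{\bU_t}$ is Hermitian we start from the standard bound $\lVert \Gamma_t^{-1/2} A^*_{\bU_t} \Gamma_t^{-1/2}\rVert_2 \leq \Tr\bigl((\Gamma_t^{-1} A^*_{\bU_t})^{2r}\bigr)^{1/(2r)}$, apply Markov's inequality with $N = 3^\ell\binom{2n}{\ell}$, and take $2r = \Theta(\log N) = \Theta(\ell \log n)$ so that the $N^{1/(2r)}$ prefactor is $O(1)$. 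The task then reduces to bounding $\E[\Tr((\Gamma_t^{-1} A^*_{\bU_t})^{2r})]$, and the goal is to show this is at most $O(1)^{2r} \cdot N \cdot (\eta_t \ell / d^{(t)})^{r}$.

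Next I would expand the expected trace as a sum over sequences of $2r$ labels, each label consisting of a bucket $U_i \in \calU^{(t)}$ and an ordered pair $(C_i, C_i')$ with $C_i \neq C_i' \in \calH_{U_i}^{(t)}$. Since each label contributes two independent sign/Gaussian random variables $b_{C_i}, b_{C_i'}$, the expectation $\E[\prod_{i=1}^{2r} b_{C_i} b_{C_i'}]$ vanishes unless every hyperedge $C \in \calH^{(t)}$ appears an even total number of times across the $4r$-long multiset $\{C_1, C_1', C_2, C_2', \dots, C_{2r}, C_{2r}'\}$. Exactly as in the even case, the Gaussian versus Rademacher distinction only introduces an extra $\prod (m_i-1)!!$ factor, which is absorbed into a constant-factor blowup using the $(2r-1)!!$ pairing count; the remaining work is entirely combinatorial. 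The $\rho_{P,P'} \leq 1$ bound established right after \Cref{def:kikuchimatrixodd} ensures the reweighting is benign.

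The main step, and the place where $\eta_t$ enters, is the encoding of walks. I would encode a contributing closed walk by: (i) a template partitioning $[2r]$ into buckets indicating which steps use the same pair $(C_i,C_i')$; (ii) the start vertex $Q_0 \in \calQ_\ell$, costing a factor of $N$; and (iii) for each ``first occurrence'' index $i$ in the template, the data $(U_i, C_i, C_i', Q_i)$. The critical accounting is this: given $Q_{i-1}$, the number of $(U_i, C_i, R)$ triples such that $Q_{i-1}$ has an outgoing edge via hyperedge $C_i$ is exactly $\deg(Q_{i-1})$, which cancels against the weight $\Gamma_t^{-1}_{Q_{i-1}, Q_{i-1}} \leq 1/\deg(Q_{i-1})$; the additional choice of partner $C_i'$ contributes at most $\eta_t$ by the $\eta_t$-bounded local degree hypothesis. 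Each non-first occurrence index is forced by the template and pays a weight of $1/d^{(t)}$, contributing a global factor of $(1/d^{(t)})^{2r-q}$ where $q$ is the number of buckets. Summing over templates using the same integer-partition and $\binom{r}{\{m_1/2,\dots,m_q/2\}} \leq 2^r(2r)^{r-q}$ bounds as in \Cref{lem:countingwalks}, the final estimate becomes $O(1)^{2r} \cdot N \cdot \eta_t^{r} \cdot (2r/d^{(t)})^{r}$, which after taking the $2r$-th root and plugging in $2r = \Theta(\ell \log n)$ gives the claimed $O(\sqrt{\eta_t \ell \log n / d^{(t)}})$.

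The main obstacle will be step (iii) of the encoding, specifically verifying that the extra $C_i'$ branching is indeed captured correctly by the local degree $\eta_t$ and does not accidentally over- or under-count due to the asymmetric roles of the two Pauli slots $Q^{(1)}, Q^{(2)}$ in \Cref{def:kikuchimatrixodd}. One must carefully handle both orderings $d_{Q,C,0}$ and $d_{Q,C,1}$, and ensure that the ``anticommuting'' pairs absorbed into $\rho_{P,P'}$ do not secretly inflate the effective local degree beyond $\eta_t$. Once this bookkeeping is cleanly set up, the remainder of the argument is a direct analog of the even-case calculation.
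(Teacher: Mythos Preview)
Your encoding has a genuine gap. You bucket the indices $1,\dots,2r$ by equality of the \emph{full ordered pair} $(C_i,C_i')$, but the vanishing of $\E\bigl[\prod_i b_{C_i}b_{C_i'}\bigr]$ is governed by the parity of each \emph{individual} hyperedge $C$, not by repetition of pairs. A sequence such as $(A,B),(B,A),(C,D),(D,C)$ is trivially closed (each of $A,B,C,D$ appears twice) yet all four ordered pairs are distinct, so in your scheme $q=2r$, there are no ``non-first'' indices, and your $(1/d^{(t)})^{2r-q}$ factor is $1$. Relatedly, your first-occurrence accounting is off: the degree $\deg(Q_{i-1})$ already enumerates edges, i.e.\ full pairs $(C,C')$, so once an edge is chosen there is no ``additional choice of partner $C_i'$'' to charge $\eta_t$ to; and if you instead first choose only $C_i$ and then a partner, the first choice is not $\deg(Q_{i-1})$ but something smaller you have not controlled.

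The paper's encoding is different and is exactly where $\eta_t$ enters. A step $i$ is marked ``old'' if either $C_i$ or $C_i'$ individually coincides with some $C_j$ or $C_j'$ for $j<i$ (four cases). The per-hyperedge evenness forces at least $r$ old steps. An old step is encoded by a back-pointer $j\in[2r]$, one of the four cases, and then the \emph{remaining} coordinate of the edge; the number of edges from $Q_{i-1}$ with one coordinate pinned is at most $\eta_t$ by the local-degree hypothesis, and the step pays weight $\leq 1/d^{(t)}$. A ``new'' step is a free edge choice, $\deg(Q_{i-1})$ options canceling the $\Gamma_t^{-1}$ weight, with no $\eta_t$. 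This yields $N\cdot 2^{2r}\cdot(4\eta_t r/d^{(t)})^{r}$, which is what you need. Finally, the Gaussian case here is genuinely more involved than in the even-arity lemma: since each step carries two coefficients, the moment prefactor is not a single $\prod_i(m_i-1)!!$ over one partition but involves the interaction of ``top'' and ``bottom'' multiplicities; the paper handles this with a separate inequality bounding $\prod_i(m_i+m_i'-1)!!$ by $2^{2r}\max(\prod m_i!,\prod m_i'!)$ and a distinct encoding that tracks singleton edges.
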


From \Cref{lem:spectralnormodd}, we are able to prove \Cref{lem:mainrefutation}.

\begin{proof}[Proof of \Cref{lem:mainrefutation} from \Cref{lem:spectralnormodd}]
    We begin with \Cref{obs:kikuchirelaxationodd}, which allows us to claim
    \begin{equation*}
        \max_{\substack{\ket{\psi} \in (\C^2)^{\otimes n} \\ \ket{\psi} \text{ pure state}}} \bra \psi \bU_t \ket \psi = \max_{\substack{\ket{\psi} \in (\C^2)^{\otimes n} \\ \ket{\psi} \text{ pure state}}} \frac{k^2\abs{\calH^{(t)}}\abs{\calU^{(t)}}}{4\abs{\calH}^2} + \frac{k^2\abs{\calU^{(t)}}}{4\Delta^{(t)}\abs{\calH}^2} (\psi^{\odot \ell})^\dagger K_{\bU_t} \psi^{\odot \ell}
    \end{equation*}
    This allows us to the value in terms of the spectral norm of the degree-regularized adjacency matrix $\tilde{A}_{\bU_t} = \Gamma_t^{-1/2} A_{\bU_t} \Gamma^{1/2}$ of the Kikuchi graph. For any $\ket \psi \in (\C^2)^{\otimes n}$ a quantum state we have
    \begin{align*}
        (\psi^{\odot \ell})^\dagger K_{\bU_t} \psi^{\odot \ell}
        &= (\psi^{\odot \ell})^\dagger (\Gamma_t^{1/2} \otimes \Id_{2^n})(\tilde{A}_{\bU_t} \otimes \Id_{2^n}) (\Gamma_t^{1/2} \otimes \Id_{2^n})\psi^{\odot \ell}\\
        &\leq \lVert \tilde{A}_{\bU_t} \otimes \Id_{2^n}\rVert_{2} \cdot \norm{(\Gamma_t^{1/2} \otimes \Id_{2^n}) \psi^{\odot \ell}}{2}^2\\
        &= \lVert\tilde{A}_{\bU_t}\rVert_{2} \cdot \tr(\Gamma_t)\mper
    \end{align*}
    We use here that
    \begin{equation*}
        \norm{(\Gamma_t^{1/2} \otimes \Id_{2^n}) \psi^{\odot \ell}}{2}^2 = (\psi^{\odot \ell})^\dagger(\Gamma_t \otimes \Id_{2^n})\psi^{\odot \ell} = \sum_{Q \in \mathcal{Q}_\ell} (\psi^{\odot \ell}_Q)^\dagger ((\Gamma_t)_{Q, Q} \Id_{2^n})\psi_Q^{\odot \ell} = \tr(\Gamma_t)\mper
    \end{equation*}
    Observing $\tr(\Gamma_t) = 2\Delta^{(t)} \sum_{U \in \calU^{(t)}} {\abs{\calH_U^{(t)}} \choose 2}$, twice the total degree, we bound our original equation as
    \begin{equation*}
        \max_{\substack{\ket{\psi} \in (\C^2)^{\otimes n} \\ \ket{\psi} \text{ pure state}}} \bra \psi \bU_t \ket \psi \leq \frac{k^2\abs{\calU^{(t)}}}{\abs{\calH}^2} \sum_{C \in \calH^{(t)}} b_C^2 + \frac{k^2\abs{\calU^{(t)}}}{2\abs{\calH}^2} \sum_{U \in \calU^{(t)}} {\abs{\calH_U^{(t)}} \choose 2} \cdot \lVert\tilde{A}_{\bU_t}\rVert_{2} \mper
    \end{equation*}
    To fulfill \Cref{item:ref1} of \Cref{lem:mainrefutation}, we can simply compute and output $\algval^*(\bU_t) := \frac{k^2\abs{\calU^{(t)}}}{\abs{\calH}^2} \sum_{C \in \calH^{(t)}} b_C^2 + \frac{k^2\abs{\calU^{(t)}}}{2\abs{\calH}^2} \sum_{U \in \calU^{(t)}} {\abs{\calH_U^{(t)}} \choose 2} \cdot \lVert\tilde{A}_{\bU_t}\rVert_{2}$ which is possible in time $n^{(\ell)}$ with the bulk of the computation coming from computing the spectral norm of $\lVert\tilde{A}_{\bU_t}\rVert_{2}$, an $N = 3^\ell {2n \choose \ell}$ dimensional matrix. We spend the rest of the proof justifying \Cref{item:ref2} of \Cref{lem:mainrefutation}.

    We start by showing the constant term $\frac{k^2\abs{\calU^{(t)}}}{4\abs{\calH}^2} \sum_{C \in \calH^{(t)}} b_C^2$ is less than $\frac{\varepsilon^2}{2}$. A first observation is when the $b_C$'s are $\pm 1$, the sum simplifies to just $\abs{\calH^{(t)}}$. By \Cref{item:decompositionalg4} of \Cref{lem:decompositionalg}, we have $\abs{\calU^{(t)}} \leq \frac{2\abs{\calH}}{\tau_t}$, which immediately implies the term $\frac{k^2\abs{\calH^{(t)}}\abs{\calU^{(t)}}}{4\abs{\calH}^2} \leq \frac{k^2}{2\tau_t} \leq \frac{\varepsilon^2}{8}$. If the $b_C$'s are instead Gaussian, we do not have cancellation, but we do have the same expectation which can be combined with the following concentration bound.
    \begin{lemma}[Laurent-Massart inequality, Lemma 1 \cite{LaurentM00}]
        Let $X_1, ..., X_m$ be a set of i.i.d. standard Gaussians. Then $X =\sum_{i=1}^m X_i^2$ satisfies
        \begin{equation*}
            \Pr\sbra{X \geq m + 2\sqrt{mx} + 2x} \leq \exp(-x)\mper
        \end{equation*}
    \end{lemma}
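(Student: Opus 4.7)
The plan is to prove this classical concentration inequality via a Chernoff-style argument on the chi-squared random variable $X - m$. The overall strategy is standard: compute the moment generating function of $X - m$, upper bound its logarithm by a convenient function of $\lambda$, apply the exponential Markov inequality, and finally optimize the Chernoff parameter $\lambda$ to match the tail form $m + 2\sqrt{mx} + 2x$.

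First, I would compute the MGF of a single term. By independence and a direct Gaussian integral, for $\lambda \in [0, 1/2)$,
\begin{equation*}
\E\left[e^{\lambda(X_i^2 - 1)}\right] = \frac{e^{-\lambda}}{\sqrt{1 - 2\lambda}}, \qquad \text{so} \qquad \E\left[e^{\lambda(X - m)}\right] = \left(\frac{e^{-\lambda}}{\sqrt{1-2\lambda}}\right)^m.
\end{equation*}
Taking logarithms gives $\psi(\lambda) := \log \E[e^{\lambda(X-m)}] = -\lambda m - \tfrac{m}{2}\log(1-2\lambda)$. Next, I would use the elementary inequality $-\log(1-u) \leq u + \tfrac{u^2}{2(1-u)}$, valid for $u \in [0,1)$, applied with $u = 2\lambda$. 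This yields the clean bound
\begin{equation*}
\psi(\lambda) \;\leq\; \frac{m \lambda^2}{1 - 2\lambda},
\end{equation*}
which is the key analytic estimate.

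With this MGF estimate in hand, the exponential Markov inequality gives, for every $\lambda \in [0, 1/2)$ and $t \geq 0$,
\begin{equation*}
\Pr[X - m \geq t] \;\leq\; \exp\!\left(-\lambda t + \frac{m\lambda^2}{1 - 2\lambda}\right).
\end{equation*}
Setting $t = 2\sqrt{mx} + 2x$, the goal is to choose $\lambda$ so that the exponent is at most $-x$. The natural guess that makes the algebra close cleanly is $\lambda = \tfrac{1}{2}\bigl(1 - \sqrt{m/(m + 2x)}\bigr)$, which gives $1/(1 - 2\lambda) = \sqrt{(m+2x)/m}$. Substituting this in and simplifying, the exponent becomes $-(\sqrt{m+2x} - \sqrt{m})^2/2$ which, by the identity $(\sqrt{m+2x} - \sqrt{m})(\sqrt{m+2x} + \sqrt{m}) = 2x$, is bounded above by $-x$ precisely when $t \geq 2\sqrt{mx} + 2x$; this recovers the claimed bound.

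The main obstacle is not conceptual but is the choice and verification of the optimal $\lambda$: one must massage the resulting quadratic-in-$\lambda$ expression so that the bound $-\lambda t + m\lambda^2/(1-2\lambda) \leq -x$ matches the exact tail form $m + 2\sqrt{mx} + 2x$ rather than something loose like $m + Cx$ for a suboptimal constant $C$. Getting the $2\sqrt{mx}$ cross-term with the sharp constant $2$ requires the specific $\lambda$ above (or equivalently, completing the square in $\sqrt{m+2x}$); any sub-optimal linearization loses this sharpness and gives only a weaker Bernstein-type inequality. Once this calculation is carried out, the lemma follows directly.
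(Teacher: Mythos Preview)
The paper does not prove this lemma at all; it simply quotes it as Lemma~1 of Laurent--Massart and uses it as a black box. Your Chernoff-type strategy (compute the MGF, bound $-\tfrac12\log(1-2\lambda)$ via $-\log(1-u)\le u+\tfrac{u^2}{2(1-u)}$ to get $\psi(\lambda)\le m\lambda^2/(1-2\lambda)$, then optimize $\lambda$) is exactly the standard proof and is correct in outline.

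However, your optimization step contains arithmetic errors. The optimal Chernoff parameter is $\lambda=\tfrac12\bigl(1-\sqrt{m/(m+2t)}\bigr)$, i.e.\ it depends on $t$, not on $x$; and the resulting exponent is $-(\sqrt{m+2t}-\sqrt m)^2/4$, with a $4$ in the denominator rather than a $2$. With these corrections, plugging in $t=2\sqrt{mx}+2x$ gives $m+2t=(\sqrt m+2\sqrt x)^2$, so the exponent equals exactly $-(2\sqrt x)^2/4=-x$, as desired. Your stated $\lambda$ and exponent do not actually achieve $-x$: for instance, at $m=x=1$ one has $t=4$, and your $\lambda=\tfrac12(1-1/\sqrt3)$ yields an exponent of about $-0.77>-1$, whereas the correct $\lambda=1/3$ gives exactly $-1$. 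Fixing this bookkeeping, your proof goes through.
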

    Setting $2x = m = \abs{\calH^{(t)}} \geq \Omega(\sqrt{n})$ in the lemma gives this term does not exceed its expectation more than fourfold with high probability, which gives a bound $\frac{\varepsilon^2}{2}$ from above. If $\abs{\calH^{(t)}}$ is less than $\sqrt{n}$, we can simply union bound over every coefficient $b_C$ from $C \in \calH^{(t)}$ being at most $\sqrt{n}$ and safely ignore since $\abs{\calH} \geq \Omega(n \log n) \cdot \varepsilon^{-4}$.
    
    Our goal now becomes bounding the spectral norm term. To do so, let $\eta_t$ be the maximum local degree in $A^*_{\bU_t}$, and we simply apply \Cref{lem:spectralnormodd} and \Cref{fact:avgdegreeodd} in turn to compute:
    \begin{align*}
        \frac{k^2\abs{\calU^{(t)}}}{2\abs{\calH}^2} \sum_{U \in \calU^{(t)}} {\abs{\calH_U^{(t)}} \choose 2} \cdot \lVert\tilde{A}_{\bU_t}\rVert_{2} 
        &\leq \frac{4k^2\abs{\calU^{(t)}}}{\abs{\calH}^2} \sum_{U \in \calU^{(t)}} {\abs{\calH_U^{(t)}} \choose 2} \cdot \sqrt{\frac{\eta_t \ell\log n}{d^{(t)}}}\\
        &\leq \frac{8k^2\abs{\calU^{(t)}}}{\abs{\calH}^2} \sum_{U \in \calU^{(t)}} {\abs{\calH_U^{(t)}} \choose 2} \cdot \sqrt{\frac{\eta_t \ell\log n}{\left({\frac{\ell}{6n}}\right)^{k-t}  \sum_{U \in \calU^{(t)}} {\abs{\calH^{(t)}_U} \choose 2}}}\\
        &\leq \frac{8k^2\abs{\calU^{(t)}}}{\abs{\calH}^2} \cdot \sqrt{\eta_t \ell\log n \left({\frac{6n}{\ell}}\right)^{k-t}  \sum_{U \in \calU^{(t)}} {\abs{\calH^{(t)}_U} \choose 2}}\\
        &\leq \frac{8k^2\abs{\calU^{(t)}}}{\abs{\calH}^2} \cdot \sqrt{\eta_t \ell\log n \left({\frac{6n}{\ell}}\right)^{k-t}  \abs{\calU^{(t)}} \tau_t^2}\\
        &\leq 32k^2 \sqrt{\frac{\eta_t \ell\log n}{\abs{\calH} \tau_t} \left({\frac{6n}{\ell}}\right)^{k-t}}\mper
    \end{align*}
    In the second to last line, we invoke \Cref{item:decompositionalg2} and \Cref{item:decompositionalg3} of \Cref{lem:decompositionalg} to claim $ {\abs{\calH^{(t)}_U} \choose 2} \leq \tau_t^2$. In the last line, taking $\frac{\abs{\calU^{(t)}}}{\abs{\calH}^2}$ inside the root yields $\frac{\abs{\calU^{(t)}}^3 \tau_t^2}{\abs{\calH}^4}$, which by \Cref{item:decompositionalg4} of \Cref{lem:decompositionalg} we can bound by $\frac{8}{\abs{\calH} \tau_t}$. Taking $\abs{\calH} \geq O(1)^k \cdot n \log n \left(\frac{n}{\ell}\right)^{k/2-1} \cdot \varepsilon^{-4}$ for a large enough universal constant as in the statement of the theorem, we cancel almost all terms to yield a bound $\algval^*(\bU_t) \leq \frac{\varepsilon^2}{2} \sqrt{\frac{\eta_t}{\tau_t} \left(\frac{n}{\ell}\right)^{k/2-t}}$.

    It is here we finally discuss $\eta_t$-boundedness. In the statement of \Cref{lem:mainrefutation}, we make no assumption on $\eta_t$, leaving us to take the trivial bound $\eta_t \leq \abs{\calH_U^{(t)}} \leq \tau_t$, which is sufficient to show $\sqrt{\frac{\eta_t}{\tau_t} \left(\frac{n}{\ell}\right)^{k/2-t}} \leq 1$ when $t \geq \frac{k}{2}$, since $\ell \leq n$, but does nothing for the case $t < \frac{k}{2}$. It is here we make a slight tweak to our choice of $\algval^*$, opting instead to apply a preprocessing edge-deletion step to the Kikuchi graph in order to prune the local degree small enough to bound the resulting term.

    \begin{lemma}[Edge deletion algorithm]
        \label{lem:edgedeletion} 
        Let $A_{\bU_t}$ be the adjacency matrix of the Kikuchi graph for some Hamiltonian $k$-$\XOR$ instance $\calI^{(t)} = (\calH^{(t)}, \{(P_C, b_C)\}_{C \in \calH^{(t)}})$ with a $\calU^{(t)}$-bipartite decomposition $\calH^{(t)} = \{\calH^{(t)}_U\}_{U \in \calU^{(t)}}$ satisfying the output criteria of \Cref{lem:decompositionalg}. Suppose the weight of edges of type $(C, C')$ for any $C \neq C' \in \calH_U^{(t)}$ is $\Delta^{(t)}$. Then, there exists a subgraph $\hat{A}_{\bU_t}$ satisfying
        \begin{enumerate}
            \item $\hat{A}_{\bU_t}$ has $O(1)^k \cdot \varepsilon^{-2}$-bounded local degree for some universal constant.
            \item The weight of edges of type $(C, C')$ drops to $(1-\gamma) \Delta^{(t)}$ for some $\gamma \in [0, \frac{1}{2}]$.
        \end{enumerate}
    \end{lemma}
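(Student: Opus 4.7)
The plan is to bound the expected local degree in the Kikuchi graph and then apply a Markov-style pruning to enforce a uniform local-degree bound while retaining most edges of each type.

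First, fix $C \in \calH^{(t)}$ sitting in bucket $\calH_U^{(t)}$ and a direction $b \in \{0,1\}$, and double-count: $\sum_{Q \in \calQ_\ell} d_{Q, C, b} \leq \abs{\calH_U^{(t)}} \cdot M$, where $M$ upper bounds, over any fixed $C' \in \calH_U^{(t)}$, the number of vertices $Q$ admitting some $R$ with $Q \xrightarrow{P_{\wt{C}}, P_{\wt{C}'}} R$. Since each such $Q$ starts an ordered pair in the $\Delta^{(t)}$-sized set from \Cref{obs:kikuchirelaxationodd}, $M \leq \Delta^{(t)}$. Combining $\abs{\calH_U^{(t)}} \leq \tau_t$ with the ratio $\Delta^{(t)}/N = O((\ell/n)^{k-t})$ implicit in \Cref{fact:avgdegreeodd}, the expected local degree satisfies
\[
\E_{Q}[d_{Q, C, b}] \leq \frac{\tau_t \cdot \Delta^{(t)}}{N} = O(1)^k \cdot \varepsilon^{-2},
\]
since substituting the formula for $\tau_t$ from \Cref{lem:decompositionalg} shows the product $\tau_t \cdot (\ell/n)^{k-t}$ reduces to $O(1)^k \cdot (\ell/n)^{\min(k/2,\, k-t)} \cdot \varepsilon^{-2}$, which is at most $O(1)^k \cdot \varepsilon^{-2}$ because $\ell \leq n$.

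Next, set the threshold $\eta := \alpha \cdot O(1)^k \cdot \varepsilon^{-2}$ for a sufficiently large universal constant $\alpha$. By Markov's inequality, for each fixed $(C, b)$ the fraction of $Q \in \calQ_\ell$ with $d_{Q, C, b} > \eta$ is at most $1/\alpha$. Define $\hat{A}_{\bU_t}$ by removing every edge of type $(C, C')$ incident at either endpoint to a vertex that is bad for $(C, b)$ or $(C', b)$ in either direction. This is a symmetric deletion, preserving the Hermitian structure of the Kikuchi matrix. A union bound over the two endpoints and two directions shows that for each fixed edge type, at most $O(1/\alpha)$ of its $\Delta^{(t)}$ copies are removed; choosing $\alpha$ large enough makes this per-type loss at most $1/2$. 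If a uniform loss fraction $\gamma$ across all edge types is required by the downstream spectral bound, set $\gamma$ to be the worst-case deletion fraction over types and symmetrically delete additional edges from the under-deleted types until every type is reduced by exactly $\gamma$; this only weakens the graph further, preserving the local-degree bound.

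The main obstacle lies in verifying the cancellation $\E_Q[d_{Q,C,b}] = O(1)^k \varepsilon^{-2}$: the bucket sizes $\tau_t$ specified by \Cref{lem:decompositionalg} are calibrated precisely so that $\tau_t$ times the edge density $\Delta^{(t)}/N$ collapses to $O(1)^k \varepsilon^{-2}$ uniformly in $t$. This calibration is itself a consequence of the $(\varepsilon/(2k), \ell)$-regularity of the decomposition, without which coincidences in support between constraints within a bucket could push $M$ substantially above the clean bound by $\Delta^{(t)}$ and break the Markov threshold.
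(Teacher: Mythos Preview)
Your approach has a genuine gap in the passage from the Markov bound to the per-type deletion fraction. You correctly establish that
\[
\E_{Q \sim \calQ_\ell}\bigl[d_{Q,C,b}\bigr] \;\leq\; \frac{\tau_t\,\Delta^{(t)}}{N} \;=\; O(1)^k \cdot \varepsilon^{-2},
\]
and Markov then gives that the fraction of \emph{all} vertices $Q \in \calQ_\ell$ with $d_{Q,C,b} > \eta$ is at most $1/\alpha$. But the quantity you actually need is the fraction of \emph{edges of type $(C,C')$} whose endpoint $Q$ is bad. These edges are supported on only $\Theta(\Delta^{(t)})$ vertices, and $\Delta^{(t)}/N \asymp (\ell/n)^{k-t}$ is typically far smaller than $1/\alpha$. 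There is no a priori reason the bad set of size $N/\alpha$ avoids this much smaller set of endpoints; in fact the two sets are positively correlated, since being a $(C,C')$-endpoint forces $Q^{(2)}$ to align with $P_{\wt{C}'}$, and if $C'$ shares support with many other $C'' \in \calH_U^{(t)}$ this alignment simultaneously boosts $d_{Q,C,0}$. So your union bound ``at most $O(1/\alpha)$ of its $\Delta^{(t)}$ copies are removed'' does not follow.

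The paper fixes exactly this by applying Markov over the correct distribution: it bounds $\E_{(Q,R) \sim E_{(C,C')}}[d_{Q,C,0}]$, the expected local degree \emph{conditioned on $Q$ being a $(C,C')$-endpoint}. This is where the $(\varepsilon/(2k),\ell)$-regularity from \Cref{lem:decompositionalg} actually enters: it controls, for each overlap size $s$, how many $C''$ can share an $(s{+}t)$-operator with $C'$, which in turn bounds the cross-term probabilities $\Pr_{(Q,R)\sim E_{(C,C')}}[\exists R':\,Q \xrightarrow{P_{\wt C},\,P_{\wt C''}} R']$ stratified by $\abs{P_{C'}\sqcap P_{C''}}$. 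Your final paragraph misplaces the role of regularity: it does not affect the count $M \leq \Delta^{(t)}$ (that count is purely combinatorial in $k,t,n,\ell$), but rather it is what makes the \emph{conditional} expectation small. Without this conditional analysis the argument does not close.
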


    The above establishes the local degree condition we need to succeed when $t < \frac{k}{2}$, at the cost of deleting some small constant fraction of the edges. Since we maintain uniform weights, all the observations previously made, \Cref{obs:kikuchirelaxationodd} and \Cref{fact:avgdegreeodd} hold with $\hat{K}_{\bU_t} = \hat{A}_{\bU_t} \otimes \Id_{2^n}$ substituted in place of $K_{\bU_t}$ with at most a $1-\gamma$ degradation quantitatively. By slightly fudging the constant hiding in $\abs{\calH}$, we replace the trivial bound for $\algval^*(\bU_t) \leq \frac{\varepsilon^2}{2} \sqrt{\frac{\eta_t}{\tau_t} \left(\frac{n}{\ell}\right)^{k/2-t}}$ when $t < \frac{k}{2}$ with
    \begin{equation*}
        \algval(\bU_t) \leq \frac{\varepsilon^2}{2} \cdot \frac{1}{D^k}\sqrt{\frac{\eta_t}{ \tau_t} \left(\frac{n}{\ell}\right)^{k/2-t}}\mcom
    \end{equation*}
    Where $\algval(\bU_t)$ is defined analogously to $\algval^*(\bU_t)$ with $K_{\bU_t}$ replaced with $\hat{K}_{\bU_t}$. Plugging in $\tau_t = \left(\frac{3n}{\ell}\right)^{k/2-t} \cdot 4k^2 \varepsilon^{-2}$ and applying the boundedness of local degree from \Cref{lem:edgedeletion} yields the bound $\frac{\varepsilon^2}{2}$, immediately giving $\lambdamax(\bU_t) \leq \varepsilon^2$. The final algorithm, which we state here, uses this edge-deleted $\algval(\bU_t)$ as our final spectral certificate.

    \begin{tcolorbox}[
    width=\textwidth,   
    colframe=black,  
    colback=white,   
    title= Bipartite Hamiltonian $k$-$\XOR$ Refutation Algorithm,
    colbacktitle=white, 
    coltitle=black,      
    fonttitle=\bfseries,
    center title,   
    enhanced,       
    frame hidden,           
    borderline={1pt}{0pt}{black},
    sharp corners,
    toptitle=2.5mm
]
\textbf{Input:} A Hamiltonian $k$-$\XOR$ instance $\calI^{(t)} = (\calH^{(t)}, \{(P_C, b_C)\}_{C \in \calH^{(t)}})$ with a $\calU^{(t)}$-bipartite decomposition $\calH^{(t)} = \{\calH^{(t)}_U\}_{U \in \calU^{(t)}}$ a subset of $\calH$ describing an associated operator $\bU_t = \frac{k^2\abs{\calU^{(t)}}}{4\abs{\calH}^2} \sum_{U \in \calU^{(t)}} \sum_{C, C' \in \calH_U} b_C b_{C'} P_{\wt{C}} P_{\wt{C}'}$.\\

\textbf{Output:} $\algval(\bU_t) \in [0,1]$ with guarantee $\algval(\bU_t) \geq \lambdamax(\bU_t)$.\\

\textbf{Algorithm:}
\begin{enumerate}
    \item Construct the $N \times N$ Kikuchi matrix of \Cref{def:kikuchimatrixodd} $K_{\bU_t}$ from the input and apply the edge deletion algorithm of \Cref{lem:edgedeletion} to produce a submatrix $\hat{K}_{\bU_t}$.
    \item Compute and output $\frac{k^2\abs{\calU^{(t)}}}{\abs{\calH}^2} \sum_{C \in \calH^{(t)}} b_C^2 + \frac{k^2\abs{\calU^{(t)}}}{2\abs{\calH}^2} \sum_{U \in \calU^{(t)}} {\abs{\calH_U^{(t)}} \choose 2} \cdot \norm{\Gamma_t^{-1/2}\hat{A}^*_{\bU_t} \Gamma_t^{-1/2}}{2}$.
\end{enumerate}

\end{tcolorbox}
\end{proof}

We finish our proof by proving the auxiliary lemmas, \Cref{lem:spectralnormodd} and \Cref{lem:edgedeletion}.

\begin{proof}[Proof of \Cref{lem:spectralnormodd}]
    We set out to prove $\norm{\Gamma_t^{-1/2}A^*_{\bU_t}\Gamma_t^{-1/2}}{2}$ is bound, starting with the case the underlying instance is Rademacher signed. We use the trace power method $\norm{\Gamma_t^{-1/2} A^*_{\bU_t} \Gamma_t^{-1/2}}{2} \leq \tr((\Gamma_t^{-1}A)^{2r})^{1/2r}$ for $r = \log N$ and bound the expectation, with respect to the interaction coefficients, as
\begin{align*}
    \E\left[\tr\left(\left(\Gamma_t^{-1} A^*_{\bU_t}\right)^{2r}\right)\right] 
    &= \E\left[\tr\left(\left(\Gamma_t^{-1} \sum_{U \in \calU^{(t)}} \sum_{C \neq C' \in \calH^{(t)}_U} b_Cb_{C'} A^U_{C, C'}\right)^{2r}\right) \right]\\
    &= \E\left[\tr\left(\sum_{\substack{(C_1, C_1') ,\dots, (C_{2r}, C_{2r}') \\ \in \cbra{(\calH^{(t)}_U)^{\otimes 2}}_{U \in \calU^{(t)}}}} \prod_{i = 1}^{2r} \Gamma_t^{-1} \cdot b_{C_i} b_{C_i'} A^{U_i}_{C_i, C_i'} \right) \right]\\
    &= \sum_{\substack{(C_1, C_1') ,\dots, (C_{2r}, C_{2r}') \\ \in \cbra{(\calH^{(t)}_U)^{\otimes 2}}_{U \in \calU^{(t)}}}} \E\left[\tr\left(\prod_{i = 1}^{2r} \Gamma_t^{-1} \cdot b_{C_i}  b_{C_i'} A^{U_i}_{C_i, C_i'} \right) \right]\\
    &= \sum_{\substack{(C_1, C_1') ,\dots, (C_{2r}, C_{2r}') \\ \in \cbra{(\calH^{(t)}_U)^{\otimes 2}}_{U \in \calU^{(t)}}}} \E\left[\prod_{i = 1}^{2r}b_{C_i} b_{C_i'}\right] \cdot \tr\left(\prod_{i = 1}^{2r} \Gamma_t^{-1} A^{U_i}_{C_i, C_i'} \right)\mper
\end{align*}

We observe that if for some $C \in \calH_U^{(t)}$, $C$ appears an even number of times in the sequence of pairs $(C_1, C_1'), \dots, (C_{2r}, C_{2r}')$, then $\E\left[\prod_{i=1}^{2r} b_{C_i} \right] = 0$, because $b_C$ is independent across $\calH$. This motivates the following definition.
\begin{definition}[Trivially closed sequence]
    \label{def:gentriviallyclosedwalks}
    Let $(C_1, C_1') ,\dots, (C_{2r}, C_{2r}') \in \calH_U^{(t)} \times \calH_U^{(t)}$ for some $U \in \calU^{(t)}$. We say $(C_1, C_1') ,\dots, (C_{2r}, C'_{2r})$ is trivially closed with respect to $C \in \calH_U^{(t)}$ if $C$ appears an even number of times. We say that the sequence is trivially closed if it is trivially closed with respect to all $C \in \calH_U^{(t)}$.
\end{definition}
    With the above definition in hand, the above reduces to saying
\begin{flalign*}
  &  \E\left[\tr\left(\left(\Gamma_t^{-1} A\right)^{2r}\right)\right] 
    = \sum_{\substack{(C_1, C_1') ,\dots, (C_{2r}, C_{2r}') \\ \in \cbra{(\calH^{(t)}_U)^{\otimes 2}}_{U \in \calU^{(t)}}\\ \text{trivially closed}}}  \tr\left(\prod_{i = 1}^{2r} \Gamma_t^{-1} A^{U_i}_{C_i, C_i'} \right)\mper
\end{flalign*}

The following lemma yields the desired bound.
\begin{lemma}
    \label{lem:genmaincountingbacktrackingwalks}
    \begin{equation*}
        \sum_{\substack{(C_1, C_1') ,\dots, (C_{2r}, C_{2r}') \\ \in \cbra{(\calH^{(t)}_U)^{\otimes 2}}_{U \in \calU^{(t)}}\\ \text{trivially closed}}}  \tr\left(\prod_{i = 1}^{2r} \Gamma_t^{-1} A^{U_i}_{C_i, C_i'} \right) \leq N \cdot 2^{2r} \cdot \left(\frac{4\eta r}{d}\right)^r \mper
    \end{equation*}
\end{lemma}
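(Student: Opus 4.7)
The plan is to adapt the template-based walk-counting argument from the even-case proof of Lemma \ref{lem:countingwalks}, while incorporating two new features of the odd-arity setting: each ``step'' of the walk is labeled by a pair of constraints $(C_i, C_i')$ rather than a single one, and the $\eta$-bounded local degree assumption lets us control how many partners a given constraint has.

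First, I would expand the trace as an unsigned sum over closed walks $Q_0 \to Q_1 \to \cdots \to Q_{2r-1} \to Q_0$ in the Kikuchi graph, weighted by $\prod_i (\Gamma_t^{-1})_{Q_{i-1}, Q_{i-1}}$. As in the even case, specifying $Q_{i-1}$ together with the label $(C_i, C_i')$ uniquely determines $Q_i$ (when the edge is valid), so each walk is encoded by a start vertex plus a sequence of label choices. To bound this count, I would partition the set of sequences by a \emph{template}: a partition of the $4r$ individual constraint slots (two per step) into indistinguishable buckets of even size $m_1, \ldots, m_q$ summing to $4r$; the trivial-closure hypothesis means only such templates contribute.

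Within a fixed template, the walk is encoded step by step. For each step $i$, I classify it by how many of its two slots are ``new'' (first in their bucket). In the both-new case, the number of valid $(C_i, C_i')$ pairs is at most $\deg(Q_{i-1})$, which cancels exactly against $(\Gamma_t^{-1})_{Q_{i-1}, Q_{i-1}} \leq 1/\deg(Q_{i-1})$ and contributes a factor of $1$. In the one-new case, the $\eta$-bounded local degree restricts the new slot to at most $\eta$ choices given the determined one, contributing $\leq \eta/d$. In the none-new case the step contributes $\leq 1/d$. Aggregating across the $2r$ steps with total $q$ new slots, the per-template walk-weight is at most $N$ times an expression in $\eta$ and $1/d$, which I would coarsen using $q \leq 2r$ to the form $N \cdot (\eta r /d)^{r}$. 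Counting templates proceeds as in the even case: bounding the number of even-bucket partitions of $[4r]$ by the product of an integer-partition factor, a perfect-matching factor $(2r-1)!!$, and the inequality $\binom{r}{\{m_1/2, \ldots, m_q/2\}} \leq 2^r (2r)^{r-q}$. Combining these yields the claimed $N \cdot 2^{2r} (4\eta r/d)^r$.

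The main obstacle I anticipate is the case analysis for per-step contributions: in the even proof every step involves a single slot so ``new vs.\ determined'' is a binary split, but here each step is two-dimensional and the mixed case (one slot new, one determined) is precisely where the $\eta$-local-degree assumption enters. Making sure that the $\eta$'s and $1/d$'s assemble into exactly $(\eta r/d)^r$, after a clean accounting of new-slot totals against the number of buckets, is where I expect the bookkeeping to be most delicate; the remaining pieces reduce to a routine adaptation of the even case, with all slack absorbed into the $2^{2r}$ prefactor.
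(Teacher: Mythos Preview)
Your template approach has a real gap in the ``none-new'' case. You propose to fix a partition of all $4r$ slots into even buckets, and then for a step where both slots are old you record a per-template contribution of $1/d$. But you still have to pay for the template itself: assigning those two old slots to earlier buckets costs on the order of $r^2$ possibilities, so the net cost of a both-old step is $\sim r^2/d$. In the regime of interest $\eta$ is a constant (it comes from the edge-deletion lemma) while $r = \Theta(\ell\log n)$, so $r^2/d$ is much larger than the $\eta r/d$ that the target bound requires. Carrying this through, your per-step accounting multiplies out to something like $N\cdot (r^2/d)^r$ rather than $N\cdot (\eta r/d)^r$, which after taking $2r$-th roots is off by a factor of $\ell\log n$ in the spectral norm---enough to destroy the refutation threshold. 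Relatedly, the machinery you invoke from the even case---the $(2r-1)!!$ perfect-matching factor and the multinomial bound---was there specifically to absorb the Gaussian moment prefactor $\prod(m_i-1)!!$. There is no such prefactor in this Rademacher lemma, so that template-counting argument does not apply as written.

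The paper's proof avoids this by working per \emph{step} rather than per \emph{slot}: it sets $z_i=1$ whenever \emph{either} of $C_i,C_i'$ has appeared before, without distinguishing one-old from both-old. For any $z_i=1$ step it encodes a single backward pointer ($\leq 2r$ choices), which of the four match patterns holds ($4$ choices), and then uses the $\eta$-bounded local degree to bound the \emph{remaining} coordinate---even if that coordinate is itself ``old.'' This gives $O(\eta r/d)$ per old step uniformly. The argument concludes with the combinatorial fact that any trivially closed sequence has $\lvert z\rvert\geq r$ (each $z_i=0$ step introduces two fresh constraints, and $4r$ slots can accommodate at most $2r$ such pairs if every constraint must appear twice), then sums over $z\in\{0,1\}^{2r}$. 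Your approach can be salvaged by making exactly this move: in the both-old case, encode only one slot via a pointer and bound the other by local degree rather than by the template. Once you do that, the per-slot partition is no longer doing any work, and you have essentially recovered the paper's argument.
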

Taking $r$ to be $O( \log N)$ for a sufficiently large universal constant and applying Markov's inequality finishes the proof for the Rademacher case.

\begin{proof}[Proof of \Cref{lem:genmaincountingbacktrackingwalks}]
    We bound the sum as follows. First, we observe that for a trivially closed sequence $(C_1, C'_1) ,\dots, (C_{2r}, C'_{2r})$, we have
    \begin{flalign*}
        \sum_{\substack{(C_1, C_1') ,\dots, (C_{2r}, C_{2r}') \\ \in \cbra{(\calH^{(t)}_U)^{\otimes 2}}_{U \in \calU^{(t)}}\\ \text{trivially closed}}}  \tr\left(\prod_{i = 1}^{2r} \Gamma_t^{-1} A^{U_i}_{C_i, C_i'} \right) = \sum_{Q_0, Q_1, \dots, Q_{2r-1} \in \calQ_\ell} \prod_{i = 0}^{2r - 1} (\Gamma_t^{-1})_{Q_i, Q_i} \cdot \mathbbm{1}\left(Q_i\xrightarrow{\text{$P_{\wt{C}_{i+1}}, P_{\wt{C}'_{i+1}}$}} Q_{i+1}\right) \mper
    \end{flalign*}
    with $Q_0 = Q_{2r}$. Thus, the sum that we wish to bound in \Cref{lem:genmaincountingbacktrackingwalks} simply counts the total weight of the trivially closed walks $Q_0, C_1, C'_1, Q_1, \dots, Q_{2r-1}, C_{2r}, C'_{2r}, Q_{2r}$ in the Kikuchi graph $A_{\bU_t}$, where the weight of a walk is simply $\prod_{i = 0}^{2r-1} (\Gamma_t^{-1})_{Q_i, Q_i}$.
    
    Let us now bound this weight by uniquely encoding a walk $Q_0, C_1, C'_1, Q_1, \dots, C_{2r}, C'_{2r}, Q_{2r}$ as follows.
    \begin{itemize}
        \item For $i = 1, \dots, 2r$, we let $z_i$ be $1$ if (1) $C_i = C_j$, (2) $C_i = C_j'$, (3) $C'_i = C_j$, or (4) $C'_i = C'_j$ for some $j < i$. In this case, we say that the edge is ``old''. Otherwise $z_i = 0$, and we say that the edge is ``new''.
        \item Now, we write down the start vertex $Q_0$. 
    	\item For $i = 1, \dots, 2r$, if $z_i$ is $1$ then we encode $Q_i$ by writing down the smallest $j \in [2r]$, the least $c \in [4]$ specifying which of the 4 cases above holds, and lastly the other parallel edge to take. For instance if we have $C_i = C_j$ we must specify $C'_i$ from $\calH_U^{(t)}$ for the other half of the edge.
    	\item For $i = 1, \dots, 2r$, if $z_i$ is $0$ then we encode $Q_i$ by writing down an integer in $1, \dots, \deg(Q_{i-1})$ that specifies the edge we take to move to $Q_i$ from $Q_{i-1}$ (we associate $[\deg(Q_{i-1})]$ to the edges adjacent to $Q_{i-1}$ with an arbitrary fixed map).
    \end{itemize}
    With the above encoding, we can now bound the total weight of all trivially closed walks as follows. First, let us consider the total weight of walks for some fixed choice of $z_1, \dots, z_{2r}$. We have $N$ choices for the start vertex $Q_0$. For each $i = 1, \dots, 2r$ where $z_i = 0$, we have $\deg(Q_{i-1})$ choices for $Q_i$, and we multiply by a weight of $(\Gamma_t^{-1})_{Q_{i-1}} \leq \frac{1}{\deg(Q_{i-1})}$. For each $i = 1, \dots, 2r$ where $z_i = 1$, we have at most $2r$ choices for the index $j < i$ and $4$ for the bit $c$. For the choice of partner, without loss of generality we say $C'_i$, we note that $Q_i, C_i$, and $C_i$'s position are fixed. The number of viable $C'_i$ is then just the local degree, and by the assumption of $\eta$-bounded local degree we can limit this to $\eta$ choices. Finally, we multiply by a weight of $(\Gamma_t^{-1})_{Q_{i-1}, Q_{i-1}} \leq \frac{1}{d^{(t)}}$. Hence, the total weight for a specific $z_1, \dots, z_{2r}$ is at most $N \left(\frac{4\eta_t r}{d^{(t)}}\right)^{\abs{z}}$, where $\abs{z}$ is the number of $z_i$ such that $z_i = 1$.
    
    Finally, we observe that any trivially closed walk must have $\abs{z} \geq r$ to satisfy the even multiplicity condition. Hence, after summing over all $z_1, \dots, z_{2r}$, we have the final bound of $N 2^{2r} \left(\frac{4\eta_t r}{d^{(t)}}\right)^r$, which finishes the proof.
\end{proof}

Now we handle the Gaussian case. Since the odd Gaussian moments are $0$, the definition of trivially closed sequence from \Cref{def:gentriviallyclosedwalks} still holds, but we additionally want to count the number of coefficient repetitions, so we define the following.

    \begin{definition}[$\{m_1, \dots, m_q\}$-walks for Gaussian sequences]
        Given a trivially closed walk\\ $(C_1, C_1') ,\dots, (C_{2r}, C_{2r}') \in \calH_U^{(t)} \times \calH_U^{(t)}$ for some $U \in \calU^{(t)}$, we define the type of the walk as follows. For $(C_i, \cdot)$ in the walk, we say the edge has type $C_i$. Let $\{m_1, \dots, m_q\}$ be the multiplicities of the types of all $2r$ edges. We now repeat this process for $(\cdot, C_i)$, getting a new set of $2t$ types forming a set of multiplicities $\{m_1', \dots, m_{q'}'\}$. We say a walk is a top $\{m_1, \dots, m_q\}$-walk if $\prod_{i=1}^q m_i! \geq \prod_{i = 1}^{q'} m_i'!$. Otherwise we say it is bottom $\{m_1', \dots, m_{q'}'\}$. Note an $m_i$ may be equal to $1$, in which case we call the corresponding edge $(C_i, C_i')$ a singleton edge.
    \end{definition}

    The point of this definition is to claim the following
    \begin{align*}
      \E\left[\tr((\Gamma^{-1} A)^{2r})\right] &= \sum_{\substack{\{m_1, \dots, m_q\}\\ \sum_{i=1}^q m_i = 2r\\ \forall i \in [q], m_i \geq 1}} \sum_{\substack{(C_1, C_1') ,\dots, (C_{2r}, C_{2r}') \\ \in \cbra{(\calH^{(t)}_U)^{\otimes 2}}_{U \in \calU^{(t)}}\\ \{m_1, \dots, m_q\}\text{-walk}}} \E\left[\prod_{i = 1}^{2r}b_{C_i} b_{C_i'}\right] \cdot \tr\left(\prod_{i = 1}^{2r} \Gamma_t^{-1} A^{U_i}_{C_i, C_i'} \right)\\
      &\leq 2^{2r+1}\sum_{\substack{\{m_1, \dots, m_q\}\\ \sum_{i=1}^q m_i = 2r\\ \forall i \in [q], m_i \geq 1}} \prod_{i = 1}^q m_i! \sum_{\substack{(C_1, C_1') ,\dots, (C_{2r}, C_{2r}') \\ \in \cbra{(\calH^{(t)}_U)^{\otimes 2}}_{U \in \calU^{(t)}}\\ \text{top }\{m_1, \dots, m_q\}\text{-walk}}} \tr\left(\prod_{i = 1}^{2r} \Gamma_t^{-1} A^{U_i}_{C_i, C_i'} \right) \mper
    \end{align*}
    To introduce the term $2^{2r}\prod_{i=1}^q m_i!$, we argue the following. By the top $\{m_1, \dots, m_q\}$-walk assumption, we know the multiplicities of half of the $b_C$, and let $\{m_1', \dots, m_{q'}'\}$ be the multiplicities of the remaining bottom half. The natural prefactor should perhaps be $\prod_{i=1}^q (m_i-1)!! \prod_{i=1}^{q'} (m_i'-1)!!$, but this fails to account for the fact that some groups of coefficients may \textit{repeat} across the top and bottom half. Thinking adversarial, we can increase the value of the prefactor by merging groups of size $m_i$ and $m_j'$ for $i \in [q]$ and $j \in [q']$, thus trading $(m_i-1)!! (m_j'-1)!!$ for $(m_i+m_j'-1)!!$. We make two observations now: (1) merging groups is always advantageous as long as we assume in the worst case $m_i + m_j'$ is always even and (2) we can only ever merge pairs, since by assumption the types are always distinct within the top and bottom. The real prefactor is then $\prod_{i=1}^{\max(q, q')} (m_i + m_i' - 1)!!$ up to reordering the indices, which we bound with the following lemma.
    \begin{lemma}
        \label{lem:whatithinkisthefinalpieceofthepuzzle}
        Let $m_1, \dots, m_q$ and $m_1', \dots m_{q'}' \in \N$ be such that $\sum_{i=1}^q m_i = \sum_{i = 1}^{q'} m_i' = 2r$. Then,
        \begin{equation*}
            \prod_{i=1}^{\max(q, q')} (m_i + m_i' - 1)!! \leq 2^{2r} \cdot\max\left(\prod_{i=1}^q m_i!, \prod_{i=1}^{q'} m_i'!\right)\mper
        \end{equation*}
        Here if $m_i$ has $i > q$ then we interpret it to be $0$ and vice versa.
    \end{lemma}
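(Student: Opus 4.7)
The plan is to strip the double factorial down to an ordinary factorial, paying a clean factor of $2^{2r}$, and then finish with a log-convexity-style comparison on factorials. Writing $s_i = (m_i + m_i')/2$—a nonnegative integer, because the preceding merging argument only paired groups whose total multiplicities sum to an even number—the elementary bound $(2k-1)!! = \prod_{j=1}^k (2j-1) \leq \prod_{j=1}^k 2j = 2^k \cdot k!$ applied with $k = s_i$ and multiplied across $i$ yields
\begin{equation*}
    \prod_{i=1}^{\max(q,q')} (m_i + m_i' - 1)!! \leq 2^{\sum_i s_i} \prod_{i=1}^{\max(q,q')} s_i! = 2^{2r} \prod_{i=1}^{\max(q,q')} s_i!\mcom
\end{equation*}
using $\sum_i s_i = \tfrac{1}{2}(2r + 2r) = 2r$.

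It then suffices to establish $\prod_i s_i! \leq \max(\prod_i m_i!,\, \prod_i m_i'!)$, for which I would prove the pointwise inequality $(s_i!)^2 \leq m_i! \cdot m_i'!$ and take square roots. The pointwise claim is a hands-on version of log-convexity of $n \mapsto \log(n!)$: WLOG $m_i \leq s_i \leq m_i'$, and then $m_i'!/s_i! = (s_i+1)(s_i+2)\cdots m_i'$ and $s_i!/m_i! = (m_i+1)(m_i+2)\cdots s_i$ are products of equal length $s_i - m_i$, with each factor in the first dominating the corresponding factor in the second. Hence $m_i'!/s_i! \geq s_i!/m_i!$, i.e., $(s_i!)^2 \leq m_i!\, m_i'!$. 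Taking square roots, multiplying over $i$, and applying $\sqrt{AB} \leq \max(A, B)$ gives
\begin{equation*}
    \prod_i s_i! \leq \sqrt{\prod_i m_i! \cdot \prod_i m_i'!} \leq \max\left(\prod_i m_i!,\; \prod_i m_i'!\right)\mper
\end{equation*}

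I do not expect any real obstacle. The only bookkeeping is the convention that $m_i = 0$ for $i > q$ (and symmetrically for $m_i'$), which is compatible with every step via $0! = 1$; in those slots the evenness of $m_i + m_i'$ reduces to evenness of the nonzero side, which holds because the upstream trivially-closed condition forces each $b_C$ to appear with even total multiplicity across top and bottom. Combining the two displayed bounds gives the claimed $2^{2r} \cdot \max(\prod m_i!, \prod m_i'!)$.
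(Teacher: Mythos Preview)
Your argument is essentially the paper's: bound $(m_i+m_i'-1)!!$ by $2^{(m_i+m_i')/2}\cdot\bigl(\lfloor(m_i+m_i')/2\rfloor\bigr)!$, then use log-convexity of the factorial to get $\bigl(\lfloor(m_i+m_i')/2\rfloor\bigr)! \leq \sqrt{m_i!\,m_i'!}$, multiply across $i$, and finish with $\sqrt{AB}\leq\max(A,B)$. The paper executes exactly these steps.

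The one discrepancy is that you take $s_i=(m_i+m_i')/2$ to be an integer, relying on context to guarantee evenness, whereas the lemma as stated allows odd $m_i+m_i'$. The paper covers both parities with a one-line tweak: when $m_i+m_i'$ is odd it uses $(m_i+m_i'-1)!! = \bigl(2\lfloor(m_i+m_i')/2\rfloor\bigr)!! = 2^{\lfloor(m_i+m_i')/2\rfloor}\lfloor(m_i+m_i')/2\rfloor!$ and the same log-convexity bound. Your contextual justification for evenness is sound for the application, but if you want to match the lemma as a standalone statement, just replace $s_i$ by $\lfloor(m_i+m_i')/2\rfloor$; every step of your argument goes through unchanged since $\lfloor(m_i+m_i')/2\rfloor \leq (m_i+m_i')/2$ and the factorial comparison only needs $s_i$ between $m_i$ and $m_i'$.
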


    Applying the lemma immediately gives the inequality exhibited above, so we give a proof here.

    \begin{proof}[Proof of \Cref{lem:whatithinkisthefinalpieceofthepuzzle}]
        Without loss of generality, assume $q \leq q'$, then we write $\prod_{i=1}^{\max(q, q')} (m_i + m_i' - 1)!! = \prod_{i=1}^q (m_i + m_i' - 1)!! \cdot \prod_{i=q+1}^{q'} (m_i'-1)!!$. Fix $m_i, m_i'$ for $i \in [q]$ such that $m_i + m_i'$ is even. In this case we have
        \begin{equation*}
            (m_i + m_i' -1)!! \leq (m_i + m_i')!! = 2^{\frac{m_i+m_i'}{2}} \left(\frac{m_i+m_i'}{2}\right)! \leq 2^{\frac{m_i+m_i'}{2}}\sqrt{m_i!m_i'!}\mper
        \end{equation*}
        In the last line we appeal to the weak log-convexity of the factorial function. Similarly, if $m_i + m_i'$ is odd we have
        \begin{equation*}
            (m_i + m_i' -1)!! \leq 2^{\frac{m_i+m_i'-1}{2}} \left(\frac{m_i+m_i'-1}{2}\right)! \leq 2^{\frac{m_i+m_i'}{2}} \sqrt{m_i!m_i'!}\mper
        \end{equation*}
        Finally, we observe for any $i > q$, we have $(m_i'-1)!! \leq \sqrt{m_i'!}$ by comparing terms. Combining these bounds, we write
        \begin{align*}
            \prod_{i=1}^{\max(q, q')} (m_i + m_i' - 1)!! 
            &\leq \prod_{i=1}^{\max(q, q')} 2^{\frac{m_i+m_i'}{2}} \sqrt{m_i! m_i'!}\\
            &\leq 2^{\frac{\sum_{i=1}^q m_i + \sum_{i=1}^{q'} m_i'}{2}} \sqrt{\prod_{i=1}^q m_i! \prod_{i=1}^{q'} m_i'!}\\
            &\leq 2^{2r} \cdot\max\left(\prod_{i=1}^q m_i!, \prod_{i=1}^{q'} m_i'!\right)\mper
        \end{align*}
    \end{proof}
    
    We now turn our attention to bounding the sum in its entirety. As before, we can appeal to the $p(r) \leq (e^{\pi \sqrt{\frac{2}{3}}})^{2r} \leq 16^{2r}$ bound on the number of integer partitions $p(r)$ to focus on the maximum of the inner term. To finish, we prove the following.
    
    \begin{lemma}
    \label{lem:gaussianbacktrackingwalks}
    For any choice of $m_1, \dots, m_q$ with $\sum_{i=1}^q m_i = 2r$,
    \begin{equation*}
        2^{2r+1}\prod_{i = 1}^q m_i! \sum_{\substack{(C_1, C_1') ,\dots, (C_{2r}, C_{2r}') \\ \in \cbra{(\calH^{(t)}_U)^{\otimes 2}}_{U \in \calU^{(t)}}\\ \text{top } \{m_1, \dots, m_q\}\text{-walk}}} \tr\left(\prod_{i = 1}^{2r} \Gamma_t^{-1} A^{U_i}_{C_i, C_i'} \right) \leq N \cdot 2^{6r+2} \cdot \left(\frac{4\eta_t r}{d}\right)^r \mper
    \end{equation*}
\end{lemma}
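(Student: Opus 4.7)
The plan is to follow the trace-moment encoding strategy of Lemma~\ref{lem:genmaincountingbacktrackingwalks}, adapted to the Gaussian setting where we must absorb the multiplicity prefactor $\prod_i m_i!$ into the combinatorial bound. I would begin by expanding the trace term inside the sum as a product along a closed walk $Q_0, C_1, C_1', Q_1, \ldots, C_{2r}, C_{2r}', Q_{2r} = Q_0$ in the Kikuchi graph $A_{\bU_t}$, with walk weight $\prod_{i=0}^{2r-1}(\Gamma_t^{-1})_{Q_i, Q_i}$. A top $\{m_1, \ldots, m_q\}$-walk is in particular trivially closed in the sense of \Cref{def:gentriviallyclosedwalks}, so each $b_C$ still appears an even number of times across the $4r$ coordinate slots, preserving the combinatorial constraints of the Rademacher case.

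Next I would encode each walk by (a) picking a start vertex $Q_0$ from $N$ options; (b) assigning the first-coordinate slots to an ordered labeling within the $q$ multiplicity buckets, which accounts for the $\prod m_i!$ overcounting by converting it into at most $(2r)!$ template choices; and (c) walking step by step using bits $z_i, y_i \in \{0,1\}$ indicating whether $C_i$ or $C_i'$ has already been seen in its respective coordinate, exactly as in the Rademacher encoding. A fully new step ($z_i = y_i = 0$) chooses one of $\deg(Q_{i-1})$ edges from $Q_{i-1}$, cancelling the $(\Gamma_t^{-1})_{Q_{i-1}, Q_{i-1}}$ weight. Any other step is specified by $O(r)$ back-reference options (picking a previous index and which coordinate it matches) and, since one of the two endpoints is then fixed, uses the $\eta_t$-bounded local degree to pin down the remaining partner, all at weight $1/d^{(t)}$.

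I would then invoke the trivially closed condition to force $|\{i : z_i + y_i \geq 1\}| \geq r$, so the step-weight product is at most $(O(\eta_t r/d^{(t)}))^r$. Multiplying through the $2^{2r+1}$ prefactor from the statement, the $N$ start vertices, a $2^{O(r)}$ budget for the $(z_i, y_i)$-bits and the label assignments, and the old-step weight reconstructs the target bound $N \cdot 2^{6r+2}(4 \eta_t r/d^{(t)})^r$.

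The main obstacle I anticipate is ensuring $\prod m_i!$ absorbs cleanly regardless of how concentrated the multiplicity structure is. When some $m_i$ is comparable to $r$, naively enumerating the $(2r)!/\prod m_i!$ templates and then multiplying by $\prod m_i!$ reintroduces a $(2r)!$ factor that exceeds the $2^{O(r)}$ budget by roughly $r^r$ unless we additionally exploit that all $m_i$ old occurrences inside a single bucket share the same first coordinate $C$ and therefore traverse only edges of the $(C,\ast)$-subgraph, whose per-vertex branching is $\eta_t$ rather than the full degree of the Kikuchi graph. Making this savings quantitative and uniform in $q \in [2r]$, so that the bound interpolates correctly between the singleton regime ($q = 2r$) and the concentrated regime ($q = 1$), is the technical crux of the argument.
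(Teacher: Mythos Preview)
Your framework is right and you have correctly isolated the crux: the prefactor $\prod_i m_i!$ can be as large as $(2r)!$, so the Rademacher-style encoding of \Cref{lem:genmaincountingbacktrackingwalks} applied verbatim (your steps (a)--(c)) leaves a gap of order $r^r$. You also name the correct remedy---within a bucket all repeats share a fixed first coordinate, so the $\eta_t$-bounded local degree should pay for them---but you have not carried it out, and the way you have layered ``template then $z_i,y_i$-bits'' does not realize that savings: once you fall back to the $z_i,y_i$ encoding you are ignoring the template and the $\prod_i m_i!$ is pure overhead.

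The paper's proof makes the template drive the encoding rather than sit beside it. It separates the $q_0$ singleton buckets ($m_i=1$) from the rest. For indices in a non-singleton bucket, the first occurrence is a free step (degree cancels the $\Gamma^{-1}$ weight) and every subsequent occurrence is specified by one bit (which of the two bucket coordinates to fix) and an element of $[\eta_t]$, contributing $(2\eta_t/d^{(t)})^{2r-q}$ in total. The template count times the prefactor is then bounded by $\prod_i m_i!\binom{2r}{\{m_1,\dots,m_q\}}\leq 2^{O(r)}(2r)^{2r-q}$, and the $(2r)^{2r-q}$ pairs exactly with the $(2\eta_t/d^{(t)})^{2r-q}$---this is how the factorial gets absorbed. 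For the $q_0$ singleton indices there is a separate trick: each singleton's top coordinate must repeat somewhere in the bottom row (trivially closed), so at least $q_0/2$ of them have a backward match or a forward match; paying a global factor of $2$ to possibly reverse the walk forces at least $q_0/2$ backward matches, which are then encoded by an index in $[2r]$, a bit, and an element of $[\eta_t]$, yielding $(4\eta_t r/d^{(t)})^{q_0/2}$. The final inequality $2r-q+q_0/2\geq r$ (from $\sum m_i=2r$ with $m_i\geq 2$ off the singletons) collapses the two exponents to $r$. Your proposal is missing this singleton/non-singleton split, the walk-reversal trick, and the template bound $\leq 2^{O(r)}(2r)^{2r-q}$; without these the argument does not close.
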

Taking $r$ to be $O( \log N)$ for a sufficiently large universal constant and applying Markov's inequality finishes the proof. \end{proof}

\begin{proof}[Proof of \Cref{lem:gaussianbacktrackingwalks}]
    As before, we have for any trivially closed sequence $(C_1, C'_1) ,\dots, (C_{2r}, C'_{2r})$,
    \begin{flalign*}
        \sum_{\substack{(C_1, C_1') ,\dots, (C_{2r}, C_{2r}') \\ \in \cbra{(\calH^{(t)}_U)^{\otimes 2}}_{U \in \calU^{(t)}}\\ \text{trivially closed}}}  \tr\left(\prod_{i = 1}^{2r} \Gamma_t^{-1} A^{U_i}_{C_i, C_i'} \right) = \sum_{Q_0, Q_1, \dots, Q_{2r-1} \in \calQ_\ell} \prod_{i = 0}^{2r - 1} (\Gamma_t^{-1})_{Q_i, Q_i} \cdot \mathbbm{1}\left(Q_i\xrightarrow{\text{$P_{\wt{C}_{i+1}}, P_{\wt{C}'_{i+1}}$}} Q_{i+1}\right) \mper
    \end{flalign*}
    with $Q_0 = Q_{2r}$. The sum in \Cref{lem:gaussianbacktrackingwalks} simply counts the total weight of the trivially closed walks $Q_0, C_1, C'_1, Q_1, \dots, Q_{2r-1}, C_{2r}, C'_{2r}, Q_{2r}$ in the Kikuchi graph $A_{\bU_t}$, where the weight of a walk is simply $\prod_{i = 0}^{2r-1} (\Gamma_t^{-1})_{Q_i, Q_i}$ multiplied by this double factorial prefactor $\prod_{i = 1}^q m_i!$.
    
    Let us bound this weight by uniquely encoding a walk $Q_0, C_1, C'_1, Q_1, \dots, C_{2r}, C'_{2r}, Q_{2r}$ as follows.
    \begin{itemize}
        \item First, we choose the template for the walk, which is the way in which the indices $1, \dots, 2r$ are related such that the multiplicities indeed fulfill a $\{m_1, \dots, m_q\}$-walk. Formally, we can let a template be a partition $T$ of the indices $1, \dots, 2r$ into indistinguishable buckets of sizes $\{m_1, \dots, m_q\}$.
        \item Let $q_0$ be the number of singleton edges $(C_i, C_i')$, the number of $m_j = 1$. Construct a string $z \in \{0,1\}^{q_0}$ as follows. We identify the indices for the singleton edges with $[q_0]$. For $i \in [q_0]$, we look at the corresponding edge $(C_i, C_i')$, and if $C_i'$ appears with $C_j' = C_i$ for the bottom entry of some edge $(C_j, C_j')$ with $j < i$, we let $z_i = 1$. Otherwise, we let $z_i = 0$. Intuitively, the entries with $z_i = 1$ are the ``old'' edges, repeating one of their entries earlier in the walk.
        \item Now we write down the start vertex $Q_0$.
        \item For $i = 1, \dots, 2r$, if $i$ is the first index in a bucket of size at least $2$, we encode the edge $(C_i, C_i')$ from the neighbors of $Q_{i-1}$ walked along, which can be done with a number in $[\deg(Q_{i-1})]$. If $i$ is not the first index in such a bucket, then we look at the first edge with the same type $(C_j, C_j')$ according to $T$. We choose an edge from $Q_{i-1}$ among those sharing at least $C_j$ or $C_j'$, since this is required to fulfill multiplicities. This can be encoded with a bit $b$ indicating $C_j$ or $C_j'$, and then a number in $[\eta_t]$, since there are at most $\eta_t$ neighbors with a fixed entry $C_j$ or $C_j'$ by $\eta_t$-boundedness.
        \item For $i = 1, \dots, 2r$ corresponding to a singleton edge $(C_i, C_i')$ according to $T$, if $z_i = 1$, we encode its pair with an index $j$ in $[2r]$, whether $C_i$ or $C_i'$ repeats with $b$, then encode its neighbor of $Q_{i-1}$ restricted to those matching $C_j$ or $C_j'$ with a number within $[\eta_t]$. If $z_i = 0$, we simply encode which neighbor of $Q_{i-1}$ is taken, a number from $[\deg(Q_{i-1})]$.
    \end{itemize}
    With the above encoding, we can now bound the total weight of closed $\{m_1, \dots, m_q\}$-walks as follows. First, let us consider the total weight of walks for some fixed choice of template $T$. We have $N$ choices for the start vertex $Q_0$. Looking first at entries which belong to a bucket of size at least $2$, for each $i = 1, \dots, 2r$ if $i$ is the first index in its bucket, we have $\deg(Q_{i-1})$ choices for $Q_i$, and we multiply by a weight of $\Gamma^{-1}_{Q_{i-1}, Q_{i-1}} \leq \frac{1}{\deg(Q_{i-1})}$, so the contribution to the product is $1$. Then, for each $i = 1, \dots, 2r$ where $i$ is not the first in its buckets, the number of choices is limited to be $2\eta_t$ by the assumption of $\eta_t$-bounded local degree, since after fixing either $C_j$ or $C_j'$ to appear in the edge, there are at most $\eta_t$ choices for its pair. We still multiply by a weight of $\Gamma^{-1}_{Q_{i-1}, Q_{i-1}} \leq \frac{1}{d^{(t)}}$, giving a factor $\left(\frac{2\eta_t}{d^{(t)}}\right)^{2r-q}$ since there are $2r-q_0$ non-singleton entries and $q-q_0$ of them are the first of their respective bucket. Second, we consider the singleton edges. There are $2^{q_0}$ ways to choose the string $z \in \{0, 1\}^{q_0}$. For each $i= 1, \dots, q_0$ where $z_i = 0$, we have $\deg(Q_{i-1})$ choices for $Q_i$, and we multiply by a weight of $\Gamma^{-1}_{Q_{i-1}, Q_{i-1}} \leq \frac{1}{\deg(Q_{i-1})}$, so the contribution to the product is again $1$. For each $i = 1, \dots, q_0$ where $z_i = 1$, we encode via some choice of index $[2r]$ and the restricted neighbor set, limiting the number of choices to $2\eta_t$, which after multiplying the weight $\Gamma^{-1}_{Q_{i-1}, Q_{i-1}} \leq \frac{1}{d^{(t)}}$ gives $\left(\frac{2\eta_tr}{d^{(t)}}\right)^{\abs{z}}$. There is a nice trick we can use to claim $\abs{z} \geq \frac{q_0}{2}$. Observe, every singleton $(C_i, \cdot)$ must have $C_i$ repeat in the walk to be backtracking, therefore it must match with a bottom entry either earlier or later in the walk, so there must be at least $\frac{q_0}{2}$ matches backwards or forwards. To make sure the majority is in the backward direction, we can simply encode the reversed walk instead, and just pay an extra factor $2$ to encode whether to read the walk in reverse or not. Hence, the total weight for walks coming from a set template here is at most $2^{q_0+1}N \left(\frac{2\eta_t }{d^{(t)}}\right)^{2r-q} \left(\frac{2\eta_t r}{d^{(t)}}\right)^{q_0/2}$. Letting ${2r \choose \{m_1, \dots, m_q\}}$ count the number of templates, our bound is now
    \begin{equation*}
        \prod_{i = 1}^q m_i! {2r \choose \{m_1, \dots, m_q\}} \cdot 2^{q_0+1} \left(\frac{2\eta_t }{d^{(t)}}\right)^{2r-q} \left(\frac{2\eta_t r}{d^{(t)}}\right)^{q_0/2}\mper
    \end{equation*}
    By supposing all $m_i = 1$ fulfill $i \in [q] \setminus \{q-q_0+1 \dots ,q\}$, we can simplify this to
    \begin{equation*}
        2^{2r}\prod_{i = 1}^{q-q_0} m_i! {2r-q_0 \choose \{m_1, \dots, m_{q-q_0}\}} \cdot 2^{q_0+1} \left(\frac{2\eta_t }{d^{(t)}}\right)^{2r-q} \left(\frac{2\eta_t r}{d^{(t)}}\right)^{q_0/2}\mper
    \end{equation*}
    Here we just choose the entire of $[2r]$ that are singleton at a cost of $2^{2r}$, and then choose the rest from the remainder. The quantity $\prod_{i=1}^{q-q_0} m_i! {2r-q_0 \choose \{m_1, \dots, m_{q-q_0}\}}$ can then be interpreted as choosing (1) a partition of $[2r-q_0]$ into $\{m_1, \dots, m_{q-q_0}\}$ and (2) an ordering for the buckets. We now argue we can upper bound this by $2^{2r-q_0}(2r)^{2r-q}$. Plugging this into the above yields a bound
    \begin{equation*}
        2^{4r+1} \left(\frac{4\eta_tr}{d^{(t)}}\right)^{2r-q+q_0/2}\mper
    \end{equation*}
    We finish by noticing $2r-q+q_0/2 \geq r$, since $2r = \sum_{i=1}^q m_i \geq 2(q-q_0) + q_0$ which implies $r \geq q-q_0/2$ and the bound.

    To see the upper bound then, note we can encode such a object as follows. Scanning through $i = 1,\dots, 2r$ we construct a string $z \in \{0,1\}^{2r-q_0}$ by letting $z_i = 0$ if $i$ is the first element in its bucket (according to the chosen ordering), and $z_i = 1$ otherwise. Now for each $z_i = 1$, we have that it is the $j$th element for $j \geq 2$ in whatever bucket it falls in. We encode this by specifying the index of the preceding element, so the $i \in [2r-q_0]$ which is the $(j-1)$th element of its bucket. Clearly there are $2^{2r-q_0}$ choices for $z$, and given a $z$, we need to specify at most $(2r-q_0)^{\abs{z}}$ preceding elements for each $z_i = 1$, with $\abs{z} = 2r-q_0-(q-q_0) = 2r-q$ when the number of buckets is $q-q_0$. The final bound is then below $2^{2r-q_0}(2r)^{2r-q}$ as desired.
\end{proof}

\subsection{Edge deletion algorithm}

In this section, we design and analyze the following greedy edge deletion algorithm for \Cref{lem:edgedeletion}.

\begin{tcolorbox}[
    width=\textwidth,   
    colframe=black,  
    colback=white,   
    title=Edge Deletion Algorithm,
    colbacktitle=white, 
    coltitle=black,      
    fonttitle=\bfseries,
    center title,   
    enhanced,       
    frame hidden,           
    borderline={1pt}{0pt}{black},
    sharp corners,
    toptitle=2.5mm
]
\textbf{Input:} $A_{\bU_t}$ the level-$\ell$ adjacency matrix of the Kikuchi graph for some Hamiltonian $k$-$\XOR$ instance $\calI^{(t)} = (\calH^{(t)}, \{(P_C, b_C)\}_{C \in \calH^{(t)}})$ with a $\calU^{(t)}$-bipartite decomposition $\calH^{(t)} = \{\calH^{(t)}_U\}_{U \in \calU^{(t)}}$ satisfying the output criteria of \Cref{lem:decompositionalg}.\\

\textbf{Output:} A subgraph $\hat{A}_{\bU_t}$ satisfying the criteria of \Cref{lem:edgedeletion}.\\

\textbf{Algorithm:}
\begin{enumerate}
    \item While there is a vertex $Q \in \calQ_\ell$ and $C \in \calH$ such that $d_{Q,C,0}$ or $d_{Q, C, 1} > \eta$, delete an arbitrary edge from $Q$ of type $(C, \cdot)$ or $(\cdot, C)$ respectively.
    \item Let $\rho$ be the largest fraction of edges deleted for any $C \in \calH$. Delete edges arbitrarily until a $\rho$ fraction has been deleted for all $C \in \calH$.
    \item Output the remaining graph.
\end{enumerate}

\end{tcolorbox}

\begin{proof}[Proof of \Cref{lem:edgedeletion}]
    By the specification of the algorithm, any graph output clearly satisfies $\eta$-bounded local degree, so choosing $\eta = O(1)^k \cdot \varepsilon^{-2}$ for some constant to be chosen later is sufficient for the lemma. What remains is to show the largest fraction $\rho$ deleted is at most $\frac{1}{2}$, which is a result of the following technical lemma.
    \begin{lemma}
        \label{lem:edgedeletionanalysis}
        Let $A_{\bU_t}$ be the adjacency matrix of the Kikuchi graph for some Hamiltonian $k$-$\XOR$ instance $\calI^{(t)} = (\calH^{(t)}, \{(P_C, b_C)\}_{C \in \calH^{(t)}})$ with a $\calU^{(t)}$-bipartite decomposition $\calH^{(t)} = \{\calH^{(t)}_U\}_{U \in \calU^{(t)}}$ satisfying the output criteria of \Cref{lem:decompositionalg}. Fix $C \neq C' \in \calH_U^{(t)}$ for some $U \in \calU^{(t)}$. Let $E_{(C, C')}$ be the set of edges in $K_\ell$ associated with this pair, or rather $P_C, P_{C'}$. In the second step of the edge deletion algorithm above, we have the following guarantee for some universal constant
        \begin{equation*}
            \Pr_{(Q, R) \sim E_{(C, C')}}[(Q, R) \text{ deleted}] \leq \frac{D^k}{\eta} \sum_{s = 0}^{k-t} \tau_{s+t} \min\left(1, \left(\frac{\ell}{3n}\right)^{\lfloor \frac{k-t}{2}\rfloor - s}\right)\mper
        \end{equation*}
    \end{lemma}

    Note this bound directly translates to a bound for $\rho$, as the probability is just the fraction of deleted edges for some $(C, C')$. We then show setting $\eta = 8D^k k^3 \varepsilon^{-2}$ achieves $\rho \leq \frac{1}{2}$. In the case $s + t \geq \frac{k+1}{2}$, we can note that $\tau_{s+t} = 4k^2 \varepsilon^{-2}$ and the bound is immediate via
    \begin{equation*}
        \rho \leq \frac{D^k}{\eta} \sum_{s = 0}^{k-t} \tau_{s+t} \min\left(1, \left(\frac{\ell}{3n}\right)^{\lfloor \frac{k-t}{2}\rfloor - s}\right) \leq \frac{1}{8k^3\varepsilon^{-2}} \cdot  (k-t)\tau_{s+t} \leq \frac{1}{8k^2\varepsilon^{-2}} \cdot  4k^2\varepsilon^{-2} \leq \frac{1}{2}\mper
    \end{equation*}
    
    In the case $s + t \leq \frac{k-1}{2}$ we have
    \begin{flalign*}
        \rho &\leq \frac{D^k}{\eta} \sum_{s = 0}^{k-t} \tau_{s+t} \min\left(1, \left(\frac{\ell}{3n}\right)^{\lfloor \frac{k-t}{2}\rfloor - s}\right) \leq \frac{D^k}{\eta} \sum_{s = 0}^{k-t} 4k^2\varepsilon^{-2}\left(\frac{3n}{\ell}\right)^{k/2-s-t} \left(\frac{\ell}{3n}\right)^{\lfloor \frac{k-t}{2}\rfloor - s}\\
        &\leq \frac{D^k}{\eta} \sum_{s = 0}^{k-t} 4k^2\varepsilon^{-2}\left(\frac{\ell}{3n}\right)^{\frac{t}{2} - \frac{\mathbbm{1}(k-t\text{ is odd})}{2}}\leq \frac{1}{2}\left(\frac{\ell}{3n}\right)^{\frac{t}{2} - \frac{\mathbbm{1}(k-t\text{ is odd})}{2}}\mper
    \end{flalign*}
    
    Since $t \geq 1$, we get $\frac{1}{2}$ as our desired bound.
\end{proof}

\begin{proof}[Proof of \Cref{lem:edgedeletionanalysis}]
    Fix $Q \in \calQ_\ell$ a vertex, $C \in \calH^{(t)}_U$ for $U \in \calU^{(t)}$, and $b \in \{0,1\}$. We crudely assume the algorithm deletes \textit{all} edges with $d_{Q, C, b} > \eta$. We then compute:
    \begin{align*}
        \Pr_{(Q, R) \sim E_{(C, C')}}[(Q, R) \text{ deleted}]
        &\leq \Pr_{(Q, R) \sim E_{(C, C')}}[d_{Q, C, 0} > \eta \cup d_{Q, C', 1} > \eta]\\
        &\leq 2 \Pr_{(Q, R) \sim E_{(C, C')}}[d_{Q, C, 0} > \eta]\\
        &\leq \frac{2}{\eta}\E_{(Q, R) \sim E_{(C, C')}}[d_{Q, C, 0}-1] \tag{Markov's Ineq.}\\
        &\leq \frac{2}{\eta}\sum_{C'' \neq C, C' \in \calH^{(t)}_U} \Pr_{(Q, R) \sim E_{(C, C')}}[\exists R' \text{ s.t. } Q  \xrightarrow{\text{$P_{\wt{C}}, P_{\wt{C}''}$}} R']\mper
    \end{align*}

We can think of this process like this: after fixing $(C, C')$ we sample an edge $(Q, R)$. We are now interested in the probability that the edge we sampled is incident to an edge $(Q, R')$ involving $C$. The main fact we establish is a bound on the probability such an edge exists given uniform sampling.
\begin{lemma}
    \label{fact:crossterms}
    Fix $C, C', C'' \in \calH_U^{(t)}$.
    \begin{equation*}
    \Pr_{(Q, R) \sim E_{(C, C')}}[\exists R' \text{ s.t. } Q \xrightarrow{\text{$P_{\wt{C}}, P_{\wt{C}''}$}} R'] \leq {k-t \choose \lfloor \frac{k-t}{2} \rfloor}\min\left(1, \left(\frac{\ell}{3n}\right)^{\lfloor \frac{k-t}{2}\rfloor - \abs{P_{C'} \sqcap P_{C''}}+t}\right)\mcom
    \end{equation*}
    where we define $P_C \sqcap P_{C'} \subseteq [n]$ as $\supp(P_C) \cap \supp(P_{C'}) \setminus \supp(P_C P_{C'})$.
\end{lemma}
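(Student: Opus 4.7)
The plan is to express the probability as a ratio of edge counts, with denominator $|E_{(C, C')}| = \Delta^{(t)}$ from the explicit enumeration in \Cref{obs:kikuchirelaxationodd}. The key observation is that both $(C, C')$- and $(C, C'')$-edges impose identical constraints on the $Q^{(1)}$ component (both are determined by $P_{\wt{C}}$), so the only new restriction imposed by $(C, C'')$-incidence is on $Q^{(2)}$: namely $|\supp(Q^{(2)}) \cap \supp(P_{\wt{C}''})| = \lfloor (k-t)/2 \rfloor$ with Pauli types matching $P_{\wt{C}''}$ on this intersection (the ``vice versa'' ordering for the $(C, C'')$-edge is forced by the already-fixed $(C, C')$-ordering, and the symmetric case contributes identically).

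Next I would decompose $\supp(Q^{(2)}) \cap \supp(P_{\wt{C}''})$ into a ``recycled'' part in $\supp(P_{\wt{C}'})$ (already constrained by the $(C, C')$-edge) and a ``fresh'' part in $\supp(P_{\wt{C}''}) \setminus \supp(P_{\wt{C}'})$. For simultaneous Pauli-matching of $P_{\wt{C}'}$ and $P_{\wt{C}''}$ on the recycled part, it must lie inside $P_{\wt{C}'} \sqcap P_{\wt{C}''}$, a set of size $\beta := |P_{C'} \sqcap P_{C''}| - t$ (the $-t$ reflects that $\supp(U)$ lies in $P_{C'} \sqcap P_{C''}$ but is removed when passing to $P_{\wt{C}'}, P_{\wt{C}''}$). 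Parameterizing by $j :=$ size of the recycled part, we have $0 \leq j \leq \min(\beta, \lfloor(k-t)/2\rfloor)$, and the fresh part must contribute exactly $m := \lfloor(k-t)/2\rfloor - j$ positions with the forced Pauli types. The conditional probability at level $j$ factors as a binomial ratio for the recycled piece (uniform $\lfloor(k-t)/2\rfloor$-subset of $\supp(P_{\wt{C}'})$ hitting $P_{\wt{C}'} \sqcap P_{\wt{C}''}$ in exactly $j$ elements while avoiding the ``mismatch'' positions where $P_{\wt{C}'}, P_{\wt{C}''}$ are both non-trivial but with different Pauli types) times a hypergeometric ratio for the fresh piece (drawing $\ell - (k-t)$ external slots uniformly from $2n - 2(k-t)$ labeled candidates, exactly $m$ of which should land in a target set of size $\leq k-t$ on the $Q^{(2)}$ side), together with an additional $3^{-m}$ for the forced Paulis.

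Applying \Cref{fact:binomest} to both ratios bounds the fresh-piece contribution by $\binom{k-t}{m}(\ell/3n)^m \leq \binom{k-t}{\lfloor(k-t)/2\rfloor}(\ell/3n)^m$, while the recycled-piece ratio is absorbed as an $O(1)^{k-t}$ factor into the universal constant implicit in the prefactor. Summing over at most $\lfloor(k-t)/2\rfloor + 1$ valid values of $j$, the dominant term at $j = \min(\beta, \lfloor(k-t)/2\rfloor)$ yields the stated bound, where the ``$\min(1, \cdot)$'' form arises because when $\beta \geq \lfloor(k-t)/2\rfloor$ we can set $j = \lfloor(k-t)/2\rfloor$ and require no fresh positions. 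The main obstacle is handling the hypergeometric coupling between $Q^{(1)}$ and $Q^{(2)}$ external positions (they share the budget $\ell - (k-t)$); I would resolve this by writing the marginal over the $Q^{(2)}$-external targets as an explicit product of binomials and applying the elementary approximation $\binom{a}{b}/\binom{c}{b} = \Theta((a/c)^b)$ from \Cref{fact:binomest}, and handling the vice-versa ordering for the $(C, C')$-edge by symmetry with an additional factor of $2$ absorbed into constants.
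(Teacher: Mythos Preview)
Your approach is correct and shares the core insight with the paper's proof: both edge conditions constrain $Q^{(1)}$ identically via $P_{\wt{C}}$, so the only new requirement is that $Q^{(2)}$ match $P_{\wt{C}''}$ on $\lfloor (k-t)/2 \rfloor$ positions, which splits into a ``recycled'' part inside $P_{\wt{C}'} \sqcap P_{\wt{C}''}$ and a ``fresh'' part among the random free coordinates. Where you parameterize by the recycling size $j$ and carry separate binomial ratios for the two pieces, the paper takes a shortcut: it simply upper-bounds the recycled contribution by $1$ (``any matches within $P_{\wt{C}'} \sqcap P_{\wt{C}''}$ we get for free \ldots\ in the worst case''), then union-bounds directly over the $\binom{k-t}{\lfloor (k-t)/2\rfloor}$ choices of the matched subset $T \subseteq \supp(P_{\wt{C}''})$, and for each $T$ observes that at least $\gamma = \lfloor (k-t)/2\rfloor - |P_{C'} \sqcap P_{C''}| + t$ fresh positions must be randomly hit with the correct Pauli, contributing $(\ell/3n)^\gamma$ by \Cref{fact:binomest}. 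This sidesteps the summation over $j$ and the hypergeometric coupling you flagged as the main obstacle, at the cost of a slightly cruder constant that is in any case absorbed into the $D^k$ of \Cref{lem:edgedeletionanalysis}. Your more explicit decomposition would go through, but it is not needed for the stated bound.
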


Using \Cref{fact:crossterms} we continue above by writing
\begin{align*}
    \Pr_{(Q, R) \sim E_{(C, C')}}[(Q, R) \text{ deleted}] 
    &\leq \frac{2}{\eta}\sum_{C'' \neq C, C' \in \calH^{(t)}_U} {k-t \choose \lfloor \frac{k-t}{2} \rfloor}\min\left(1, \left(\frac{\ell}{3n}\right)^{\lfloor \frac{k-t}{2}\rfloor - \abs{P_{C'} \sqcap P_{C''}}+t}\right)\\
    &\leq \frac{2}{\eta} \cdot {k-t \choose \lfloor \frac{k-t}{2} \rfloor} \sum_{s = 0}^{k-t} \sum_{\substack{C'' \neq C, C' \in \calH^{(t)}_U \\ J = P_{C'} \sqcap P_{C''} \\ \abs{J} = s+t}} \min\left(1, \left(\frac{\ell}{3n}\right)^{\lfloor \frac{k-t}{2}\rfloor - s}\right)\mper
\end{align*}

In the second line, we are essentially partitioning all $C''$ based on how the operator $P_{C''}$ intersects with $P_{C'}$. Note that since $C, C' \in \calH^{(t)}_U$, it is guaranteed $\abs{P_C \sqcap P_{C'}} \geq t$, since by definition they agree on $U$. We then make the observation that for a fixed $J$ with $\abs{J} = s+t$ for $s > 0$, the maximum number of $C'' \in \calH^{(t)}_U$ with $J = P_{C'} \sqcap P_{C''}$ is $\tau_{s+t}$. This follows directly from the $(\frac{\varepsilon}{2k}, \ell)$-regularity condition guaranteed in \Cref{item:decompositionalg5} of \Cref{lem:decompositionalg}. When $\abs{J} = t$, this bound follows more immediately from the fact that $\abs{\calH^{(t)}_U} \leq \tau_t$. This allows us to bound the number of terms in the inner sum above nicely as
\begin{align*}
    \Pr_{(Q, R) \sim E_{(C, C')}}[(Q, R) \text{ deleted}]
    &\leq \frac{2}{\eta} \cdot {k-t \choose \lfloor \frac{k-t}{2} \rfloor} \sum_{s = 0}^{k-t} \sum_{\abs{J} = s+t, J \subseteq \supp(C')} \tau_{s + t} \min\left(1, \left(\frac{\ell}{3n}\right)^{\lfloor \frac{k-t}{2}\rfloor - s}\right)\\
    &\leq \frac{2 {k-t \choose \lfloor \frac{k-t}{2} \rfloor} {k \choose k/2}}{\eta} \sum_{s = 0}^{k-t} \tau_{s+t} \min\left(1, \left(\frac{\ell}{3n}\right)^{\lfloor \frac{k-t}{2}\rfloor - s}\right)\mper
\end{align*}

Standard binomial estimates give that for some constant $D > 0$ the leading coefficient may be bound by $\frac{D^k}{\eta}$, which is exactly the bound we sought. We finish by proving \Cref{fact:crossterms}.
 \end{proof}

\begin{proof}[Proof of \Cref{fact:crossterms}]

For arbitrary $(C, C')$ we let $Q \in \calQ_\ell$ be any vertex such that $Q \xrightarrow{P_{\wt{C}}, P_{\wt{C}'}} R$. We are interested in the probability that there exists $R'$ with $Q \xrightarrow{P_{\wt{C}}, P_{\wt{C}''}} R'$ for a fixed $C''$.

Taking a look at the structure of $(Q, R)$ in $E_{C, C'}$ the conditions of $Q \xrightarrow{P_{\wt{C}}, P_{\wt{C}'}} R$ guarantee we have $Q^{(1)}$ matches $P_{\wt{C}}$ on either $\lfloor \frac{k-t}{2} \rfloor$ or $\lceil \frac{k-t}{2} \rceil$ of the tensored operators and $Q^{(2)}$ matches $P_{\wt{C}'}$ similarly. Additionally, it guarantees that $Q^{(1)}$ and $Q^{(2)}$ are $\Id$ on all remaining operators intersecting supports with $P_{\wt{C}}$ and $P_{\wt{C}'}$ respectively. In total, this fixes $2(k-t)$ of the operators in $Q$. Since $\abs{Q^{(1)}}+\abs{Q^{(2)}} = \ell$, there remains $\ell-k-t$ free non-identity operators outside of these. Note these may appear anywhere that is not already fixed, the only condition is they match in $Q$ and $R$ to cancel out. We can view the process of sampling an edge, conditioned on these $2(k-t)$ fixed operators, as simply sampling the free operators from ${[2n-2k-2t] \choose \ell-k-t}$ and random operators from $\{X, Y, Z\}$ for each one.

In order for $Q \xrightarrow{P_{\wt{C}}, P_{\wt{C}''}} R'$ to occur, we necessarily have that $Q^{(2)}$ matches $P_{\wt{C}''}$ on at least $\lfloor \frac{k-t}{2} \rfloor$ operators. Any matches within $P_{\wt{C}'} \sqcap P_{\wt{C}''}$ we get for free since $Q^{(2)}$ must already match, at least in the worst case. The $\lfloor \frac{k-t}{2} \rfloor - \abs{P_{\wt{C}'} \sqcap P_{\wt{C}''}}$ remaining must be randomly chosen as per our sampling. Note the number of such terms is $\lfloor \frac{k-t}{2} \rfloor - \abs{P_{C'} \sqcap P_{C''}} + t$ since $\abs{P_{C'} \sqcap P_{C''}} = \abs{P_{\wt{C}'} \sqcap P_{\wt{C}''}} +t$. Fix any choice of terms $T$, of which there are at most ${k-t \choose \lfloor \frac{k-t}{2} \rfloor}$ options. We union bound over all possible $T$. The desired condition is met given (1) $T$ is contained in the set of free operators and (2) all operators match those in $P_{C''}$. The probability of the latter is clearly $\frac{1}{3}^{\lfloor \frac{k-t}{2} \rfloor - \abs{P_{C'} \sqcap P_{C''}} + t}$. The former can be seen as ${2n-2k-2t-\gamma \choose \ell-k-t-\gamma}/{2n-2k-2t \choose \ell-k-t}$ letting $\gamma = \lfloor \frac{k-t}{2} \rfloor - \abs{P_{C'} \sqcap P_{C''}} + t$ and we can bound this as $\left(\frac{\ell}{3n}\right)^{\lfloor \frac{k-t}{2} \rfloor - \abs{P_{C'} \sqcap P_{C''}} + t}$ given our binomial estimates \Cref{fact:binomest}.
\end{proof}
\section{Non-commutative Sum-of-Squares Lower Bounds for Hamiltonian $k$-$\XOR$}
\label{sec:ncsos}

In this section, we prove \Cref{thm:lifting}, transforming Sum-of-Squares $\XOR$ lower bounds to non-commutative Sum-of-Squares Hamiltonian $\XOR$ lower bounds. Given the definition for non-commutative Sum-of-Squares Hamiltonian optimization in \Cref{sec:sos}, all that is needed to show the algorithm fails is some instance $\calI$ for which $\text{val}(\calI) \approx \frac{1}{2}$ and a pseudo-expectation $\pE_\rho$ under which the Hamiltonian $\bH$ has value $1$.

\subsection{Warmup: Random one-basis $k$-$\XOR$ Hamiltonians}

We start with a warmup in proving non-commutative Sum-of-Squares lower bounds via \Cref{thm:soslowerbound}. Our proof generalizes the low-width resolution framework of \cite{Grigoriev01, Schoenebeck08} for proving Sum-of-Squares lower bounds to the non-commutative setting.

Since random (one-basis) $k$-$\XOR$ Hamiltonians typically have $\lambda_{\mathrm{max}}(\bH) \approx \frac{1}{2}$, we just need a pseudo-expectation for such instances. More explicitly, we show:

\begin{theorem}[\Cref{thm:soslowerbound} restated]
    \label{thm:pseudodensity}
    Fix $k \geq 3$ and $n \geq \ell \geq k$. Let $\bH_{\calI}$ be a random one-basis $n$-qubit $k$-$\XOR$ Hamiltonian described by $\calI = (\calH, \{(P_C, b_C)\}_{C \in \calH})$ with $\Omega(n) \cdot \left(\frac{n}{\ell}\right)^{k/2-1} \cdot \varepsilon^{-2} \geq \abs{\calH} \geq O(n) \cdot \varepsilon^{-2}$. Then with large probability over the draw of $\mcI$ it holds that:
    \begin{enumerate}
        \item \label{item:pseudodensity1} $\mathrm{val}(\bH_{\calI}) \leq \frac{1}{2} + \varepsilon$;
        \item \label{item:pseudodensity2} There exists an $n$-qubit degree-$\frac{\ell\varepsilon^2}{2\log n}$ pseudo-expectation $\pE_\rho$ for which $\pE_\rho\sbra{\bH_\calI} = 1$.
    \end{enumerate}
\end{theorem}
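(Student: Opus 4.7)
Because $\bH_\calI$ is one-basis, there is a global unitary $U$ for which $U \bH_\calI U^\dagger = \frac{1}{\abs{\calH}} \sum_C \frac{\Id + b_C Z_C}{2}$, with $Z_C := \prod_{i \in C} Z_i$; the spectrum is unchanged and the ncSoS relaxation value is invariant under conjugation (any pseudo-density $\rho$ for $\bH_\calI$ lifts to $U\rho U^\dagger$ for $U\bH_\calI U^\dagger$ at the same degree), so I may assume $\bH_\calI$ is $Z$-diagonal. Then $\lambda_{\max}(\bH_\calI)$ equals the value of the classical $k$-$\XOR$ instance $(\calH, \{b_C\})$ evaluated on $\{\pm 1\}^n$, and Item 1 follows from Hoeffding's inequality on $\frac{1}{\abs{\calH}} \sum_C b_C \chi_C(x)$ union-bounded over the $2^n$ Boolean assignments: for $\abs{\calH} \geq \Omega(n\varepsilon^{-2})$ with a sufficiently large constant, the failure probability is at most $2^n \cdot e^{-\Omega(\varepsilon^2 \abs{\calH})} = o(1)$.

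\textbf{Lifting a classical pseudo-expectation.}
For Item 2, my plan is to lift a classical $k$-$\XOR$ pseudo-expectation. The regime $\abs{\calH} \leq O(n)(n/\ell)^{k/2-1}\varepsilon^{-2}$ is precisely where the classical SoS lower bounds for random $k$-$\XOR$, e.g.\ \cite{KothariMOW17}, furnish (with high probability over $\{b_C\}$) a degree-$d$ classical pseudo-expectation $\tilde{\E}$ on $\R[x_1,\dots,x_n]/\langle x_i^2 - 1\rangle$ satisfying $\tilde{\E}[\chi_C(x)] = b_C$ for every $C \in \calH$, for $d = \tilde{\Omega}(\ell) \geq \ell\varepsilon^2/\log n$. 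Given such a $\tilde{\E}$, I will define the non-commutative pseudo-expectation on Paulis by
\begin{equation*}
    \pE_\rho[P] := \tilde{\E}_{x \sim \{\pm 1\}^n}\bigl[\bra{x} P \ket{x}\bigr],
\end{equation*}
extended by linearity. Since $\bra{x}P\ket{x} = 0$ unless $P$ is a product of $Z$'s and $\Id$'s (in which case it equals $\chi_{\supp(P)}(x)$), this annihilates off-diagonal Paulis and gives $\pE_\rho[Z_C] = \tilde{\E}[\chi_C] = b_C$; normalization $\pE_\rho[\Id] = \tilde{\E}[1] = 1$ and the identity $\pE_\rho[\bH_\calI] = 1$ then follow immediately.

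\textbf{Positivity and the main obstacle.}
The crux of the plan is checking that for every operator $B$ of degree $\leq d/2$ one has $\pE_\rho[B^\dagger B] \geq 0$. My strategy is to write $\bra{x} B^\dagger B \ket{x}$ as an honest classical sum of squares of degree-$(d/2)$ polynomials in $x$, so that positivity of $\tilde{\E}$ at degree $d$ transfers directly. Expanding $B = \sum_P \beta_P P$, any Pauli $P = \bigotimes_i P_i$ acts on a computational basis state as $P\ket{x} = c_P(x) \ket{x \oplus f(P)}$, where $f(P) \subseteq [n]$ is the set of qubits on which $P$ acts as $X$ or $Y$, and $c_P(x) = \prod_i c_{P_i}(x_i)$ contributes a factor of $x_i$ for each $Z_i$ and a factor of $ix_i$ for each $Y_i$; hence $\deg(c_P) \leq \deg(P) \leq d/2$. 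Grouping $B\ket{x}$ by the flipped-bit set $S$ gives
\begin{equation*}
    B\ket{x} = \sum_{S \subseteq [n]} q_S(x) \ket{x \oplus S}, \qquad q_S(x) := \sum_{P : f(P) = S} \beta_P \, c_P(x),
\end{equation*}
so $\bra{x}B^\dagger B \ket{x} = \norm{B\ket{x}}_2^2 = \sum_{S} \abs{q_S(x)}^2$. Linearity and degree-$d$ positivity of $\tilde{\E}$ then yield $\pE_\rho[B^\dagger B] = \sum_S \tilde{\E}[\abs{q_S(x)}^2] \geq 0$. The only real obstacle in the plan is verifying that the classical pseudo-expectation $\tilde{\E}$ exists at the claimed degree for the stated constraint density, which is a direct quantitative import of \cite{KothariMOW17}; its degree bound translates to the $\ell\varepsilon^2/(2\log n)$ appearing in the statement.
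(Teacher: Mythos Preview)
Your approach is correct and is in fact the second of two routes the paper takes to this result. The proof appearing immediately after the statement builds the pseudo-expectation \emph{from scratch}: it defines a ``max-entropy'' operator by setting $\pE[P_C]=b_C$, propagating multiplicatively along low-degree products, and zeroing out everything not reached; well-definedness is then argued directly via small-set boundary expansion of the random hypergraph $\calH$ (any low-degree derivation of a contradiction would force a small $S\subseteq\calH$ with $\bigoplus_{C\in S}C=\varnothing$, violating expansion), and positivity is proved by partitioning $\calP_{\leq d/2}(n)$ into equivalence classes under $Q\sim R \iff \pE[Q^\dagger R]\neq 0$. You instead black-box the classical \cite{KothariMOW17} pseudo-expectation and lift it; your positivity argument, grouping Paulis by their bit-flip pattern $f(P)=S$ so that $\bra{x}B^\dagger B\ket{x}=\sum_S|q_S(x)|^2$, is precisely the paper's proof of the separate lifting theorem (\Cref{thm:lifting}), where the same grouping appears as ``$Z$-basis equivalence'' with class representatives in $\{\Id,X\}^{\otimes n}$. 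The paper explicitly remarks that combining \Cref{thm:lifting} with \cite{KothariMOW17} recovers \Cref{thm:soslowerbound}, so you have independently found this alternative route. The direct max-entropy construction buys a self-contained argument that only needs the elementary expansion fact rather than the full \cite{KothariMOW17} machinery; your lifting route is shorter and makes the classical-to-quantum reduction transparent. One small point: your reduction to the $Z$-basis by conjugation preserves the ncSoS degree only because the basis change for a single-Pauli-type instance is a tensor product of single-qubit unitaries; this is implicit in the paper's setup but worth stating.
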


This immediately implies that the algorithm of \Cref{sec:sos} fails, giving \Cref{thm:soslowerbound}. 

\begin{proof}
    \Cref{item:pseudodensity1} follows immediately from the lower bound on $\abs{\calH}$ and \Cref{fact:quantumxorconcentration}. To prove \Cref{item:pseudodensity2}, we construct a pseudo-expectation $\pE_\rho$ based on the principle of max-entropy. Since $\pE_\rho$ must have $\pE_\rho\sbra{\bH_\calI} = 1$ in the end, it requires $\pE_\rho\sbra{P_C} = b_C$ for each $C \in \calH$, which intuitively corresponds to $\pE_\rho$ believing that each $b_CP_C$ simultaneously has value $1$ under $\rho$. The principle of max-entropy says that we should enforce exactly $\pE_\rho\sbra{P_C} = b_C$ and any ``hard'' constraints that immediately follow from this and leave everything else unspecified (hence inducing the maximum entropy). For us, hard constraints are derived through the following fact.

    \begin{fact}
        \label{fact:propagate}
        Suppose $P ,Q$ be Pauli operators and $\ket{\psi} \in (\C^2)^{\otimes n}$ be a pure state. If $\bra{\psi} P \ket{\psi} = 1$ and $\bra{\psi} Q \ket{\psi} = 1$ then $\bra{\psi} PQ \ket{\psi} =1$.
    \end{fact}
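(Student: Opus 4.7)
The plan is to reduce the expectation condition to an eigenvector condition and then chain the two. Since each Pauli operator $P$ satisfies $P^2 = \Id$ and $P^\dagger = P$, it is Hermitian with eigenvalues in $\{\pm 1\}$, hence admits an orthogonal decomposition $\Id = \Pi_+ + \Pi_-$ where $\Pi_\pm$ project onto the $\pm 1$ eigenspaces of $P$. For any unit vector $\ket{\psi}$, writing $p = \|\Pi_+ \ket{\psi}\|_2^2$ yields $\bra{\psi} P \ket{\psi} = p - (1-p) = 2p - 1$, so the hypothesis $\bra{\psi} P \ket{\psi} = 1$ forces $p = 1$ and therefore $\Pi_- \ket{\psi} = 0$, i.e. $P \ket{\psi} = \ket{\psi}$.

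First I would carry out this argument for both $P$ and $Q$ to conclude $P \ket{\psi} = \ket{\psi}$ and $Q \ket{\psi} = \ket{\psi}$. Then, crucially without needing $P$ and $Q$ to commute, I simply compose:
\begin{equation*}
    PQ \ket{\psi} = P \ket{\psi} = \ket{\psi},
\end{equation*}
which immediately gives $\bra{\psi} PQ \ket{\psi} = \braket{\psi | \psi} = 1$. The entire argument uses only involutivity and Hermiticity of the Pauli operators, both of which were recorded in the preliminaries.

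There is no real obstacle here; the only subtlety worth flagging in the write-up is that this is a Cauchy--Schwarz-style ``saturation'' statement (the fact that $\bra{\psi} P \ket{\psi} = 1$ together with $\|P\|_2 = 1$ forces $\ket{\psi}$ to be an eigenvector), so I would phrase the passage from expectation $1$ to eigenvector cleanly once and then apply it to both $P$ and $Q$. It is also worth remarking explicitly that no commutation between $P$ and $Q$ is required for the conclusion, since in the ensuing application to pseudo-expectations this fact will be used to propagate constraints through arbitrary products of (not necessarily commuting) Pauli operators.
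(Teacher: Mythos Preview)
Your proof is correct and rests on the same key observation as the paper's: $\bra{\psi}P\ket{\psi}=1$ with $P$ a Pauli forces $P\ket{\psi}=\ket{\psi}$, after which the conclusion is immediate by composition. The paper phrases the final step slightly differently, packaging it as ``$P\ket{\psi}\bra{\psi}Q$ has eigenvalue $+1$, is rank $1$, hence has trace $1$, hence $\bra{\psi}PQ\ket{\psi}=1$'', but this detour through the rank-$1$ operator and its trace is just a rewriting of your direct chaining $PQ\ket{\psi}=P\ket{\psi}=\ket{\psi}$; your version is if anything more transparent.
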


    \begin{proof}
        From $\bra{\psi} P \ket{\psi} \bra{\psi} Q \ket{\psi} =1$ we derive $P\ket{\psi}\bra{\psi}Q$ has an eigenvalue $+1$, and since it is rank $1$, $\tr(P\ket{\psi} \bra{\psi} Q) = 1$ and so $\bra{\psi} PQ \ket{\psi} = 1$.
    \end{proof}

    After enforcing $\bra{\psi} b_C P_C \ket{\psi} = 1$ for all $C \in \calH$ in our pseudo-expectation, we propagate via \Cref{fact:propagate} to all low-degree combinations of $b_CP_C$. Our pseudo-expectation is then defined through the following algorithm.
    
\begin{tcolorbox}[
    width=\textwidth,   
    colframe=black,  
    colback=white,   
    title=$\widetilde{\Omega}(\ell)$-degree max-entropy pseudo-expectation for $\calI$,
    colbacktitle=white, 
    coltitle=black,      
    fonttitle=\bfseries,
    center title,   
    enhanced,       
    frame hidden,           
    borderline={1pt}{0pt}{black},
    sharp corners,
    toptitle=2.5mm,
    label=disp:maxentropy
]
\textbf{Input:} A Hamiltonian $k$-$\XOR$ instance $\calI = (\calH, \{(P_C, b_C)\}_{C \in \calH})$.\\

\textbf{Output:} A valid degree-$d$ pseudo-expectation $\pE$ with $\pE\sbra{\bH_\calI} = 1$.\\

\textbf{Algorithm:}
\begin{enumerate}
    \item Let $\pE[\Id] = 1$.
    \item For every $C \in \calH$ set $\pE[P_C] = b_C$.
    \item Repeat the following until no progress can be made:
    \begin{enumerate}
        \item Choose $Q$ and $R \in \calP_{\leq d}(n)$ with $\weight(Q^\dagger R) \leq d$ and $\pE\sbra{Q} \neq 0$ and $\pE\sbra{R} \neq 0$.
        \item Set $\pE\sbra{Q^\dagger R} = \overline{\pE\sbra{Q}}\pE\sbra{R}$.
        \item If the value was previously something else, throw an error.
    \end{enumerate}
    \item For any remaining $Q \in \calP_{\leq d}(n)$ not set, let $\pE\sbra{Q} = 0$.
\end{enumerate}

\end{tcolorbox}

Given that this algorithm does not error, it is clear that the max-entropy pseudo-expectation for $\calI$ is a normalized operator defined on all degree-$d$ Pauli monomials, which can be extended by linearity to all Hamiltonians in light of the Pauli basis decomposition of \Cref{def:paulibasis}, and moreover $\pE\sbra{\bH_\calI} = 1$ by construction. It remains to show (1) the algorithm does not error for random instances $\calI$ and (2) $\pE$ is indeed a pseudo-expectation, which reduces to showing the positivity condition of \Cref{def:pseudo-expectation}.

We start by showing the max-entropy pseudo-expectation is well-defined. To do so, we need to define the following notion of high-dimensional expansion in hypergraphs.

\begin{definition}[Boundary expansion in hypergraphs]
    \label{def:boundaryexpansion}
    Let $\calH$ be a hypergraph on $[n]$. Then $\calH$ is a $(\beta, d)$-small-set boundary expander if for every subset $S \subseteq \calH$ with $\abs{S} \leq d$, $\abs{\bigoplus_{C \in S} C} \geq \beta \abs{S}$.
\end{definition}

Boundary expanders, or more precisely one-basis Hamiltonian $k$-$\XOR$ instances with boundary expanding underlying hypergraphs, are the exact instances for which the max-entropy pseudo-expectation works.

\begin{lemma}
    \label{lem:maxentropy}
    Let $\calI = (\calH, \{(P_C, b_C)\}_{C \in \calH})$ be a one-basis Hamiltonian $k$-$\XOR$ instance such that $\calH$ is a $(\beta, d)$-small-set boundary expander. Then the degree-$\frac{\beta d}{2}$ max-entropy pseudo-expectation for $\calI$ is well-defined.
\end{lemma}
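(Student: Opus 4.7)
The plan is to leverage the one-basis assumption to reduce to a setting of commuting $Z$-type Paulis and then show that $(\beta, d)$-small-set boundary expansion prevents the algorithm from ever encountering a conflict. By the one-basis structure, after a Clifford change of basis that preserves the underlying hypergraph, we may assume $P_C = Z_C := \prod_{i \in C} Z_i$ for every $C \in \calH$. Then all $P_C$ commute, and for any $S \subseteq \calH$ we have $\prod_{C \in S} P_C = Z_{T(S)}$ where $T(S) := \bigoplus_{C \in S} C$, with classical value $b_S := \prod_{C \in S} b_C \in \{\pm 1\}$.

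Set $d' := \beta d / 2$, the target pseudo-expectation degree. The plan is to track, for each nonzero assignment $\pE[Z_T] = v$ produced by the algorithm, a \emph{certificate}: a subset $S \subseteq \calH$ with $T(S) = T$ and $b_S = v$. The base assignments $\pE[P_C] = b_C$ have certificate $\{C\}$, and the propagation rule $\pE[Q^\dagger R] = \overline{\pE[Q]}\,\pE[R]$ combines certificates additively modulo two, i.e., $S_{Q^\dagger R} = S_Q \oplus S_R$ (valid because all values here are real and in $\{\pm 1\}$, so the complex conjugate acts trivially).

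The heart of the argument is the invariant that every certificate $S$ produced during the run satisfies $|S| \leq d/2$, proved by induction on the algorithm's steps. Combining $|S_Q|, |S_R| \leq d/2$ yields $|S_Q \oplus S_R| \leq d$, so the $(\beta, d)$-boundary expansion hypothesis applies. Since the algorithm only performs the propagation when the resulting Pauli $Z_{T(S_Q \oplus S_R)}$ has weight at most $d'$, boundary expansion gives $d' \geq |T(S_Q \oplus S_R)| \geq \beta |S_Q \oplus S_R|$, so $|S_Q \oplus S_R| \leq d'/\beta = d/2$, preserving the invariant.

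Finally, if the algorithm throws an error, it produces two distinct certificates $S_1 \neq S_2$ for the same $Z_T$ with $b_{S_1} \neq b_{S_2}$. Then $S := S_1 \oplus S_2$ is nonempty, has $|S| \leq d$ by the invariant, and satisfies $T(S) = T \oplus T = \emptyset$, contradicting $|T(S)| \geq \beta |S| > 0$ from boundary expansion. Hence the algorithm terminates successfully, producing a well-defined degree-$d'$ pseudo-expectation. The main obstacle is closing the induction at a favorable degree: the choice $d' = \beta d / 2$ is precisely calibrated so that boundary expansion feeds back through the inductive step to preserve the $|S| \leq d/2$ bound, which in turn is exactly what is needed to apply boundary expansion to the difference $S_1 \oplus S_2$ of any two conflicting certificates.
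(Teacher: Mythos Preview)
Your proof is correct and uses the same core ingredients as the paper (tracking subsets $S\subseteq\calH$ behind each assignment, then invoking $(\beta,d)$-boundary expansion), but the proof structure is different. The paper argues by contradiction in a top-down fashion: it assumes a conflict, obtains two derivations whose symmetric difference $S$ has $\bigoplus_{C\in S}C=\varnothing$ and hence $|S|>d$, and then recursively descends into sub-derivations to locate one whose width lands in the window $[d/2,d]$, which by expansion must have produced an operator of weight at least $\beta d/2$, contradicting the degree cap. Your argument instead runs bottom-up, maintaining throughout the run the invariant $|S|\le d/2$ for every certificate; the degree check $\mathrm{weight}(Q^\dagger R)\le \beta d/2$ together with expansion on $S_Q\oplus S_R$ (which has size $\le d$ by induction) immediately reseats the invariant, and a conflict then yields a nonempty $S_1\oplus S_2$ of size $\le d$ with empty boundary. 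Your packaging is more direct and avoids the somewhat delicate infinite-descent step the paper uses to pin the width into $[d/2,d]$; the paper's version, on the other hand, makes the role of the derivation tree more explicit. Both arguments are equivalent in strength and yield the same degree threshold $\beta d/2$.
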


Intuitively, expansion allows us to say that in order to find a contradiction in the max-entropy construction, we need to go through many variables and exceed the degree threshold, therefore the pseudo-expectation misses these in low degree and believes the instance is satisfiable. We would then like to show that random hypergraphs underlying our random instances are small-set boundary expanders.

\begin{fact}[Theorem 4.12 \cite{KothariMOW17}; Theorem 7.12 \cite{KocurekM25}]
    \label{fact:boundaryexpansion}
    Let $\calH$ be a random $k$-uniform hypergraph on $n$ vertices and $O(n) \cdot \left(\frac{n}{\ell}\right)^{k/2-1-\beta} \cdot \varepsilon^{-2}$ edges. Then $\calH$ is a $(\beta, \ell \varepsilon^{-2})$-small-set boundary expander with large probability.
\end{fact}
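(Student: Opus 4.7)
The plan is a classical first-moment/union-bound argument in the style of expansion proofs for sparse random hypergraphs. For each size $s \in \{1, \ldots, \ell\varepsilon^{-2}\}$, I would bound the expected number of subsets $S \subseteq \calH$ with $\abs{S} = s$ and $\abs{\bigoplus_{C \in S} C} < \beta s$, then union-bound over $s$. The central combinatorial observation is that if $\abs{\bigoplus_{C \in S} C} \leq \beta s$, then $\abs{\bigcup_{C \in S} C} \leq (k+\beta)s/2$, because every vertex outside the symmetric difference must appear at least twice among the $ks$ total vertex-incidences contributed by $S$.

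Consequently, the bad event for a fixed $S$ implies that all $s$ of its edges are contained in some common vertex set $V' \subseteq [n]$ of size at most $v := (k+\beta)s/2$. Since each edge is a uniform random $k$-subset of $[n]$, the probability that $s$ fixed independent edges all fall inside a fixed $V'$ is $\left(\binom{v}{k}/\binom{n}{k}\right)^s$, which is at most $(v/n)^{ks}$ up to constants. Union-bounding over choices of $V'$ (at most $\binom{n}{v}$) and over $S$ (at most $\binom{m}{s}$), then applying the standard estimate $\binom{a}{b}\leq(ea/b)^b$, collapses everything to a bound on the expected number of bad $s$-subsets of the form $\left(C_{k,\beta}\cdot \frac{m}{s}\cdot (s/n)^{(k-\beta)/2}\right)^s$ for some constant $C_{k,\beta}$ depending only on $k$ and $\beta$.

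Substituting the hypothesized $m = O(n)\cdot(n/\ell)^{k/2-1-\beta}\cdot\varepsilon^{-2}$ and using $s \leq \ell\varepsilon^{-2}$, the inner bracket is $o(1)$, so its $s$-th power is polynomially small in $n$ for every $s$ in range; summing over $s$ (a geometric tail) and over candidate boundary sizes $b < \beta s$ closes the union bound. The main obstacle is the exponent bookkeeping: the specific exponent $k/2-1-\beta$ in the edge density is calibrated so that the dominant case $s \approx \ell\varepsilon^{-2}$ just barely survives, with the $-\beta$ contribution purchasing exactly the expansion factor $\beta$. Concretely, the $\binom{n}{v}(v/n)^{ks}$ combinatorial factor yields a saving of $(s/n)^{(k-\beta)/2}$ that must dominate the $m/s$ prefactor; this is precisely where the refutation-threshold density $(n/\ell)^{k/2-1}\varepsilon^{-2}$ enters, shifted by $\beta$. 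This is the standard small-set expansion calculation of \cite{KothariMOW17} (see \cite{KocurekM25} for a version stated with the present parameters), and one may alternatively invoke those results as a black box.
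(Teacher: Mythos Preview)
The paper does not provide its own proof of this statement: it is stated as a \emph{fact} with citations to \cite{KothariMOW17} and \cite{KocurekM25} and is invoked as a black box in the proof of \Cref{thm:pseudodensity}. Your proposal is the standard first-moment/union-bound argument that those references carry out, and your outline is correct; indeed you yourself note at the end that one may simply cite those results directly, which is exactly what the paper does.

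One minor caution on the bookkeeping: your bound $\left(C_{k,\beta}\cdot \frac{m}{s}\cdot (s/n)^{(k-\beta)/2}\right)^s$ is derived under the implicit assumption that $v = (k+\beta)s/2 \leq n$, and the worst case is at $s = \ell\varepsilon^{-2}$ (the largest allowed set size), not an intermediate $s$; the extra $\varepsilon^{-2}$ factors in both $m$ and the set-size cap need to be tracked together, and in the paper's application $\beta = 1/\log n$ is subconstant, so the ``$-\beta$'' shift in the density exponent is essentially negligible and the $(\ell/n)^{\Theta(\beta)}$ savings you identify translate into a polylogarithmic margin rather than a polynomial one. None of this affects the validity of the approach, only the precise form of the ``with large probability'' conclusion; the cited references handle these edge cases.
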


Finally, once we have that the max-entropy pseudo-expectation is well-defined, we need that it is indeed a pseudo-expectation.

\begin{lemma}
    \label{lem:positivity}
    Given an $n$ qubit one-basis Hamiltonian $k$-$\XOR$ instance $\calI = (\calH, \{(P_C, b_C)\}_{C \in \calH})$, if the max-entropy operator $\pE$ for $\calI$ is well-defined, then $\pE$ satisfies positivity, that for any degree at most $d/2$ operator $\bH$ acting on $(\C^2)^{\otimes n}$, $\pE\sbra{\bH^\dagger \bH} \geq 0$.
\end{lemma}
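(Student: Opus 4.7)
The plan is to reduce the required positivity $\pE[\bH^\dagger \bH] \geq 0$ to the (classical) positivity of the max-entropy pseudo-expectation for the underlying classical $\XOR$ instance, exploiting that in the one-basis setting (WLOG the $Z$-basis) the max-entropy algorithm only assigns nonzero values to $Z$-type Paulis $Z_S$ with $S$ in the $\oplus$-span $\calF$ of $\{C\}_{C \in \calH}$. Given any operator $\bH$ of degree $\leq d/2$, I would first expand $\bH = \sum_P \alpha_P P$ in the Pauli basis and then group terms by the ``$X/Y$-support'' $\chi(P) \subseteq [n]$ consisting of qubits where $P$ acts as $X$ or $Y$, writing $\bH = \sum_\chi \bH_\chi$. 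For $P$ with $\chi(P) = \chi$ and $Q$ with $\chi(Q) = \chi' \neq \chi$, the product $P^\dagger Q$ acts as $X$ or $Y$ on every qubit of $\chi \oplus \chi' \neq \emptyset$, so it is not $Z$-type and $\pE[P^\dagger Q] = 0$. Consequently
\begin{equation*}
    \pE[\bH^\dagger \bH] = \sum_\chi \pE[\bH_\chi^\dagger \bH_\chi]\mcom
\end{equation*}
and positivity reduces to a block-by-block analysis indexed by $\chi$.

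Fix $\chi$ and parametrize a Pauli $P_{s,t}$ with $X/Y$-support exactly $\chi$ by $s \in \{0,1\}^\chi$ (with $s_i = 0$ iff $P$ acts as $X$ on qubit $i$, else $Y$) and $t \subseteq [n]\setminus\chi$ (the $Z$-support outside $\chi$). A direct single-qubit computation using $XY = iZ$ and $YX = -iZ$ yields
\begin{equation*}
    P_{s,t}^\dagger P_{s',t'} \;=\; i^{|s'|-|s|}\; Z_{(s \oplus s') \sqcup (t \oplus t')}\mcom
\end{equation*}
so with $\mu(R) := \pE[Z_R]$ the block matrix $M^\chi$ has entries $M^\chi_{(s,t),(s',t')} = i^{|s'|-|s|}\mu\bigl((s \oplus s') \sqcup (t \oplus t')\bigr)$. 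Letting $D$ be the diagonal unitary with $(s,t)$-entry $i^{|s|}$, one checks $M^\chi = D^\dagger \tilde M^\chi D$ for the real matrix $\tilde M^\chi_{(s,t),(s',t')} := \mu((s \oplus s') \sqcup (t \oplus t'))$, so $M^\chi \succeq 0$ iff $\tilde M^\chi \succeq 0$.

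To close, I would identify the index $(s,t)$ with the subset $s \sqcup t \subseteq [n]$ (injective since $\chi$ and $[n]\setminus\chi$ are disjoint), whose size is $|s|+|t| \leq |\chi|+|t| = \deg(P_{s,t}) \leq d/2$. Under this identification $\tilde M^\chi$ is precisely a principal submatrix of the classical degree-$d$ moment matrix $M^{\mathrm{cl}}_{S,T} := \mu(S \oplus T)$ ranging over $|S|,|T| \leq d/2$, so $\tilde M^\chi \succeq 0$ reduces to $M^{\mathrm{cl}} \succeq 0$. But $M^{\mathrm{cl}}$ is exactly the moment matrix of the classical max-entropy pseudo-expectation for the $\XOR$ instance $(\calH, \{b_C\}_{C \in \calH})$, whose positivity is the standard Grigoriev/Schoenebeck result on small-set boundary-expanding hypergraphs, applicable here via \Cref{fact:boundaryexpansion} and our hypothesis on $\abs{\calH}$. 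The only non-routine step is verifying the phase factor $i^{|s'|-|s|}$ in the block product and checking that it cleanly splits off as conjugation by the diagonal unitary $D$; once that is in hand, everything else is bookkeeping and invocation of the classical black box.
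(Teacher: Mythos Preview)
Your approach is correct and takes a related but genuinely different route from the paper. Both arguments block-diagonalize $\pE[\bH^\dagger \bH]$ and show each block is PSD, but the paper uses the finer equivalence relation $Q \sim_{\pE} R \iff \pE[Q^\dagger R] \neq 0$, under which every block is rank one and positivity follows immediately from the factorization $\pE[P^\dagger P'] = \overline{\pE[P^\dagger Q_i]}\,\pE[Q_i^\dagger P']$ with a class representative $Q_i$. You instead group by the coarser $X/Y$-support $\chi$, conjugate away the phase $i^{|s'|-|s|}$ with a diagonal unitary, and identify each block with a principal submatrix of the classical moment matrix. Your route makes the reduction to the classical lower bound explicit and is clean for a reader who already has the Grigoriev/Schoenebeck machinery; the paper's route is self-contained and does not invoke the classical result as a black box.

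One wrinkle in your final step: you justify $M^{\mathrm{cl}} \succeq 0$ by appealing to boundary expansion via \Cref{fact:boundaryexpansion} and ``our hypothesis on $\abs{\calH}$'', but \Cref{lem:positivity} assumes neither of these --- its only hypothesis is that the max-entropy operator $\pE$ is well-defined. What you actually need is the (standard) fact that a \emph{well-defined} classical max-entropy pseudo-expectation is automatically positive; this is the second half of the Grigoriev argument and requires no expansion (expansion is used only to establish well-definedness, which here is already assumed). Since in the one-basis setting well-definedness of the Hamiltonian $\pE$ is equivalent to well-definedness of the corresponding classical pseudo-expectation, your reduction goes through once you cite the correct piece of the classical argument.
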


Putting everything together, we have by \Cref{fact:boundaryexpansion} with $\beta = \frac{1}{\log n}$ that random one-basis Hamiltonian $k$-$\XOR$ instances with at most $O(n) \cdot \left(\frac{n}{\ell}\right)^{k/2-1}$ terms have $(\frac{1}{\log n}, \ell \varepsilon^{-2})$-small-set boundary expansion with large probability. By \Cref{lem:maxentropy}, the degree-$\frac{\ell \varepsilon^{-2}}{2\log n}$ max-entropy pseudo-expectation is well-defined, and by \Cref{lem:positivity} it is a valid pseudo-expectation.
\end{proof}

It remains to prove \Cref{lem:maxentropy} and \Cref{lem:positivity}.

\begin{proof}[Proof of \Cref{lem:maxentropy}]
    Suppose that the degree-$\frac{\beta d}{2}$ max-entropy pseudo-expectation is not well-defined for $\calI = (\calH, \{(P_C, b_C)\}_{C \in \calH})$, then there exists some Pauli $P$ which we try to set $\pE\sbra{P} = 1$ and $\pE\sbra{P} = -1$, or rather there is a derivation in the algorithm of both. More explicitly, we say a derivation is the following.
    \begin{definition}[Derivations]
        \label{def:derivation}
        A degree-$d$ derivation of an operator $P \in \{\Id, X, Y, Z\}^{\otimes n}$ is a sequence of operators $(Q_1, ..., Q_t) \in (\C^2)^{\otimes n}$, $Q_t = P$ with the properties:
        \begin{enumerate}
            \item For all $i \in [t]$, $\weight(Q_i) \leq d$.
            \item For all $i \in [t]$, $\exists j,k < i$ such that $Q_i = Q_j \cdot Q_k$ or $Q_i \in \calH$.
        \end{enumerate}
        We call $t$ the length of the derivation.
    \end{definition}

    \begin{remark}
        A degree-$d$ derivation $D$ of $P \in \{\Id, X, Y, Z\}^{\otimes n}$ inductively shows the existence of $S(D) \subseteq \calH$ with $\prod_{C \in S(D)} P_C = P$. We call $\abs{S(D)}$ the width of the derivation.
    \end{remark}

    Observe that every step of the algorithm can be written as a new term in a derivation. Erroring implies there are two derivations $D, D'$ such that $\prod_{C \in D} P_C = \prod_{C \in D'} P_C$ and $\prod_{C \in D} b_C = -\prod_{C \in D'} b_C$. Multiplying them together then yields a derivation $D''$ for $\Id$ such that $\pE\sbra{\Id} = -1$. Since $D$ and $D'$ are distinct (due to their opposite signs), this yields a non-trivial set $S \subseteq \calH$ such that $\prod_{C \in S} P_C = \Id$. Since all $P_C$ are of the same type, this is equivalent to $\bigoplus_{C \in S} C = \varnothing$.

    The upshot is that, by the assumption of $(\beta, d)$-small-set expansion, such an $S$ must have $\abs{S} > d$, otherwise this violates the expansion. Now our goal becomes to find a subsequence of $D$ or $D'$ that has width in the interval $[d/2, d]$. By $(\beta, d)$-small-set expansion, such a derivation produces an operator of weight at least $\frac{\beta d}{2}$, which contradicts the assumption that the algorithm never looks at operators of weight above $\frac{\beta d}{2}$.

    To see this then, just note that since $\abs{S} > d$, one of $D$ or $D'$ must have width at least $d/2$, since $S$ is just the symmetric difference $S(D) \oplus S(D')$. If it falls in $[d/2, d]$ we are done, otherwise it must be larger than $d$, in which case we recurse by looking at the maximal subderivation. Since all derivations start at width $k \leq d$, we must eventually cross the interval $[d/2, d]$.
\end{proof}

\begin{proof}[Proof of \Cref{lem:positivity}]
    Let $\bH$ be an operator acting on $(\C^2)^{\otimes n}$ with degree at most $d/2$. We define the following relation $\sim_{\pE}$ on $\calP_{\leq d/2}(n)$.

    \begin{definition}[Max-entropy equivalence on $\calP_{\leq d/2}(n)$]
        Let $Q, R \in \calP_{\leq d/2}(n)$ be related under $\sim_{\pE}$ if $\pE\sbra{Q^\dagger R} \neq 0$.
    \end{definition}

    We prove that this is an equivalence relation here.

    \begin{proof}
        Since $Q^2 = \Id$, and $\pE\sbra{\Id} = 1$, $Q \sim_{\pE} Q$. Similarly, $Q \sim_{\pE} R$ results in $\pE\sbra{Q^\dagger R} \neq 0$ and by commuting/anti-commuting $\pE\sbra{R^\dagger Q} \neq 0$ and $R \sim_{\pE} Q$. For transitivity, observe that if $Q \sim_{\pE} R$, $R \sim_{\pE} T$ then $\pE\sbra{Q^\dagger R} \neq 0$ and $\pE\sbra{R^\dagger T} \neq 0$. Moreover, the algorithm must derive $\pE\sbra{Q^\dagger T} \neq 0$ and in degree at most $d$, and so $Q \sim_{\pE} T$.
    \end{proof}

    Now let $\{\calF_i\}_{i \in [r]}$ be the set of induced equivalence classes under $\sim_{\pE}$ and write $\bH = \sum_{P \in \calP_{\leq  d/2}(n)} \langle \bH, P \rangle \cdot P$ and further partition based on the equivalence classes to get $\sum_{i= 1}^r \sum_{P \in \calF_i} \langle \bH, P \rangle \cdot P$. Observe that
    \begin{align*}
        \pE\sbra{\bH^\dagger \bH} &= \pE\sbra{\left(\sum_{i= 1}^r \sum_{P \in \calF_i} \langle \bH, P \rangle \cdot P\right)^\dagger \left(\sum_{i= 1}^r \sum_{P \in \calF_i} \langle \bH, P \rangle \cdot P\right)}\\
        &= \sum_{i=1}^r \sum_{j=1}^r \sum_{\substack{P \in \calF_i\\ P' \in \calF_j}} \overline{\langle \bH, P\rangle} \cdot \langle \bH,P'\rangle \cdot \pE\sbra{P^\dagger P'}\\
        &= \sum_{i=1}^r \sum_{P, P' \in \calF_i} \overline{\langle \bH, P\rangle} \cdot \langle \bH,P'\rangle \cdot \pE\sbra{P^\dagger P'} \mcom
    \end{align*}
    with the last line following directly from the definition of $\sim_{\pE}$. We finish by choosing some representative element $Q_i \in \calF_i$ for each equivalence class and observing that for any $P, P' \in \calF_i$, $\pE\sbra{P^\dagger P'} = \overline{\pE\sbra{P^\dagger Q_i}}\pE\sbra{Q_i^\dagger P'}$. This follows since by the definition of $\sim_{\pE}$, both $\pE\sbra{P^\dagger Q_i}$ and $\pE\sbra{Q_i^\dagger P'}$ are defined, so the algorithm uses them to derive $\pE\sbra{P^\dagger P'}$. This allows us to rewrite
    \begin{align*}
        \pE\sbra{\bH^\dagger \bH} &= \sum_{i=1}^r \sum_{P, P' \in \calF_i} \overline{\langle \bH, P\rangle} \cdot \langle \bH,P'\rangle \cdot \overline{\pE\sbra{P^\dagger Q_i}} \pE\sbra{Q_i^\dagger P'}\\
        &= \sum_{i=1}^r \left(\sum_{P \in \calF_i} \langle \bH, P\rangle \cdot \pE\sbra{Q_i^\dagger P}\right)^\dagger \left(\sum_{P \in \calF_i} \langle \bH, P\rangle \cdot \pE\sbra{Q_i^\dagger P}\right)\\
        & \geq 0\mper
    \end{align*}
    We use here that $\overline{\pE\sbra{P^\dagger Q_i}} = \pE\sbra{Q_i^\dagger P}$. Since the only non-zero values are $\{\pm 1, \pm i\}$, this is equivalent to saying that $\pE\sbra{P^\dagger Q_i}$ is real if and only if $P$ and $Q_i$ commute. Suppose $P^\dagger Q_i$ is a $\pm 1$-sign of some operator in $\{\Id, X\}^{\otimes n}$ (without loss of generality we are assuming $\calI$ is homogeneous in the $X$-basis). Note that this is equivalent to commutation. The algorithm only sets these operators to real values, so $\pE\sbra{P^\dagger Q_i}$ is real. Assume now that $P^\dagger Q_i$ is a $\pm i$-sign of a $\{\Id, X\}^{\otimes n}$. Since the base operator is real-valued, $\pE\sbra{P^\dagger Q_i}$ is then complex-valued.
\end{proof}

Doing a post-mortem of the above proof elucidates why random $k$-$\XOR$ Hamiltonians do not fool non-commutative Sum-of-Squares. Suppose some $P = b_C P_C$ and $Q = b_{C'}P_{C'}$ anti-commute. Multiplying $PQPQ = -\Id$ and applying \Cref{fact:propagate} yields that any ground state with energy $1$ must have $-\bra{\psi} \Id \ket{\psi} = 1$, a contradiction. Moreover, non-commutative Sum-of-Squares ``knows'' this, as the following shows.

\begin{fact}
    \label{fact:ncsosknows}
    Suppose $P, Q$ be $k$-local Pauli operators that anti-commute. There is no degree-$2k$ pseudo-expectation $\pE_\rho$ with $\pE_\rho\sbra{P} = 1$ and $\pE_\rho\sbra{Q} = 1$.
\end{fact}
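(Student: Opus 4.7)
The strategy is to force the pseudo-expectation into a direct self-contradiction by showing that $\pE_\rho[PQ]$ must simultaneously equal $+1$ and $-1$. The core tool is the pseudo-expectation Cauchy--Schwarz inequality, which follows immediately from the positivity axiom of \Cref{def:pseudo-density}: if $\pE_\rho[A^\dagger A] = 0$ and $A, B$ both have degree at most $d/2$, then expanding $\pE_\rho[(\alpha A + B)^\dagger(\alpha A + B)] \geq 0$ and optimizing over $\alpha \in \C$ forces $\pE_\rho[A^\dagger B] = 0$.

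First, I would saturate positivity at the operator $A = P - \Id$, which has degree $k \leq d/2$ (this is precisely where the $d = 2k$ hypothesis is used). Since $P$ is Hermitian and involutory,
\begin{equation*}
    (P - \Id)^\dagger(P - \Id) = P^2 - P - P^\dagger + \Id = 2\Id - 2P,
\end{equation*}
so the hypothesis $\pE_\rho[P] = 1$ together with normalization $\pE_\rho[\Id] = 1$ yields $\pE_\rho[(P - \Id)^\dagger(P - \Id)] = 0$. Applying the pseudo-expectation Cauchy--Schwarz to this vanishing term with $B = Q$ (valid since $Q$ is also $k$-local, hence of degree $\leq d/2$) gives $\pE_\rho[(P - \Id)^\dagger Q] = 0$, i.e.\ $\pE_\rho[PQ] = \pE_\rho[Q] = 1$. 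By the exact same argument with the roles of $P$ and $Q$ swapped, $\pE_\rho[QP] = 1$.

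The contradiction then falls out of anti-commutation: $PQ = -QP$ combined with linearity of $\pE_\rho$ yields $\pE_\rho[PQ] = -\pE_\rho[QP] = -1$, which clashes with $\pE_\rho[PQ] = 1$. There is no substantive obstacle; the only subtlety worth verifying is that Cauchy--Schwarz for complex-valued pseudo-expectations is a genuine consequence of the positivity axiom, which is a standard variational calculation in $\alpha \in \C$.
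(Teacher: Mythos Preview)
Your proof is correct and takes a genuinely different route from the paper's. The paper exhibits a single explicit witness $\bH = \tfrac{1}{3}(-P + Q + PQ)$ and computes directly that $\pE_\rho[\bH^\dagger \bH] = \tfrac{1}{9}\,\pE_\rho[3\Id - 2P - 2Q] = -\tfrac{1}{9}$, a one-shot positivity violation. You instead use the Cauchy--Schwarz consequence of positivity to propagate the saturated constraints $\pE_\rho[P]=\pE_\rho[Q]=1$ to $\pE_\rho[PQ]=\pE_\rho[QP]=1$, and then let anti-commutation do the work.

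Your approach has a concrete advantage here: every operator you feed into the positivity axiom---namely $\alpha(P-\Id)+Q$ and $\alpha(Q-\Id)+P$---has degree exactly $k$, which is precisely what a degree-$2k$ pseudo-expectation guarantees via \Cref{def:pseudo-density}. The paper's witness, by contrast, contains the term $PQ$, and for generic anti-commuting $k$-local $P,Q$ one has $\deg(PQ)$ as large as $2k-1$, so the positivity axiom as stated does not literally apply to that $\bH$ at level $2k$; one would need degree roughly $4k$ for the argument as written. Your route sidesteps this entirely. The paper's approach buys brevity---one line of algebra rather than the Cauchy--Schwarz detour---but yours is tighter with respect to the stated degree hypothesis.
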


\begin{proof}
    Consider $\bH = \frac{1}{3}(-P + Q + PQ)$. By positivity, any valid pseudo-expectation of degree-$2k$ has $\pE_\rho\sbra{\bH^\dagger \bH} \geq 0$. However,
    \begin{align*}
        \pE_\rho\sbra{\bH^\dagger \bH} &= \frac{1}{9}\pE_\rho\sbra{3\Id - PQ - Q - QP + QPQ - Q + QPQ}\\
        &= \frac{1}{9}\pE_\rho\sbra{3\Id - 2P - 2Q}\\
        &= -\frac{1}{9}\mper
    \end{align*}
\end{proof}

\subsection{Lifting Sum-of-Squares lower bounds}

In this section, we prove \Cref{thm:lifting}, lifting Sum-of-Squares lower bounds to non-commutative Sum-of-Squares lower bounds. The previous section showed how we build consistent pseudo-expectations for one-basis Hamiltonian $k$-$\XOR$ instances. Now, we show how to take a classical $k$-$\XOR$ instance and associate a one-basis Hamiltonian instance.

\begin{theorem}[\Cref{thm:lifting} restated]
    Fix $k \geq 2$. Given a classical $k$-$\XOR$ instance $\calI = (\calH, \{b_C\}_{C \in \calH})$, we can compute in polynomial-time a description of the Hamiltonian $k$-$\XOR$ instance $\calJ = (\calH, \{(Z_C, b_C)\}_{C \in \calH})$ satisfying:
    \begin{enumerate}
        \item \label{item:lifting1} $\mathrm{val}(\calI) = \lambda_{\mathrm{max}}(\bH_\calJ)$;
        \item \label{item:lifting2} The degree-$d$ non-commutative Sum-of-Squares value of $\bH_\calJ$ is the degree-$d$ Sum-of-Squares value of $\calI$.
        \item \label{item:lifting3} The maximal eigenvector of $\bH_\calJ$ is a product state.
    \end{enumerate}
\end{theorem}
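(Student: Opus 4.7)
The plan is to exploit the diagonal structure of $\bH_\calJ$ to dispatch Items~\ref{item:lifting1} and~\ref{item:lifting3} immediately, and then to construct an explicit value-preserving bijection between degree-$d$ SoS pseudo-distributions for $\calI$ and degree-$d$ ncSoS pseudo-density operators for $\bH_\calJ$, which gives Item~\ref{item:lifting2}.

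Every $Z_C$ is diagonal in the computational basis with $Z_C \ket{x} = \chi_C(x) \ket{x}$ for $x \in \{\pm 1\}^n$, so $\bH_\calJ \ket{x} = f(x) \ket{x}$ where $f(x) = \frac{1}{\abs{\calH}} \sum_{C \in \calH} \frac{1 + b_C \chi_C(x)}{2}$ is exactly the value polynomial of $\calI$. Hence $\lambda_{\max}(\bH_\calJ) = \max_x f(x) = \mathrm{val}(\calI)$, giving Item~\ref{item:lifting1}; the argmax is realized by a computational basis state $\ket{x^*}$, which is a product state, giving Item~\ref{item:lifting3}.

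For Item~\ref{item:lifting2}, forward direction (SoS $\Rightarrow$ ncSoS): given a degree-$d$ pseudo-expectation $\pE_\mu$ on $\{\pm 1\}^n$, define $\pE_\rho[\bO] := \pE_\mu[\bra{x} \bO \ket{x}]$, where $\bra{x} \bO \ket{x}$ is the diagonal polynomial obtained by keeping only the $Z$-type Paulis in the Pauli basis expansion of $\bO$; this is a polynomial in $x$ of degree at most $\deg(\bO) \leq d$, so $\pE_\rho$ is well-defined and normalized. For positivity, write any degree-$(d/2)$ operator as $\bO = \sum_{a \in \{0,1\}^n} X^a F_a(Z)$ with $F_a(Z) = \sum_b c_{a,b} Z^b$. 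The operator-weight condition $\abs{a \cup b} \leq d/2$ forces $\abs{b} \leq d/2$ term-by-term, so each $F_a$ has polynomial degree at most $d/2$. Since $X^a Z^b \ket{x} = \chi_b(x) \ket{x \oplus a}$, we have $\bO \ket{x} = \sum_a F_a(x) \ket{x \oplus a}$, and
\begin{equation*}
    \bra{x} \bO^\dagger \bO \ket{x} \;=\; \norm{\bO \ket{x}}_{2}^{2} \;=\; \sum_a \abs{F_a(x)}^2 \;=\; \sum_a \bigl(\Re F_a(x)\bigr)^2 + \bigl(\Im F_a(x)\bigr)^2
\end{equation*}
is a sum of squares of real polynomials of degree $\leq d/2$, so $\pE_\rho[\bO^\dagger \bO] \geq 0$ by SoS positivity applied termwise.

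Backward direction (ncSoS $\Rightarrow$ SoS) is simply the substitution $x_i \mapsto Z_i$: define $\pE_\mu[p(x)] := \pE_\rho[p(Z_1, \ldots, Z_n)]$. Normalization is trivial; Booleanity $\pE_\mu[p\cdot (x_i^2 - 1)] = 0$ follows from $Z_i^2 = \Id$; and positivity $\pE_\mu[p^2] = \pE_\rho[p(Z)^\dagger p(Z)] \geq 0$ follows from ncSoS positivity because $p(Z)$ is Hermitian (for real $p$) of operator degree $\leq d/2$. Both maps preserve the objective since $f(Z) = \bH_\calJ$, yielding $\pE_\rho[\bH_\calJ] = \pE_\mu[f(x)]$ throughout. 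I expect the main subtlety to live in the forward positivity check: carefully verifying that the $X^a Z^b$ expansion of a low-degree operator has each $F_a$ of polynomial degree $\leq d/2$, and that splitting each complex $F_a$ into real and imaginary parts genuinely produces a sum of squares compatible with SoS positivity over $\{\pm 1\}^n$. Everything else is bookkeeping once one notices that the diagonal structure of $\bH_\calJ$ makes the Pauli expansion and polynomial substitution maps degree- and objective-preserving.
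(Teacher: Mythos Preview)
Your proof is correct and follows essentially the same route as the paper: the diagonal-structure argument for Items~\ref{item:lifting1} and~\ref{item:lifting3} is identical, and your forward-direction positivity check via $\bra{x}\bO^\dagger\bO\ket{x} = \|\bO\ket{x}\|^2 = \sum_a |F_a(x)|^2$ is exactly the paper's $Z$-basis equivalence-class decomposition (the classes $\calF_i$ are indexed by the $X$-part $a$, and within each class the paper obtains the same square $|F_a(x)|^2$), just written more concretely. You also supply the backward direction of Item~\ref{item:lifting2} (ncSoS $\Rightarrow$ SoS via $x_i \mapsto Z_i$), which the paper's proof omits but which is needed for the equality as stated.
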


The mapping is from $C \mapsto Z_C$ but we just as easily could have used $X_C$ or $Y_C$. As the first step in the proof, we show that this does not raise the value of the instance.

\begin{proof}[Proof of \Cref{item:lifting1} and \Cref{item:lifting3}]
Recall that $\lambda_{\mathrm{max}}(\bH_\calJ) = \frac{1}{2} + \frac{1}{2\abs{\calH}}\lambda_{\mathrm{max}}(\sum_{C \in \calH} b_CZ_C)$. We also need the following fact.

\begin{fact}
    Let $\calI = (\calH, \{b_C\}_{C \in \calH})$ be a classical $k$-$\XOR$ instance and $x \in \{\pm 1\}^n$. Then,
    \begin{equation*}
        \mathrm{val}(\calI, x) = \frac{1}{2} + \frac{1}{2\abs{\calH}}\sum_{C \in \calH} b_C \prod_{i \in C} x_i := \frac{1}{2} + \frac{1}{2\abs{\calH}}\Phi_\calI(x)\mper
    \end{equation*}
\end{fact}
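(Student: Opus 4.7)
The plan is to prove this via a direct computation, expressing the satisfaction indicator of each constraint as an arithmetic formula in the $\pm 1$ encoding and then averaging. The fact is the standard Fourier-analytic reformulation of the $k$-$\XOR$ value, so the proof should just unpack definitions.

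First I would recall what $\mathrm{val}(\calI, x)$ means: an instance $\calI = (\calH, \{b_C\}_{C \in \calH})$ encodes constraints of the form $\prod_{i \in C} x_i = b_C$ (this is $\pm 1$ encoding of $\oplus_{i \in C} x_i = b_C$), and $\mathrm{val}(\calI, x)$ is the fraction of constraints satisfied by $x \in \{\pm 1\}^n$, i.e.
\begin{equation*}
    \mathrm{val}(\calI, x) = \frac{1}{|\calH|} \sum_{C \in \calH} \mathbbm{1}\!\left[\,b_C \prod_{i \in C} x_i = 1\,\right].
\end{equation*}

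Next I would express the indicator in closed form. Since both $b_C$ and $\prod_{i \in C} x_i$ lie in $\{\pm 1\}$, their product $b_C \prod_{i \in C} x_i$ is $+1$ iff the constraint is satisfied and $-1$ otherwise, so
\begin{equation*}
    \mathbbm{1}\!\left[\,b_C \prod_{i \in C} x_i = 1\,\right] = \frac{1 + b_C \prod_{i \in C} x_i}{2}.
\end{equation*}
Substituting into the expression for $\mathrm{val}(\calI, x)$ and splitting the sum gives
\begin{equation*}
    \mathrm{val}(\calI, x) = \frac{1}{2|\calH|} \sum_{C \in \calH} 1 + \frac{1}{2|\calH|} \sum_{C \in \calH} b_C \prod_{i \in C} x_i = \frac{1}{2} + \frac{1}{2|\calH|} \Phi_\calI(x),
\end{equation*}
which is exactly the stated identity.

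There is no real obstacle here, this is purely mechanical; the only thing worth being careful about is the convention that ensures $b_C \prod_{i \in C} x_i = 1$ iff the $\oplus$-constraint is satisfied, which matches the rest of the paper since $b_C P_C$ in the Hamiltonian construction uses the same sign convention via $P_C = Z_C$ (whose $\pm 1$ eigenvalues correspond to $\prod_{i \in C} x_i$ on computational basis states). I would place a one-line remark to that effect so the reader sees why the arithmetic and Fourier pictures agree, and then the computation above is the entire proof.
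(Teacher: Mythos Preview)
Your proof is correct. The paper states this as a fact without proof, treating it as the standard Fourier-analytic rewriting of the $k$-$\XOR$ value; your direct computation via the indicator identity $\mathbbm{1}[b_C\prod_{i\in C}x_i=1]=\tfrac{1+b_C\prod_{i\in C}x_i}{2}$ is exactly the intended (and only reasonable) justification.
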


It then suffices to show $\lambda_{\mathrm{max}}(\sum_{C \in \calH} b_C Z_C) = \max_{x \in \{\pm 1\}^n} \Phi_\calI(x)$.

Let $\ket{\psi} \in (\C^2)^{\otimes n}$ be a pure state. We rewrite $\ket{\psi}$ in the following orthonormal basis of eigenvectors for $X^{\otimes n}$. Given a bitstring $x \in \{\pm 1\}^n$, let $\ket{x}$ be $\bigotimes_{i=1}^n \ket{x_i}$ where
\begin{equation*}
    \ket{+1} = \begin{bmatrix}
        1\\
        0
    \end{bmatrix} \text{ and } \ket{-1} = \begin{bmatrix}
        0\\
        1
    \end{bmatrix}\mper
\end{equation*}
Note, these are typically called $\ket{0}$ and $\ket{1}$, but we choose this non-standard definition for notational convenience as $\ket{+1}$ and $\ket{-1}$ are standardly the $+1$ and $-1$ eigenvectors of $Z$ respectively. The eigenvalue of any term $Z_C$ under $x \in \{\pm 1\}^n$ can be calculated as
\begin{equation*}
    Z_C \ket{x} = \prod_{i \in C} x_i\mper
\end{equation*}
This allows us to see that $\bra{x}\sum_{C \in \calH} b_CZ_C\ket{x} = \Phi_\calI(x)$. Our only worry then is that there is an entangled state $\ket{\psi}$ achieving higher value than any eigenvector. Towards disproving this, we rewrite
\begin{equation*}
    \ket{\psi} = \sum_{x \in \{\pm 1\}^n} \alpha_x\ket{x}\mper
\end{equation*}
Standardly, $\sum_{x \in \{\pm 1\}^n} \alpha_x^2 = 1$ since $\ket{\psi}$ is a pure state. Observe
\begin{align*}
    \bra{\psi}\sum_{C \in \calH} b_C Z_C\ket{\psi} &= \sum_{x, y \in \{\pm 1\}^n} \alpha_x \alpha_y\bra{x}\sum_{C \in \calH} b_C Z_C\ket{y}\\
    &= \sum_{x, y \in \{\pm 1\}^n}\alpha_x \alpha_y \sum_{C \in \calH} b_C \bra{x}Z_C\ket{y}\\
    &= \sum_{x \in \{\pm 1\}^n} \alpha_x^2 \bra{x}\sum_{C \in \calH} b_C Z_C\ket{x}\\
    &= \sum_{x \in \{\pm 1\}^n} \alpha_x^2 \mathrm{val}(\calI, x)\mper
\end{align*}

Thus, the energy under $\ket{\psi}$ is at worst some nice probability distribution of the classical $\XOR$ values, so clearly $\bra{\psi}\sum_{C \in \calH} b_C Z_C\ket{\psi} \leq \max_{x \in \{\pm1\}^n} \Phi_\calI(x)$. Moreover, the maximum value is achieved at a product state $\ket{x}$ by putting all weight on some maximal $x$, which proves \Cref{item:lifting3}. We conclude $\lambda_{\mathrm{max}}(\calH_\calJ) = \mathrm{val}(\calI)$.
\end{proof}

For the second half of the proof, we show how to build a pseudo-expectation $\pE_\rho$ that is at least as good for $\calJ$ as the best pseudo-expectation $\pE_\mu$ for $\calI$.

\begin{proof}[Proof of \Cref{item:lifting2}]
    Let $\pE_\mu$ be a degree-$d$ Boolean pseudo-expectation. We construct a pseudo-expectation $\pE_\rho$ from $\pE_\mu$ as follows. For $Z$-basis $P \in \calP_{\leq d}(n)$, so $P = X_S$ for some $S \subseteq [n]$, $\abs{S} \leq d$, define $x_P = \prod_{i \in \text{supp}(P)} x_i$ and let $\pE_\rho\sbra{P} = \pE_\mu\sbra{x_P}$. For any other operator in $P \in \calP_{\leq d}(n)$, let $\pE_\rho\sbra{P} = 0$. Any degree-$d$ operator can be derived via linearity. Note that we immediately get that the value of $\bH_\calJ$ under $\pE_\rho$ is that of $\calI$ under $\pE_\mu$.
    \begin{align*}
        \pE_\rho\sbra{\bH_\calJ} &= \frac{1}{2}\pE_\rho\sbra{\Id} + \frac{1}{2\abs{\calH}}\sum_{C \in \calH}b_C\pE_\rho\sbra{Z_C}\\
        &= \frac{1}{2} + \frac{1}{2\abs{\calH}}\sum_{C \in \calH}b_C\pE_\mu\sbra{x_C}\\
        &= \pE_\mu\sbra{\frac{1}{2} + \frac{1}{2\abs{\calH}}\sum_{C \in \calH}b_C \prod_{i \in C} x_i}\mper
    \end{align*}
    The last value is exactly the Boolean Sum-of-Squares relaxation for $\mathrm{val}(\calI)$ as seen in \Cref{sec:sos}.

    It remains to show that $\pE_\rho$ is a valid pseudo-expectation. Let $\bH$ be an operator on $(\C^2)^{\otimes n}$, and we show $\pE_\rho\sbra{\bH^\dagger \bH} \geq 0$. We construct the following equivalence class on $\calP_{\leq d/2}(n)$.

    \begin{definition}[$Z$-basis equivalence on $\calP_{\leq d/2}(n)$]
        Let $Q, R \in \calP_{\leq d/2}(n)$ be related $Q \sim R$ if $Q^\dagger R$ is $Z$-basis, i.e. is a complex scalar multiple of $\{\Id, Z\}^{\otimes n}$. Moreover, we can identify the equivalence classes of $\sim$ with the elements of $\{\Id, X\}^{\otimes n}$.
    \end{definition}

    We prove that $\sim$ is indeed an equivalence relation on $\calP_{\leq d/2}(n)$.

    \begin{proof}
        $Q^2 = \Id$ so $Q \sim Q$. If $Q \sim R$ $Q^\dagger R = \alpha T$ for $\alpha \in \C$ and $T \in \{\Id, Z\}^{\otimes n}$ and $R^\dagger Q = \overline{\alpha} T$, so $R \sim Q$. Suppose then $Q \sim R$ and $R \sim T$, then $Q^\dagger R$ is $Z$-basis and $R^\dagger T$ is $Z$-basis, and multiplying $Z$-basis operators $Q^\dagger RR^\dagger T = Q^\dagger T$ remains $Z$-basis and $Q \sim T$.

        To see the bijection with $\{\Id, X\}^{\otimes n}$, take any $Q \in \{\Id, X, Y, Z\}^{\otimes n}$ and take $Z \mapsto \Id$ and $Y \mapsto X$ in $Q$ and normalize the scalar coefficient to be $1$. Call this $\wt{Q}$ and observe $\wt{Q} \in \{\Id, X\}^{\otimes n}$. By construction, $Q^\dagger  \wt{Q}$ is $Z$-basis, so $\wt{Q}$ is in the equivalence class of $Q$. To see that $\wt{Q}$ is unique, suppose that two operators $R, T$ from $\{\Id, X\}^{\otimes n}$ fall in the same equivalence class. Since $R \neq T$, they differ at some index $i \in [n]$, and at this index $R_i^\dagger T_i = X$, meaning that $R^\dagger T$ is not $Z$-basis.
    \end{proof}

    Let $\{\calF_i\}_{i \in [r]}$ be the set of equivalence classes induced by $\sim$ and decompose $\bH$ in the Pauli basis partitioned by $\{\calF_i\}_{i \in [r]}$.
    \begin{align*}
        \pE_\rho\sbra{\bH^\dagger \bH} &= \pE_\rho\sbra{\left(\sum_{i =1}^r \sum_{P \in \calF_i} \langle \bH, P \rangle \cdot P\right)^\dagger\left(\sum_{i =1}^r \sum_{P \in \calF_i} \langle \bH, P \rangle \cdot P\right)}\\
        &= \sum_{i =1}^r \sum_{j = 1}^r \sum_{\substack{P \in \calF_i \\ P' \in \calF_j}} \overline{\langle \bH, P \rangle} \cdot \langle \bH, P' \rangle \cdot \pE_\rho\sbra{P^\dagger P'}\\
        &= \sum_{i =1}^r \sum_{P, P' \in \calF_i} \overline{\langle \bH, P \rangle} \cdot \langle \bH, P' \rangle \cdot \pE_\rho\sbra{P^\dagger P'}\mper
    \end{align*}
    The last line follows from the definition of $\sim$, that non-$Z$-basis operators have value $0$ under $\pE_\rho$. Now let's zoom in on the equivalence class $\calF_1$, which we define to be the one including $\Id^{\otimes n}$. Observe that all elements of $\calF_1$ must be from $\{\Id, Z\}^{\otimes n}$, so $PP' = P''$ and $x_Px_{P'} = x_{P''}$ coincide. As a result
    \begin{align*}
        \sum_{P, P' \in \calF_1} \overline{\langle \bH, P \rangle} \cdot \langle \bH, P' \rangle \cdot \pE_\rho\sbra{P^\dagger P'} &= \sum_{P, P' \in \calF_1} \overline{\langle \bH, P \rangle} \cdot \langle \bH, P' \rangle \cdot \pE_\mu\sbra{x_P x_{P'}}\\
        &= \pE_\mu\sbra{\sum_{P, P' \in \calF_1} \overline{\langle \bH, P \rangle} \cdot \langle \bH, P' \rangle \cdot x_P x_{P'}}\\
        &= \pE_\mu\sbra{\abs{\sum_{P \in \calF_1} \langle \bH, P \rangle \cdot x_P}^2}\\
        &\geq 0\mper
    \end{align*}
    In the last line we appeal to the positivity of $\pE_\mu$. We would like to repeat the same argument for the remaining $\calF_i$, but we do not have that $P \in \calF_i$ is in $\{\Id, Z\}^{\otimes n}$ necessarily, so $x_P$ is undefined. What we do know is that we can fix representative $Q_i \in \calF_i$ and $Q_iP = \alpha_P T_P$ where $\alpha_P \in \{\pm 1, \pm i\}$ and $T_P \in \{\Id, Z\}^{\otimes n}$. This allows us to write:
    \begin{align*}
        \sum_{P, P' \in \calF_i} \overline{\langle \bH, P \rangle} \cdot \langle \bH, P' \rangle \cdot \pE_\rho\sbra{P^\dagger P'} &= \sum_{P, P' \in \calF_1} \overline{\langle \bH, P \rangle} \cdot \langle \bH, P' \rangle \cdot \pE_\rho\sbra{(Q_iP)^\dagger Q_i P'}\\
        &= \sum_{P, P' \in \calF_1} \overline{\alpha_P\langle \bH, P \rangle} \cdot \alpha_{P'}\langle \bH, P' \rangle \cdot \pE_\rho\sbra{T_P^\dagger T_{P'}}\\
        &= \pE_\mu\sbra{\sum_{P, P' \in \calF_1} \overline{\alpha_P\langle \bH, P \rangle} \cdot \alpha_{P'}\langle \bH, P' \rangle \cdot x_{T_P}x_{T_{P'}}}\\
        &= \pE_\mu\sbra{\abs{\sum_{P \in \calF_i} \alpha_P\langle \bH, P\rangle \cdot x_{T_P}}^2}\\
        &\geq 0\mper
    \end{align*}
    The last line appeals to the positivity of $\pE_\mu$.
\end{proof}

\section*{Acknowledgments}

We thank Peter Manohar, Ryan O'Donnell, Chinmay Nirkhe, and David Gosset for helpful discussions.

\bibliographystyle{alpha}
\bibliography{references}

\appendix
\section{Deriving the Hamiltonian Kikuchi Matrix}
\label{sec:appendix}

In this section, we give some intuition on the Hamiltonian Kikuchi matrix. We build up to our construction by relating the matrix to the typical trace moment method analysis for bounding the maximum energy of a Hamiltonian.

Given a Hamiltonian $\bH \in \C^{2^n \times 2^n}$, we write its Pauli decomposition as $\bH= \sum_{P \in \{\Id, X, Y, Z\}^{\otimes n}} \hat{\bH}(P) \cdot P$ and apply the trace moment method, which allows us to bound $\lambdamax(\bH) \leq \tr(\bH^{2r})^{1/2r}$ for some choice $r \in \N$. Computing this trace out gives the following expression
\begin{align*}
    \tr(\bH^{2r}) &= \tr\left(\left(\sum_{P \in \{\Id, X, Y, Z\}^{\otimes n}} \hat{\bH}(P) \cdot P\right)^{2r}\right)\\
    &= \sum_{P_1, \dots, P_{2r} \in \{\Id, X, Y, Z\}^{\otimes n}} \prod_{i=1}^{2r} \hat{\bH}(P_i) \cdot \tr\left(\prod_{i=1}^{2r} P_i\right)\mper
\end{align*}
In general, we can compute $\tr\left(\prod_{i=1}^{2r} P_i\right)$ efficiently using the stabilizer formalism for Pauli operators, so as long as the Hamiltonian has a sparse Pauli spectrum, a criteria satisfied by local Hamiltonians, the number of sequences to enumerate over is $\mathsf{poly}(n)^{2r}$. The only reason this fails to give an algorithm is that we must take $2r \gg n$ for a $2^n$-dimensional operator in order for the trace moment method to succeed, meaning an exponential number of sequences.

To get around this issue, we relate this analysis to performing a random walk on the set of $n$-qubit Pauli operators, generalizing the heuristic level-$n$ Kikuchi matrix seen in \cite{WeinAM19}.

\begin{definition}[Level-$n$ Kikuchi matrix for $k$-local Hamiltonians]
    Let $P \in \{\Id, X, Y, Z\}^{\otimes n}$ be a Pauli operator. We define the following adjacency matrix on $\{\Id, X, Y, Z\}^{\otimes n}$.
    \begin{equation*}
        A_P(Q, R) = \begin{cases}
            1 & Q^\dagger R = P\\
            0 & \text{otherwise}
        \end{cases}
    \end{equation*}
    We further extend the Kikuchi matrix of any operator $\bH$ on $\C^{2^n}$ through linearity, yielding the level-$n$ Kikuchi matrix
    \begin{equation*}
        K_{\bH} = \sum_{P \in \{\Id, X, Y, Z\}^{\otimes n}} \hat{\bH}(P) \cdot A_P\mper
    \end{equation*}
\end{definition}

Our Kikuchi matrix $K_\bH$ is still $2^n$-dimensional, and we attempt to fix this problem by counterintuitively defining an \textit{even bigger} operator.

\begin{proposition}
    \label{prop:spectrum}
    For a given Hamiltonian $\bH \in \C^{2^n \times 2^n}$, consider the matrices $K_\bH$, $K_\bH \otimes \Id_{2^n}$, and $\hat{K}_\bH = \sum_{P \in \{\Id, X, Y, Z\}^{\otimes n}} \hat{\bH}(P) \cdot A_P \otimes P$. The maximum eigenvalue of all three are the same.
\end{proposition}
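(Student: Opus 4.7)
The plan is to prove the stronger statement that all three matrices share the same spectrum, by exhibiting explicit unitary equivalences. The equality $\mathrm{spec}(K_\bH \otimes \Id_{2^n}) = \mathrm{spec}(K_\bH)$ (as sets, modulo repeated multiplicity) is immediate from basic properties of the tensor product with identity. The substantive content is that $\hat{K}_\bH$ and $K_\bH \otimes \Id_{2^n}$ are unitarily conjugate, for which I will construct an explicit ``controlled-Pauli'' intertwiner.

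First I would rewrite $\hat{K}_\bH$ by absorbing the indicator $A_P(Q,R) = \mathbbm{1}[Q^\dagger R = P]$ into the sum over $P$, obtaining
\begin{equation*}
\hat{K}_\bH \;=\; \sum_{Q, R \in \{\Id, X, Y, Z\}^{\otimes n}} \hat{\bH}(Q^\dagger R) \cdot \ket{Q}\bra{R} \otimes Q^\dagger R,
\end{equation*}
and analogously $K_\bH \otimes \Id_{2^n} = \sum_{Q,R} \hat{\bH}(Q^\dagger R) \ket{Q}\bra{R} \otimes \Id_{2^n}$. The only difference between the two is the second tensor factor, which carries an extra $Q^\dagger R$ in $\hat{K}_\bH$. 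To cancel this, define the controlled-Pauli operator
\begin{equation*}
U \;:=\; \sum_{Q \in \{\Id, X, Y, Z\}^{\otimes n}} \ket{Q}\bra{Q} \otimes Q,
\end{equation*}
which is unitary because each Pauli operator $Q$ is unitary (in fact Hermitian and involutory): $UU^\dagger = \sum_Q \ket{Q}\bra{Q} \otimes QQ^\dagger = \Id$.

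Next I would verify $U^\dagger \hat{K}_\bH U = K_\bH \otimes \Id_{2^n}$ by direct computation. Conjugating the generic summand picks up a factor of $Q^\dagger \cdot (Q^\dagger R) \cdot R$ in the second tensor slot, and this collapses to $\Id$ after two applications of involutivity ($Q^\dagger Q^\dagger = Q^2 = \Id$ and $R \cdot R = R^2 = \Id$), leaving exactly $\hat{\bH}(Q^\dagger R) \ket{Q}\bra{R} \otimes \Id_{2^n}$ per term. Since unitary conjugation preserves the full spectrum, $\hat{K}_\bH$ and $K_\bH \otimes \Id_{2^n}$ (and hence $K_\bH$) share identical spectra, in particular matching maximum eigenvalues. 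There is no real obstacle here — the only subtlety is tracking the two applications of $P^2 = \Id$ correctly in the conjugation, after which the proof reduces to a single line of algebra and a sanity check that $U$ is unitary.
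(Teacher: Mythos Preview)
Your proposal is correct and takes essentially the same approach as the paper: the paper's proof is a two-line sketch that invokes exactly the block-diagonal unitary $U$ with $U_{P,P} = P$, which is precisely your controlled-Pauli operator $U = \sum_Q \ket{Q}\bra{Q} \otimes Q$. Your version simply unpacks the conjugation explicitly where the paper leaves it to the reader.
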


\begin{proof}
    $K_\bH \otimes \Id_{2^n}$ has the same eigenvalues as $K_\bH$ up to multiplicity. $\hat{K}_\bH$ is obtained from $K_\bH \otimes \Id_{2^n}$ conjugating by the block-diagonal unitary $U \in \C^{ 2^{2n} \times 2^{2n}}$ given by $U_{P, P} = P$.
\end{proof}

Applying trace moment method to $\hat{K}_\bH$, which now has dimension $2^n \abs{\calP_\ell(n)}$, finally yields something that looks like a random walk on $\{\Id, X, Y, Z\}^{\otimes n}$:
\begin{align*}
    \tr\left(\hat{K}_{\bH}^{2r}\right) &= \sum_{P_1, \dots, P_{2r} \in \{\Id, X, Y, Z\}^{\otimes n}} \prod_{i=1}^{2r} \hat{\bH}(P_i) \cdot \tr\left(\prod_{i=1}^{2r} P_i\right) \cdot \tr\left(\prod_{i=1}^{2r} A_{P_i}\right)\mper
\end{align*}

Observe this last term counts the number of walks on the graph $\{\Id, X, Y, Z\}^{\otimes n}$ with a given Pauli sequence, which is actually invariant since all walks are isomorphic. This means, up to proportionality, the trace moments of our (modified) Kikuchi matrix exactly match that of the underlying Hamiltonian! Unfortunately, our operator is still huge. Our fix is two-fold: (1) \Cref{prop:spectrum} allows us to drop the tensored $\Id_{2^n}$ and just look at the operator norm of the Kikuchi graph and (2) as the key power of the Kikuchi matrix, we may restrict our adjacency matrix from the Pauli group $\{\Id, X, Y, Z\}^{\otimes n}$ to the Pauli slice, $\calP_\ell(n)$, all weight-$\ell$ $n$-qubit operators, yielding a hierarchy of matrices trading dimensionality for approximation quality.

\begin{definition}[Kikuchi matrix for $k$-local Hamiltonians]
    Fix $k \in \N$ even and $k/2 \leq \ell \leq n/2$. Let $P \in \{\Id, X, Y, Z\}^{\otimes n}$ be a weight-$k$ Pauli operator. We define the following adjacency matrix on $\mathcal{P}_\ell(n)$:
    \begin{equation*}
        A_P(Q, R) = \begin{cases}
            1 & Q^\dagger R = P\\
            0 & \text{otherwise}
        \end{cases}
    \end{equation*}
    The level-$\ell$ Kikuchi matrix of $\bH$ is then defined by
    \begin{equation*}
        K_\bH = \left(\sum_{P \in \{\Id, X, Y, Z\}^{\otimes n}} \hat{\bH}(P) \cdot A_P\right) \otimes \Id_{2^n} := A^*_\bH \otimes \Id_{2^n}\mper
    \end{equation*}
    We call the graph $A_\bH = \sum_{P \in \{\Id, X, Y, Z\}^{\otimes n}} |\hat{\bH}(P)|\cdot A_P$ the Kikuchi graph.
\end{definition}

What we are left with is an energy bound for $\bH$ depending entirely on the operator norm of $A_\bH^*$, an operator of subexponential dimension, which is what allows us to compute the spectral norm without incurring an exponential cost.

\end{document}